\def\Xint#1{\mathchoice 
{\XXint\displaystyle\textstyle{#1}}%
{\XXint\textstyle\scriptstyle{#1}}%
{\XXint\scriptstyle\scriptscriptstyle{#1}}%
{\XXint\scriptscriptstyle\scriptscriptstyle{#1}}%
105
\!\int} 
\def\XXint#1#2#3{{\setbox0=\hbox{$#1{#2#3}{\int}$} 
\vcenter{\hbox{$#2#3$}}\kern-.5\wd0}} 
\def\dashint{\Xint-} 
\def\cP{{\mathcal P}}
\def\cT{{\mathcal T}}
\def\cX{{\mathcal X}} 
\def\cL{{\mathcal L}} 
\def\cM{{\mathcal M}} 
\def\cK{{\mathcal K}}
\def\be{\boldsymbol{e}}
\def\a{\alpha}
\def\d{\delta} 
\def\e{\varepsilon} 
\def\l{\lambda} 
\def\r{\varrho}
\def\o{\omega} 
\def\O{\Omega}
\def\p{\partial} 
\def\L{\Lambda} 
\font\tenBbb=msbm10 
\font\sevenBbb=msbm7 
\font\fiveBbb=msbm5 
\newcommand{\R}     {\mathbb{R}} 
\newcommand{\Z}     {\mathbb{Z}} 
\newcommand{\N}     {\mathbb{N}}
\newcommand{\E}     {\mathbb{E}}
\newcommand{\Oe}     {\Omega_{\e}} 
\newcommand{\Pe}     {\Pi_{\e}}
\newcommand{\Pie}     {\Pi_{\e}}
\newcommand{\abs}[1]{{\lvert #1\rvert}}
\newcommand{\norm}[1]{{\lVert #1\rVert}}
\newcommand{\dist}{{\operatorname {dist}}} 
\newcommand{\diam}{{\operatorname {diam}}} 
\newcommand{\supp}{{\operatorname {supp}}}
\def\1{{\mathchoice {1\mskip-4mu\mathrm l}      
{1\mskip-4mu\mathrm l} 
{1\mskip-4.5mu\mathrm l} {1\mskip-5mu\mathrm l}}}
\def\co{{\rm c}}
\def\per{{\rm per}}
\def\Xint#1{\mathchoice
    {\XXint\displaystyle\textstyle{#1}}%
    {\XXint\textstyle\scriptstyle{#1}}%
    {\XXint\scriptstyle\scriptscriptstyle{#1}}%
    {\XXint\scriptscriptstyle\scriptscriptstyle{#1}}%
    \!\int}
\def\XXint#1#2#3{{\setbox0=\hbox{$#1{#2#3}{\int}$}
       \vcenter{\hbox{$#2#3$}}\kern-.5\wd0}}
\def\dashint{\Xint-}
\newtheorem{theorem}{Theorem}[section]
\newtheorem{proposition}[theorem]{Proposition}
\newtheorem{corollary}[theorem]{Corollary}
\newtheorem{lemma}[theorem]{Lemma}
\newtheorem{remark}[theorem]{Remark}
\newenvironment{proofsect}[1] 
{\vskip0.1cm\noindent{\bf #1.}\hskip0.5cm}
\numberwithin{equation}{section}
\begin{document}

\renewcommand{\labelenumi}{{(\roman{enumi})}}

\title{Nonlinear elastic free energies and  gradient Young-Gibbs measures}

\author[Roman Koteck\'{y} and Stephan Luckhaus]{}
\maketitle

\thispagestyle{empty} 
\vspace{0.2cm} 

\centerline
{\sc
Roman Koteck\'{y},\footnote{
  Charles University, Prague, Czech Republic,
  and University of Warwick, United Kingdom,
  {\tt R.Kotecky@warwick.ac.uk}}
\/
Stephan Luckhaus\footnote{
  Institut f\"ur Mathematik, Leipzig University,  D-04009 Leipzig, Germany,\\
  {\tt Stephan.Luckhaus@math.uni-leipzig.de}}
}

\vspace{0.4cm}


\begin{abstract}
We investigate, in a fairly general setting, the limit of large volume equilibrium Gibbs measures for elasticity
type Hamiltonians with clamped boundary conditions.
The existence of a quasiconvex free energy, forming the large deviations rate functional, is shown using a new interpolation lemma for partition functions. The local behaviour of the Gibbs measures can be parametrized by Young measures on the space of gradient Gibbs measures. In view of unboundedness of the state space, the crucial tool here is an exponential tightness estimate that holds for a vast class of potentials and the construction of suitable compact sets of gradient Gibbs measures.
\end{abstract}

\section{Setting and results}
 
The aim of the paper is to derive, in a mathematically rigorous way, macroscopic elasticity with variational principles formulated in terms of nonlinear elastic free energy from
equilibrium statistical mechanics with gradient Gibbs measure on the space of displacements of individual atoms, based on a microscopic Hamiltonian.

We begin with the microscopic description. In general, we consider the space of microscopic configurations $X:\Z^d\to\R^m$.
This  includes the case of elasticity where we actually have $m=d$ with $X(i)$ denoting the vector of displacement of the atom
labelled by $i$ as well as the case of random interface with $m=1$ and $X(i)$ denoting the height of interface above the lattice site $i$.

For  any fixed $Y:\Z^d\to\R^m$ and any finite $\Lambda\subset\Z^d$, the Gibbs measure $\mu_{\Lambda,Y}(d X)$
on $(\R^m)^\Lambda$ under the boundary conditions $Y$ is defined in terms of a Hamiltonian $H$ with a finite range interaction $U$.
Namely, let a finite $A\subset \Z^d$   and a  function $U:( \R^m)^{A}\to\R$  be given. We use  $R_0=\diam A$ to denote the range of potential $U$. We also assume that $U$ is invariant under rigid motions
(i.e.  $U(\bold{R}(\tau_a X))=U(X)$ for any $X\in  ( \R^m )^{A}$ and any $\bold{R}\in SO(m)$, $a\in \R^m$, with
 $\bold{R}(\tau_a X)(i)= \bold{R}(X(i)+a)$). In addition, suitable growth conditions on $U$ will be specified later and, 
 for simplicity (and without loss of generality), we suppose that  $\{0,\pm \be_1, \dots, \pm\be_d\}\subset A$, where $\be_1, \dots, \be_d$ are the unit vectors in coordinate directions.
Using  $X_A$ to denote the restriction of $X$ to $A$ for any $X:\Z^d\to\R^m$ and any $A\subset\Z^d$, we introduce  the Hamiltonian
\begin{equation}
\label{E:H}
H_{\Lambda}(X)= \negthickspace\negthickspace\sum_{j\in\Z^d\colon \tau_j(A)\subset \Lambda } 
\negthickspace\negthickspace U(X_{\tau_j(A)})
\end{equation}
with $\tau_j(A)=A+j=\{i\colon i-j\in A\}$.
The corresponding Gibbs measure on $(\R^m)^\Lambda$ (equipped with the corresponding Borel $\sigma$-algebra) 
is defined as
\begin{equation}
\label{E:mu}
\mu_{\Lambda,Y}(d X)=\frac{\exp\bigl\{- \beta H_{\Lambda}(X)\bigr\}}{Z_{\L,Y}}
\1_{\L,Y}(X)\prod_{i\in \Lambda} d X(i)
\end{equation}
with
\begin{equation}
\label{E:Z}
Z_{\L,Y}=\int_{(\R^m)^\Lambda}\exp\bigl\{- \beta H_{\Lambda}(X)\bigr\}\1_{\L,Y}(X)\prod_{i\in \Lambda} d X(i).
\end{equation}
Here, we introduce  boundary conditions by considering a fixed configuration $Y$ in the boundary layer  
\begin{equation}
S_{R_0}(\L)=\{i\in\L |  \dist(i,\Z^d\setminus \L)\le R_0\}
\end{equation}
and restricting the configurations $X$ to the set
\begin{equation}
\label{E:Nsoft}
\{ X\in (\R^m)^\Lambda:   \abs{X(i)- Y(i)}<1  \text{ for all } i\in S_{R_0}(\L)\}
\end{equation}
with the indicator $\1_{\L,Y}(X)$.
In the following we consider the inverse temperature $\beta$ to be incorporated in the Hamiltonian and skip it from the notation. 

In the standard setting of elasticity theory, we are interested in the macroscopic equilibrium configuration in an open set $\Omega\subset\R^d$ 
under fixed boundary conditions $u\colon \p\Omega\to \R^m$. To link this with the microscopic description, we superimpose a finite lattice $\Oe$
over $\Omega$.
Namely, for any ${\e}\in(0,1)$, let 
\begin{equation}
\Oe=\tfrac1{\e}\bigl({\e}\Z^d\cap\Omega\bigr)\equiv \Z^d\cap\tfrac1{\e}\Omega.
\end{equation}
Naturally,  $\tfrac1{\e}\Omega$ and ${\e}\Z^d$ denotes the rescaling of $\O$ and $\Z^d$ by $\tfrac1{\e}$ and $\e$, respectively.
We will assume certain regularity of the boundary $\p \O$ of the domain $\O$. Namely, using $\p_{\r}\O$
to denote the intersection of the $\r$-neighbourhood of the boundary $\p\O$ with $\O$, 
\begin{equation}
\p_{\r}\O=\{x\in\O:\dist(x,\p\O)<\r\},\  \r>0, 
\end{equation}
we assume that $\O$
is a domain with Lipschitz boundary that allows to check the following condition: 
\smallskip

\noindent
\textbf{(A${}_{\boldsymbol{\partial}}$)}
\textit{There exist constants   $\rho_0$, $\e_0$, and $C_\p$    such that,  for any  $\r\le\rho_0$ and $\e\le\e_0$,  the number of points  in  the strip $ S_{\rho/\e}=(\p_{\r}\O)_\e=\{i\in \Oe\mid i\e\in \p_{\r}\O\} $ 
is bounded as $\abs{S_{\rho/\e}}\le \e^{-d} C_{\p}  \abs{\p\O} \r$.}
\smallskip

Further, for any  $u\in L_{1,\text{loc}}(\R^d,\R^m)$,  let $X_{u,\e}: \Z^d\to\R^m$ be defined by  
\begin{equation}
X_{u,\e}(i)= \frac1{\e} \dashint_{\varepsilon i + Q(\e)} u(y)\, dy
\end{equation}
for any $i\in\Z^d$. Here, $Q(\e)=[-\tfrac{\e}2,\tfrac{\e}2]^d$ and $\dashint$ denotes the mean value.
On the other hand,  let 
\begin{equation}
\Pie: (\R^m)^{\Z^d}_0\to  W^{1,p}(\R^d)
\end{equation}
be a canonical  interpolation $X\to v$ such that $v({\e} i)={\e} X(i)$ for any $i\in\Z^d$.
Here $(\R^m)^{\Z^d}_0$ is the set of functions $X:\Z^d\to\R^m$ with finite support.
To fix the ideas, we can consider a triangulation of $\Z^d$ into simplices with vertices in $\e\Z^d$ and choose $v$ on each simplex as the linear interpolation
of the values ${\e} X(i)$ on the vertices $\e i$. 

Our main task is to study the asymptotic behaviour (with $\e\to 0$) of the  measure
$\mu_{\Oe, X_{u,\e}}(d X )$
in terms of  minimizers $v$ of the functional $\int_{\Omega} W(\nabla v(x)) dx$ with the boundary condition $v=u$ on $\p\Omega$.
It turns out that  the free energy $W(L)$ featuring in the above integral is defined, for any affine function $L\colon \R^d\to \R^m$,  by the limit
\begin{equation}
W(L)=-\lim_{\e\to 0} \e^d|\O|^{-1} \log  Z_{\Oe,L},
\end{equation}
where $Z_{\Oe,L}$ is a shorthand for $Z_{\Oe,X_{L,\e}}$ reflecting the fact that, with an affine function $L$, we actually have $X_{L,\e}(i)=L(i)$
with the condition $\abs{X-L}_{S_{R_0}(\Oe),\infty}\le 1$ reading  $\abs{\Pie(X)-L}_{S_{R_0/\e}(\O),\infty}\le \e$.
For the existence of the limit, see Proposition~\ref{T:WL} below.

Using $\nabla X(i)=( \nabla_{1} X(i), \dots, \nabla_{d} X(i))$ for the discrete gradient, $\nabla_{k} X(i)=X(i+\be_k)-X(i)$,  $k=1,\dots,d$, and defining $\abs{\nabla X(i)}^p=\sum_{k=1}^d \abs{\nabla_k X(i)}^p$,
our main assumptions are the following basic restrictions on the growth of the potential $U$ (part of the lower bound is the boudedness from below that can be stated, without loss of generality, as an assumption of non-negativity):
\smallskip 

\begin{list}{}{\setlength{\topsep}{0cm}\setlength{\leftmargin}{0.75cm}\setlength{\parsep}{0cm}\setlength{\itemsep}{-0.01cm}}
\item[\textbf{(A1)}]
\textit{There exist constants $ p>0$ and  $c\in (0,\infty)$  such that 
\begin{equation*}
U(X_A)\ge c\abs{\nabla X(0)}^p
\end{equation*}
for any    $X\in (\R^m)^{\Z^d}$. }
\smallskip
\item[\textbf{(A2)}]
\textit{There exist  constants  $ r>1$  and  $C\in(1,\infty)$  such that 
\begin{equation*}
\phantom{xxxx}U(s X_A+(1-s) Y_A + Z_A)\le C\bigl(1+U( X_A) + U(Y_A) +\sum_{i\in A} \abs{Z(i)}^{r}\bigr)
\end{equation*}}
\textit{for any $s\in[0,1]$ and any   $X,Y,Z\in (\R^m)^{\Z^d}$.}
\end{list}
\smallskip

Increasing possibly the constant $C$ to incorporate the term $U(0)$ from (A2) applied with $s=1$ and $Y=0$, we have a particular useful implication of (A2) in the form
\begin{equation}
\label{E:A2short}
U(X_A)\le C\bigl(1+ U(Z_A)+\sum_{i\in A} \abs{X(i)-Z(i)}^{r}\bigr).
\end{equation}

\begin{remark}
\label{R:gradiendbound}
In view of the invariance of the function $U$ under rigid motions, it actually depends only on gradients 
$\nabla X(i)$, $i\in A$. With the help of discrete Poincar\'e inequality, 
the condition (A2) implies
\begin{equation}
\label{E:A2grad}
U(X_A)\le C\bigl(\sum_{i\in Q(R_0)} \abs{\nabla X(i)}^r+ 1\Bigr),
\end{equation}
with a suitable constant $C$ (again not necessarily the same as that in (A2)) and $Q(R_0)\subset \Z^d$  denoting a cube  (of side $R_0=\diam A$) containing $A$.
\end{remark}

\subsection{Free energy}

A prerequisite to our main statements is the existence of the free energy as a function of the the affine deformation $L$ and its 
continuity and quasiconvexity.

\begin{proposition}[Existence of the free energy]\hfill
\label{T:WL}

\noindent
Suppose that (A2) holds with $r\ge1$. Then,
for any affine $L:\R^d\to\R^m$, the limit 
\begin{equation}
\label{E:W}
W(L)=-\lim_{\e\to 0} \e^d \abs{\O}^{-1} \log Z_{\Oe,L} 
\end{equation}
exists and does not depend on $\O$.
\end{proposition}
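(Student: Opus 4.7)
The plan is to first establish the limit along cubes with affine boundary condition $L$ via a near-additive decomposition of the partition function, and then extend to a general Lipschitz domain $\Omega$ by tiling it with macroscopic cubes and absorbing the boundary error through assumption (A${}_{\boldsymbol{\partial}}$). Write $a_N=-\log Z_{Q_N,L}$ where $Q_N$ is a lattice cube of side $N$.

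For the cube step I fix integers $k,N$ and partition $Q_{kN}$ into $k^d$ disjoint translates $Q_N^{(j)}$ of $Q_N$. For the easy direction, I restrict the integral defining $Z_{Q_{kN},L}$ to those $X$ whose restriction to each sub-cube already satisfies $|X-L|<1$ on the sub-cube boundary layer $S_{R_0}(Q_N^{(j)})$. The Hamiltonian $H_{Q_{kN}}(X)$ then decomposes into the sum of sub-cube Hamiltonians plus a remainder consisting of the $O(k^d N^{d-1})$ interaction cliques $U(X_{\tau_j(A)})$ that straddle an internal interface. For each such straddling clique, all vertices lie within distance $R_0$ of the interface and hence in the boundary layer of some adjacent sub-cube, so by the invariance of $U$ under rigid motions $U(X_{\tau_j(A)})$ is a function of discrete gradients that differ from those of the affine $L$ by at most $O(1)$ (see Remark~\ref{R:gradiendbound}); the bound \eqref{E:A2grad} thus yields $U\le C_L$ uniformly. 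This gives
\begin{equation*}
Z_{Q_{kN},L}\ \ge\ e^{-C_L k^d N^{d-1}}\prod_j Z_{Q_N^{(j)},L},
\end{equation*}
and hence $(kN)^{-d}a_{kN}\le N^{-d}a_N+O(1/N)$. For the matching inequality I would invoke an interpolation lemma: given any admissible $X$ on $Q_{kN}$, I replace it in a thin shell around each internal interface by a convex combination with $L$, producing a configuration admissible for the finer sub-cube constraint. The hypothesis (A2) is tailored to bound $U$ of such convex combinations through \eqref{E:A2short}, leading to an energy change of order $O(k^d N^{d-1})$ and the reverse estimate. A Fekete-type argument then yields $w(L):=\lim_{N\to\infty} N^{-d}a_N$.

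To pass from cubes to a general Lipschitz $\Omega$, I tile $\Omega_\e$ by lattice cubes of side $\delta/\e$, retaining only those contained in $\Omega\setminus \p_\delta\Omega$. Applying the same near-additive decomposition across this tiling, $-\log Z_{\Omega_\e,L}$ is approximated by $(|\Omega|/\delta^d)\,a_{\delta/\e}$ plus two kinds of error: interaction contributions along internal tile interfaces, of order $\e^{-d}\delta^{-1}$, and contributions from the discarded boundary strip $S_{R_0/\e}(\Omega_\e)\cup(\p_\delta\Omega)_\e$, whose cardinality is at most $C_\p\e^{-d}|\p\Omega|\delta$ by assumption (A${}_{\boldsymbol{\partial}}$). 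Dividing by $\e^{-d}|\Omega|$ and sending first $\e\to 0$ and then $\delta\to 0$, both error terms vanish and one obtains $\lim_\e \e^d|\Omega|^{-1}(-\log Z_{\Omega_\e,L})=w(L)$, independent of $\Omega$.

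The main obstacle is the interpolation step behind the matching upper bound for cubes. One must construct, from a configuration $X$ on $Q_{kN}$ that need not be close to $L$ in the interior, a modified $X'$ satisfying $|X'-L|<1$ on every internal sub-cube boundary layer, so that the Hamiltonian changes by only $O(k^d N^{d-1})$ and the induced change of variables has a controlled Jacobian. The convex-combination structure of (A2) is precisely what makes such a quantitative comparison possible, but optimising the width of the interpolation shells and the interpolation modulus so that the error is subleading in $N$ while preserving the required boundary conditions is the technical heart of the argument, and is presumably where the ``new interpolation lemma'' advertised in the abstract does its work.
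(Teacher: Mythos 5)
Your first inequality, $Z_{Q_{kN},L}\ge e^{-C_Lk^dN^{d-1}}\prod_jZ_{Q_N^{(j)},L}$, obtained by inserting the sub-cube clamping indicators and bounding the straddling cliques via (A2), is exactly the paper's proof: the approximate superadditivity \eqref{E:subaditivity} of $\log Z_{\L,L}$ is the \emph{entire} content of the argument there. The gap in your write-up is that you then declare a matching \emph{upper} bound on $Z_{Q_{kN},L}$ (a reverse factorisation obtained by interpolating a general configuration towards $L$ in thin shells) to be necessary, and you leave it unproven, correctly identifying it as hard. But it is not needed. For a superadditive set function with a surface-order error, Fekete's lemma in its multidimensional form already gives $\lim_N N^{-d}\log Z_{Q_N,L}=\sup_N\bigl(N^{-d}\log Z_{Q_N,L}-C/N\bigr)$ from the one-sided inequality alone; the only other ingredient is the crude volume-order a priori bound $Z_{\L,L}\ge\omega(m)^{\abs{\L}}e^{-B(L)\abs{\L}}$, obtained by clamping \emph{every} site to the unit ball around $L$ and using \eqref{E:A2short} (this is \eqref{E:ZOe>} in Remark~\ref{R:tight}), which controls the remainder regions when $M$ is not a multiple of $N$. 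This is what the remark following the proposition alludes to: only boundedness of $U(X_A)$ for $X$ uniformly close to $L$ is used. The Interpolation Lemma advertised in the abstract is genuinely needed later (for Lemma~\ref{T:tildWL} and the large deviation bounds, where the clamp must be traded for an $L^r$-neighbourhood), but not here.

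The same misconception affects your passage to general $\O$: you need to bound $\log Z_{\Oe,L}$ from \emph{above} by the tile decomposition, and a product upper bound is again the unavailable reverse direction. The standard way around it uses only superadditivity applied to a complementary domain: take a large cube $Q\supset\O$ and split $Q_\e$ into $\Oe$ and $Q_\e\setminus\Oe$, so that
\begin{equation*}
\log Z_{\Oe,L}\le \log Z_{Q_\e,L}-\log Z_{(Q\setminus\O)_\e,L}+B(L)\abs{S(\Oe,Q_\e\setminus\Oe)},
\end{equation*}
and then use the already-established lower bounds ($\liminf$ of $\e^d\log Z$) for the cube $Q$ and for $Q\setminus\O$, together with (A${}_{\boldsymbol{\partial}}$) to control the interface term. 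With that replacement, and with the unnecessary interpolation step deleted, your argument closes and coincides with the paper's.
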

\begin{remark}
Instead of the condition (A2), it is enough here to assume that $U(X_{A})$ is bounded by a fixed constant for any  $X$ such that $\abs{X(i)-L(i)} \le 1$ for all  $i\in A$.
\end{remark}
\begin{proof}
The existence of the limit and its independence on $\O$
follows easily by  the standard methods with the help of an approximative subadditivity 
(of $-\log  Z_{\Lambda,L}$): 
if  $\L\subset\Z^d$ is finite and 
$\L_1$ and $\L_2$ are its disjoint subsets,  $\L_1\cup\L_2= \L$,
then 
\begin{equation}
\label{E:subaditivity}
\log Z_{\L,L}  \ge \log Z_{\L_1,L} + \log Z_{\L_2,L} -B(L) \abs{S(\L_1,\L_2)},
\end{equation}
where $B(L)=C(1+C+ (1+C d \norm{L}^r)R_0^d) $ is  a constant
and 
\begin{equation}
S(\L_1,\L_2)=\Lambda\cap(\Lambda_1)_{R_0}\cap (\Lambda_2)_{R_0},
\end{equation}
where $(\Lambda_k)_{R_0}$, $k=1,2$, is the $R_0$-neighbourhood of $\Lambda_k$.
Indeed,  using the the assumption (A2) (resp. its implication \eqref{E:A2short}
), we have
\begin{equation}
U(X_{\tau_j(A)})\le C\bigl(1+ U(L_{\tau_j(A)})+R_0^d\bigr),
\end{equation}
for all $j$ such that  $\tau_j(A)\subset\L$ and in the same time  $\tau_j(A)\cap \L_1\neq \emptyset$ as well as $\tau_j(A)\cap \L_2\neq \emptyset$ (which implies $\tau_j(A)\subset S(\L_1,\L_2)$). 
Hence, 
\begin{equation}
\label{E:L1capL2}
H_{\L}(X)\le H_{\L_1}(X)+H_{\L_2}(X)+B(L) \abs{S(\L_1,\L_2)}
\end{equation}
for any $X$ satisfying $\1_{\L_1,L}(X)\1_{\L_2,L}(X)=1$.
Thus, restricting first the range of integration in the definition of $Z_{\Lambda,L}$ by inserting  the indicator $\1_{\L_1,L}(X)\1_{\L_2,L}(X)$ (notice that $\1_{\L_1,L}(X)\1_{\L_2,L}(X)\le \1_{\L,L}(X)$) and using then the inequality \eqref{E:L1capL2}, we get the claim.
\end{proof}

\begin{proposition}[Quasiconvexity of the free energy]\hfill
\label{P:QC}

\noindent
Assume that  $U$  satisfies the assumptions (A1) and (A2) with $r, p \ge1$,  $\frac1r>\frac1p-\frac1d$.
The free energy $W(L)$ is continuous with $r$ growth,
\begin{equation}
\abs{W(L)}\le \overline C(1+\norm{L}^r),
\end{equation} 
it is quasiconvex, and, as a consequence,
$\int_{\O} W(\nabla v(x))dx $ is a weakly lower semicontinuous functional on $W^{1,r}(\O)$.
\end{proposition}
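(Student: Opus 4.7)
The plan is to establish the three properties in sequence: first the $r$-growth bound, then quasiconvexity via a lattice cube-decomposition argument, and finally continuity and weak lower semicontinuity as corollaries of standard calculus-of-variations facts about quasiconvex functions with polynomial growth. For the growth bound $\abs{W(L)}\le\overline C(1+\norm{L}^r)$, the upper bound on $W(L)$ (i.e., a lower bound on $Z_{\Oe,L}$) follows from restricting the integration defining $Z_{\Oe,L}$ to configurations with $\abs{X(i)-L(i)}\le 1$ for every $i\in\Oe$; on this set, combining \eqref{E:A2short} with Remark \ref{R:gradiendbound} gives $H_{\Oe}(X)\le C(1+\norm{L}^r)\abs{\Oe}$, and the resulting lower bound $Z_{\Oe,L}\ge 2^{m\abs{\Oe}}\exp(-C(1+\norm{L}^r)\abs{\Oe})$ yields $W(L)\le \overline C(1+\norm{L}^r)$. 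The matching lower bound $W(L)\ge -\overline C$ follows from $U\ge 0$ together with the coercivity in (A1): parametrizing $X$ through its value at one boundary vertex (where $\abs{X-L}\le 1$) and the discrete gradients along a spanning tree of $\Oe$, the Gaussian-type tail $\exp(-c\abs{\nabla X}^p)$ gives $Z_{\Oe,L}\le C_1^{\abs{\Oe}}$.

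For quasiconvexity I aim to show $\abs{\O}W(L)\le\int_\O W(\nabla v(x))\,dx$ whenever $v-L\in W^{1,\infty}_0(\O;\R^m)$. By density it suffices to treat $v=L+\varphi$ with $\varphi$ piecewise affine on a cube partition $\{D_k\}$ of $\O$ of side $\delta$, so $\nabla v|_{D_k}=L_k$ and $v|_{D_k}=L_k+c_k$, arranged so that $v=L$ in a neighbourhood of $\p\O$. The key step is a decoupling lower bound on $Z_{\Oe,L}$: I restrict the integration to configurations $X$ satisfying $\abs{X(i)-X_{v,\e}(i)}<1$ on every internal cube boundary layer $S_{R_0}((D_k)_\e)$ (which on $\p\Oe$ reduces to the original constraint, since $v=L$ there). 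The Hamiltonian then splits as $H_{\Oe}(X)=\sum_k H_{(D_k)_\e}(X)+H_{\mathrm{buf}}(X)$, where $H_{\mathrm{buf}}$ collects the interactions $U(X_{\tau_j(A)})$ whose support $\tau_j(A)$ crosses a cube boundary. On the restricted set, each such term is bounded by $C(1+\max_k\norm{L_k}^r)$ via \eqref{E:A2short} and Remark \ref{R:gradiendbound}, so the total buffer energy is at most $C(1+\max_k\norm{L_k}^r)\cdot O(\delta^{-1}\e^{-(d-1)})$. Inside each cube, the shift $X\mapsto X-c_k/\e$, permitted by the translation invariance of $U$, turns the internal constraint into the clamped $L_k$-boundary condition, and the integrals factorize as $\prod_k Z_{(D_k)_\e,L_k}$. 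Applying $-\e^d\abs{\O}^{-1}\log$ and letting $\e\to 0$, the buffer contribution, which is of order $\e/\delta$, vanishes, and Proposition \ref{T:WL} applied in each cube gives $W(L)\le\abs{\O}^{-1}\sum_k\abs{D_k}W(L_k)=\abs{\O}^{-1}\int_\O W(\nabla v(x))\,dx$.

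Continuity of $W$ is then automatic: a quasiconvex function on $\R^{m\times d}$ with $\abs{W(L)}\le\overline C(1+\norm{L}^r)$ is locally Lipschitz with constant $\overline C'(1+\norm{L_1}^{r-1}+\norm{L_2}^{r-1})$, a standard consequence of separate convexity (see Dacorogna); this continuity also lets us upgrade the piecewise-affine version of quasiconvexity to the general $W^{1,\infty}_0$ test functions by $L^r$-approximation and dominated convergence with majorant $\overline C(1+\norm{\nabla v}^r)$. Weak lower semicontinuity of $v\mapsto\int_\O W(\nabla v(x))\,dx$ on $W^{1,r}(\O)$ is then the Acerbi--Fusco theorem for quasiconvex integrands with $r$-growth. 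I expect the decoupling/gluing step of the quasiconvexity proof to be the main obstacle: the test configurations must simultaneously respect the outer clamped boundary condition and the internal cube-by-cube references, and the buffer-energy estimate must be sharp enough that its contribution to the rescaled log-partition function vanishes in the iterated limit $\e\to 0$ followed by $\delta\to 0$, requiring careful bookkeeping on how interaction terms are assigned to cubes versus the buffer and how the boundary strip of $\O$ is treated.
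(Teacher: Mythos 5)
Your proposal is correct in substance but proves the quasiconvexity by a genuinely different route than the paper. The paper deduces $W(L)\le \abs{\O}^{-1}\int_\O W(\nabla v)\,dx$ for $v\in W^{1,r}_0(\O)+L$ by sandwiching the quantity $F_{\kappa,\e}(v)=-\e^d\abs{\O}^{-1}\log Z_{\Oe}({\mathcal N}_{\Oe,r}(v,\kappa))$: from above by $\abs{\O}^{-1}\int_\O W(\nabla v)\,dx$ via Theorem~\ref{T:LDab}~b), and from below by $W(L)$ via the Interpolation Lemma (which converts the soft $L^r$-neighbourhood of $v$ into the clamped $L$-boundary condition, using $v=L$ on $\p\O$). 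You instead run the classical gluing argument directly on $Z_{\Oe,L}$: restrict to configurations clamped near $X_{v,\e}$ on the internal interfaces, split off the buffer energy, shift cube by cube using rigid-motion invariance, and factorize into $\prod_k Z_{(D_k)_\e,L_k}$, then invoke Proposition~\ref{T:WL} on each piece. This is sound, more elementary, and bypasses the Interpolation Lemma and the $L^r$-neighbourhood machinery entirely; what the paper's route buys is (i) reuse of lemmas needed for the LDP anyway and (ii) the stronger $W^{1,r}$-quasiconvexity (test perturbations in $W^{1,r}_0$, not just $W^{1,\infty}_0$), which feeds directly into lower semicontinuity. Two points in your write-up need repair, both fixable. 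First, a continuous function that is affine on each cell of a \emph{cube} partition is extremely rigid (tangential derivatives must match across faces), so cubes cannot represent general piecewise affine perturbations; you must decompose along a simplicial triangulation $\{T_j\}$ (as the paper does in the lower-bound lemma for Theorem~\ref{T:LDab}), which changes nothing in the buffer and factorization estimates since Proposition~\ref{T:WL} is domain-independent. Second, when passing from piecewise affine to general $W^{1,\infty}_0$ perturbations you should make explicit that there is no circularity: quasiconvexity tested on piecewise affine perturbations already yields rank-one convexity, hence separate convexity, hence (with the $r$-growth bound) local Lipschitz continuity of $W$, and only then do you use continuity plus dominated convergence to pass to the general test function and to invoke Acerbi--Fusco (applied to $W-b\ge 0$, using the lower bound from Lemma~\ref{T:tildWL}~c)). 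Your growth bounds coincide with the paper's Lemma~\ref{T:tildWL}~c) and the spanning-tree estimate of Lemma~\ref{L:tech}.
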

\begin{remark}
\label{R:non-convex}
Notice that, in general, for $m\ge 2$, we should not expect that the free energy is convex. We  provide an explicit class of examples in Section~\ref{S:nonconv}.
\end{remark}

The proof of this and the remaining statements in this section is deferred to the next section as it hinges on the crucial Interpolation Lemma and Exponential Tightness Lemma
to be stated there (and  proven in the Appendix).


\subsection{Large deviations}

Before formulating the theorem whose consequence is the  large deviations principle for the measure  $\mu_{\Oe,X_{u,\e}}(dX)$,
we introduce several restricted partition functions. For any finite $\L\subset\Z^d$  and any set $\mathcal S\subset (\R^m)^{\L}$, we  write
\begin{equation}
Z_{\L}(\mathcal S)= \int_{\mathcal S}\exp\{- H_{\L} (X)\} \prod_{i\in \L}dX(i).
\end{equation}
In particular,  for any  $Y\in (\R^m)^{\Z^d}$, 
\begin{equation}
Z_{\L,Y}= Z_{\L}({\mathcal N}_{\L,R_0,\infty}(Y)),
\end{equation}
where ${\mathcal N}_{\L,R_0,\infty}(Y)$ is the set corresponding to the indicator $\1_{\L,Y}$,
\begin{equation}
\label{E:Nsoft}
{\mathcal N}_{\L,R_0,\infty}(Y)= \{ X:\L \to \R^m |   \abs{X(i)- L(i)}<1  \text{ for all } i\in S_{R_0}(\L)\}.
\end{equation}

Now,  for any $v\in  L^r(\O)$, consider the   neigbourhood
\begin{equation}
\label{E:Nrd}
{\mathcal N}_{\Oe,r}(v,\kappa)= \{ X:\Oe \to \R^m |   \norm{\Pie(X)-v}_{L^{r}(\O)} <\kappa\abs{\O}^{\frac1r+\frac1{d}}\}
\subset (\R^m)^{\Oe}
\end{equation}
with the corresponding partition function 
\begin{equation}
\label{Zr,kappa}
Z_{\Oe}({\mathcal N}_{\Oe,r}(v,\kappa))= \int_{{\mathcal N}_{\Oe,r}(v,\kappa)}\exp\{- H_{\Oe} (X)\} \prod_{i\in\Oe}dX(i).
\end{equation}
Notice that  the volume dependent factor $\abs{\O}^{\frac1r+\frac1{d}}$  is chosen so that the set 
${\mathcal N}_{\Oe,r}(v,\kappa)$ does not change under the rescaling $\e\to\sigma\e$ and $\O\to \sigma \O$ with
 $v(x)\to \sigma v(\frac{x}\sigma)$.
It is  easy to verify the following equivalences of the corresponding norms (uniformly in $\e$), 
\begin{equation}
\label{E:discontineq}
\tfrac12\norm{\Pie(X)-v}^r_{L^{r}(\O)}\le \e^{d+r}\sum_{i\in\Oe}\abs{X(i)-X_{v,\e}(i)}^r \le 2 \norm{\Pie(X)-v}^r_{L^{r}(\O)}
\end{equation}
and 
\begin{equation}
\tfrac12\norm{\nabla(\Pie(X)-v)}^p_{L^{p}(\O)}\le \e^{d}\sum_{i\in\Oe}\abs{\nabla X(i)-\nabla X_{v,\e}(i)}^p \le 2 \norm{\nabla(\Pie(X)-v)}^p_{L^{p}(\O)}.
\end{equation}
Using $\chi_\e$ to denote the characteristic function of $Q(\e)$ and comparing piecewise linear and piecewise constant interpolation, we get
\begin{equation}
\Vert \Pie(X)-\sum_i \e \chi_\e(\cdot -\e i) X(i) \Vert^r_{L^{r}(\O)}\le 2 \e^{d+r}\sum_{i\in\Oe}\abs{\nabla X(i)}^{\a}\abs{X(i)}^{r-\a} 
\end{equation}
for any $\a<1$,
which by the Sobolev estimates entails
\begin{equation}
\Vert \Pie(X)-\sum_i \e \chi_\e(\cdot -\e i) X(i)\Vert^r_{L^{r}(\O)}\le 2 \e^{\a}  \norm{\Pie(X)}_{W^{1,p}}^r,
\end{equation}
where $\a=\min(1,d+r-\frac{dr}p)$.

In view of \eqref{E:discontineq}, the condition
$\norm{\Pie(X)-v}^r_{L^{r}(\O)}<\kappa^r\abs{\O}^{1+\frac{r}{d}}$ is, up to a change of $\kappa$ multiplying it by a fixed factor,  equivalent to 
$\sum_{i\in\Oe}\abs{X(i)-X_{v,\e}(i)}^r<\kappa^r\abs{\Oe}^{1+\frac{r}{d}}$.
This suggests the notation
\begin{equation}
\label{E:NrdL}
\overline{\mathcal N}_{\L,r}(Z,\kappa)= \{ X:\L \to \R^m |   \norm{X-Z}_{\ell^r(\L)}<\kappa\abs{\L}^{\frac1r+\frac1{d}}\}
\end{equation}
for any $\Lambda\subset\Z^d$ and $Z\in (\R^m)^{\Lambda}$.
As observed above, for $\L=\Oe$ and $Z=X_{v,\e}$, the sets ${\mathcal N}_{\Oe,r}(v,\kappa)$
and $\overline{\mathcal N}_{\Oe,r}(Z,\kappa)$ actually differ only by  change  of $\kappa$ up to the factor 2, $\kappa\to 2\kappa$.

\begin{theorem}\hfill
\label{T:LDab}

\noindent
Assume that  $U$  satisfies the assumptions (A1) and (A2) with with $r\ge p >1$,  $\frac1r>\frac1p-\frac1d$ and let $v\in W^{1,p}(\O)$. Further, let 
 \begin{equation}
\label{E:Fkappa}
F_{\kappa,\e}(v)=- \e^d \abs{\O}^{-1} \log Z_{\Oe}({\mathcal N}_{\Oe,r}(v,\kappa)),
\end{equation}
and 
\begin{eqnarray}
&F_{\kappa}^+(v)=\limsup_{\e\to 0} F_{\kappa,\e}(v)\\
&F_{\kappa}^-(v)=\liminf_{\e\to 0} F_{\kappa,\e}(v)
\end{eqnarray}
Then:

\noindent
a) $ \lim_{\kappa\to 0} F_{\kappa}^-(v)\ge \frac1{\abs{\O}}\int_{\O} W(\nabla v(x)) dx$.

\noindent
b) If $v\in W^{1,r}(\O)$ then $ \lim_{\kappa\to 0} F_{\kappa}^+(v)\le \frac1{\abs{\O}}\int_{\O} W(\nabla v(x)) dx$.
\end{theorem}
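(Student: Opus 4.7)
\emph{Plan.} Both bounds follow the $\Gamma$-convergence template: partition $\Omega$ into small cubes $\{Q_k\}$ of side $\delta$ (with a boundary strip of width $\delta$ controlled by~\textbf{(A${}_{\boldsymbol{\partial}}$)}), approximate $v|_{Q_k}$ by an affine function $L_k$ with mean gradient $\dashint_{Q_k}\nabla v$, apply Proposition~\ref{T:WL} cube by cube, and let $\delta\to 0$ so that the Riemann sum $\sum_k|Q_k|W(L_k)$ converges to $\int_\Omega W(\nabla v)\,dx$ via the continuity and $r$-growth of $W$ from Proposition~\ref{P:QC}. Throughout, the Interpolation Lemma is used to equate, up to sub-extensive error, the logarithm of a soft $L^r$-neighbourhood partition function with that of the corresponding clamped one, and the Exponential Tightness Lemma suppresses configurations with atypical large discrete gradients.

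\emph{Lower bound (a).} Because $U\ge 0$, dropping interactions across cube boundaries yields $H_{\Omega_\e}(X)\ge\sum_k H_{Q_k/\e}(X)$ pointwise, so
\begin{equation*}
Z_{\Omega_\e}(\mathcal{N}_{\Omega_\e,r}(v,\kappa))\le\sum_{(\kappa_k)}\prod_k Z_{Q_k/\e}(\overline{\mathcal{N}}_{Q_k/\e,r}(X_{v,\e},\kappa_k))
\end{equation*}
after decomposing the global $L^r$-radius $\kappa$ into local allocations $\kappa_k$ with $\sum_k\kappa_k^r|Q_k|^{1+r/d}\lesssim\kappa^r|\Omega|^{1+r/d}$; the polynomial-in-$\e^{-1}$ number of allocation grids contributes only to the $o(\e^{-d})$ correction. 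The Interpolation Lemma replaces each local soft neighbourhood by the clamped condition $X\approx L_k$ on $S_{R_0}(Q_k/\e)$, and cubes with anomalously large $\kappa_k$ are discarded via the Exponential Tightness Lemma and~\textbf{(A1)}. Proposition~\ref{T:WL} converts each clamped factor into $\exp(-\e^{-d}|Q_k|W(L_k)+o(\e^{-d}))$; passing to $\liminf_{\e\to 0}$, then $\kappa\to 0$, then $\delta\to 0$ yields $\lim_{\kappa\to 0}F_\kappa^-(v)\ge|\Omega|^{-1}\int_\Omega W(\nabla v)\,dx$. When $v\in W^{1,p}\setminus W^{1,r}$ and the right-hand side is infinite, a Lipschitz truncation $v_M\to v$ in $W^{1,p}\cap L^r$ together with monotone convergence $\int W(\nabla v_M)\uparrow\int W(\nabla v)$ delivers the inequality.

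\emph{Upper bound (b).} For $v\in W^{1,r}$, approximate by piecewise affine $v_\delta\to v$ in $W^{1,r}$ with $v_\delta|_{Q_k}=L_k$. Independently draw configurations from the clamped Gibbs measures $\mu_{Q_k/\e,L_k}$ on each cube and concatenate; the $R_0$-wide interface strip between adjacent cubes carries a Hamiltonian cost of at most $O(\e^{-(d-1)}\delta^{-1})=o(\e^{-d})$ by~\textbf{(A2)}. The Interpolation Lemma ensures that such concatenated configurations lie in $\mathcal{N}_{\Omega_\e,r}(v_\delta,\kappa/2)\subset\mathcal{N}_{\Omega_\e,r}(v,\kappa)$ for $\delta,\e$ sufficiently small, giving
\begin{equation*}
Z_{\Omega_\e}(\mathcal{N}_{\Omega_\e,r}(v,\kappa))\ge e^{-o(\e^{-d})}\prod_k Z_{Q_k/\e,L_k}.
\end{equation*}
Proposition~\ref{T:WL}, the continuity and $r$-growth of $W$, and $v_\delta\to v$ in $W^{1,r}$ then yield the conclusion after $\e\to 0$ and $\delta\to 0$. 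The main obstacle in both parts is the Interpolation Lemma itself, comparing soft and clamped partition functions at sub-extensive cost; this is where the Sobolev-type condition $\tfrac{1}{r}>\tfrac{1}{p}-\tfrac{1}{d}$ enters, so that Poincar\'e-type modifications of the boundary can be absorbed by the $r$-growth of~\textbf{(A2)}.
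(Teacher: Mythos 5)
Your architecture (cube decomposition, superadditivity for the lower bound on $F$, gluing of clamped measures for the upper bound, with the Interpolation Lemma mediating between soft and clamped partition functions) is the same as the paper's, but two load-bearing steps are missing or misattributed. First, in part (a) the decisive issue is not the allocation of the global radius $\kappa$ among the cubes but the replacement of the reference function $v$ by its affine approximation $L_k$ inside each local $L^r$-neighbourhood: you must show that, after the rescaling built into the definition of ${\mathcal N}_{\L,r}(\cdot,\cdot)$, the distance $\|v-L_k\|_{L^r(Q_k)}$ is dominated by the local radius on all but a controllably small fraction of cubes. This is exactly what the blow-up Lemma~\ref{L:blowup} and its corollary provide, via the Lebesgue differentiation theorem, the Sobolev embedding $W^{1,p}\hookrightarrow L^r$ on the unit cube (this is the only place the hypothesis $\frac1r>\frac1p-\frac1d$ is used --- the Interpolation Lemma holds for all $r\ge p>0$, as the paper remarks), and an averaging over the shift $z$ of the cube lattice so that good affine approximation and a good Riemann sum for $\int_{\O}(W_\kappa(\nabla v)\wedge M)\,dx$ hold simultaneously. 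Your proposal never addresses this step, and it cannot be absorbed into the Interpolation Lemma or the tightness estimate. The truncation at level $M$ and the uniform lower bound $W_\kappa\ge b$ from Lemma~\ref{T:tildWL}(c) are then what control the exceptional cubes, not exponential tightness.

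Second, in part (b) the displayed inequality $Z_{\Oe}({\mathcal N}_{\Oe,r}(v,\kappa))\ge e^{-o(\e^{-d})}\prod_k Z_{Q_k/\e,L_k}$ is not justified as written: a configuration sampled from the clamped measure on a cube is pinned to $L_k$ only in the boundary strip and is free to leave any $L^r$-neighbourhood in the interior, so concatenation does not by itself place it in ${\mathcal N}_{\Oe,r}(v,\kappa)$. One must insert the local constraints ${\mathcal N}_{T_j,r}(w,\kappa/2)\cap{\mathcal N}_{T_j,R_0,\infty}(w)$ over a triangulation adapted to the piecewise affine approximant $w$ and show that adding the $L^r$-constraint to the clamped partition function costs only a factor $(1-\delta)$ per cell; this follows from the Exponential Tightness Lemma together with the discrete Poincar\'e inequality and (A1) (as in the proof of Lemma~\ref{T:tildWL}(a)), not from the Interpolation Lemma, which provides the comparison in the opposite direction. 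Finally, be careful not to quote Proposition~\ref{P:QC} for the continuity and growth of $W$: in the paper that proposition is itself deduced from Theorem~\ref{T:LDab}, so you should instead use the bounds $b\le W(L)\le B(L)$ and the monotonicity and inclusion properties of $W_\kappa$ established in Lemma~\ref{T:tildWL}.
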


As a consequence, we get the following large deviation behaviour for $\mu_{\Oe, X_{u,\e}}(dX)$.
For convenience, we actually extend the measure $\mu_{\Oe, X_{u,\e}}(dX)$ defined on $(\R^m)^{\Oe}$
to $(\R^m)^{\Z^d}$ by 
defining $\mu_{\e,u}= \mu_{\Oe, X_{u,\e}}\times \Pi_{i\in \Z^d\setminus\Oe}\delta_{X_{u,\e}(i)}$ and adding the assumption  that the function $u$ is supported on a bounded set in $\R^d$.

\begin{theorem}[Large deviation principle]\hfill
\label{T:LD}

\noindent
Assume that  $U$  satisfies the assumptions (A1) and (A2)  with $r\ge p >1$,  $\frac1r>\frac1p-\frac1d$,  and let $u\in W^{1,p}(\O)$. 
If $p=r$ or, more generally,
\begin{equation}
\label{E:Lav}
\int_{\O} W(\nabla v(x))dx= \sup_{\delta>0} \inf\Bigl\{ \int_{\O} W(\nabla \bar v(x))dx\, \big\vert\, \bar v \in W^{1,r}_0(\O)+u, \norm{\bar v-v}_{L^r}<\delta \Bigr\},
\end{equation}
for every $v\in W^{1,p}_0(\O)+u$, then the Gibbs measures $\mu_{\e,u}(dX)$  satisfies the large deviation principle with the rate $\e^{-d}$ and  the rate functional 
\begin{equation}
I(v)=\int_{\O} W(\nabla v(x))dx - \min_{\bar v\in W^{1,r}_0(\O)+u}\,  \int_{\O} W(\nabla \bar v(x))dx.
\end{equation} 
Namely:

\noindent
a) For any $C\subset W^{1,p}_0(\O)+u$ closed in the weak topology, we have
\begin{equation}
\limsup_{\e\to 0} \e^d \log \mu_{\e,u}(\Pie^{-1}(C)) \le - \inf_{v\in C} I(v).
\end{equation}

\noindent
b) For any $O\subset W^{1,p}_0(\O)+u$ open in the weak topology, we have
\begin{equation}
\liminf_{\e\to 0} \e^d \log \mu_{\e,u}(\Pie^{-1}(O)) \ge - \inf_{v\in O} I(v).
\end{equation}
\end{theorem}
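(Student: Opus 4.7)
The plan is to derive the LDP from Theorem~\ref{T:LDab} by the standard scheme of exponential tightness plus matching local upper and lower bounds, with the normalization $Z_{\Oe,X_{u,\e}}$ accounting for the $\min$ subtracted in $I(v)$.

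First, I would establish exponential tightness at speed $\e^{-d}$ in the weak topology of $W^{1,p}_0(\O)+u$. By (A1), $H_\Oe(X)\ge c\sum_i\abs{\nabla X(i)}^p$, which by the equivalence \eqref{E:discontineq} exceeds $c'\e^{-d}\norm{\nabla\Pie(X)}_{L^p(\O)}^p$. Splitting $e^{-H_\Oe}=e^{-H_\Oe/2}e^{-H_\Oe/2}$, the first factor produces a prefactor $e^{-c'\e^{-d}M^p/2}$ on $\{\norm{\nabla\Pie(X)}_{L^p}>M\}$; coupling this with a lower bound on $Z_{\Oe,X_{u,\e}}$ supplied by Theorem~\ref{T:LDab}(b) applied to an approximate minimizer $\bar v\in W^{1,r}_0(\O)+u$, one obtains $\mu_{\e,u}(\norm{\nabla\Pie(X)}_{L^p}>M)\le e^{-c_M\e^{-d}}$ with $c_M\to\infty$. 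Since $p>1$, closed balls in $W^{1,p}$ are weakly compact, which yields the exponential tightness.

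Next, I would identify the normalization:
\[
\lim_{\e\to 0}\bigl(-\e^d\log Z_{\Oe,X_{u,\e}}\bigr)=\min_{\bar v\in W^{1,r}_0(\O)+u}\int_\O W(\nabla\bar v)\,dx,
\]
where the minimum exists by the coercivity from (A1) together with the weak lower semicontinuity from Proposition~\ref{P:QC}. For ``$\le$'' I take $\bar v$ close to a minimizer and bound $Z_{\Oe,X_{u,\e}}$ from below by integration over $X$ that are prescribed close to $X_{\bar v,\e}\approx X_{u,\e}$ on the boundary shell $S_{R_0}(\Oe)$ and lie in an $L^r$-window around $\bar v$; the boundary prescription contributes only a factor $e^{o(\e^{-d})}$ by virtue of (A${}_{\boldsymbol{\partial}}$), while the bulk piece is controlled by Theorem~\ref{T:LDab}(b). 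For ``$\ge$'', exponential tightness localizes to $\{\Pie(X)\in B_M\}$, on which a finite cover of $\mathcal N_{\Oe,R_0,\infty}(X_{u,\e})\cap\{\Pie(X)\in B_M\}$ by $L^r$-balls $\mathcal N_{\Oe,r}(v_i,\kappa)$, together with Theorem~\ref{T:LDab}(a), gives the matching bound; the resulting infimum over $W^{1,p}_0(\O)+u$ equals that over $W^{1,r}_0(\O)+u$ thanks to hypothesis \eqref{E:Lav}.

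For the local bounds, the compact embedding $W^{1,p}(\O)\hookrightarrow L^r(\O)$ (valid because $\tfrac1r>\tfrac1p-\tfrac1d$) ensures that on each $W^{1,p}$-ball the weak topology coincides with the $L^r$-topology. Thus for $v\in O\cap(W^{1,r}_0(\O)+u)$ one chooses $\kappa$ and $M$ with $\mathcal N_{\Oe,r}(v,\kappa)\cap\{\norm{\Pie(X)}_{W^{1,p}}\le M\}\subset\Pie^{-1}(O)$, and Theorem~\ref{T:LDab}(b) delivers $\liminf\e^d\log\mu_{\e,u}(\Pie^{-1}(O))\ge -I(v)$; a general $v\in O\cap(W^{1,p}_0(\O)+u)$ is reached via \eqref{E:Lav}. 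Dually, for weakly closed $C$, tightness reduces the problem to the weakly compact $C\cap B_M$; covering it by finitely many $L^r$-balls centred at $v_i\in C\cap B_M$, applying Theorem~\ref{T:LDab}(a), and invoking the weak lower semicontinuity of $v\mapsto\int_\O W(\nabla v)\,dx$ from Proposition~\ref{P:QC}, one obtains $\limsup\e^d\log\mu_{\e,u}(\Pie^{-1}(C))\le-\inf_{v\in C} I(v)$ in the limit of vanishing radii.

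The principal obstacle is the mismatch between the pointwise boundary shell $\mathcal N_{\Oe,R_0,\infty}(X_{u,\e})$ defining $\mu_{\e,u}$ and the bulk $L^r$-windows $\mathcal N_{\Oe,r}(v,\kappa)$ on which Theorem~\ref{T:LDab} provides sharp estimates. Reconciling these---especially in the ``$\le$'' half of the normalization limit, where the pointwise boundary prescription must be shown to cost only a surface-order correction---is where assumption (A${}_{\boldsymbol{\partial}}$) on the regularity of $\p\O$ and the approximation hypothesis \eqref{E:Lav} enter decisively, and this constitutes the bulk of the technical work beyond Theorem~\ref{T:LDab} itself.
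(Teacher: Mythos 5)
Your outline follows the standard LDP scheme (exponential tightness, identification of the normalization constant, local upper and lower bounds via $L^r$-balls, with the compact embedding $W^{1,p}\hookrightarrow L^r$ translating weak neighbourhoods into $L^r$-balls on bounded sets), and this is indeed the intended route from Theorem~\ref{T:LDab} to Theorem~\ref{T:LD}. However, there is a genuine gap at precisely the point you flag as ``the principal obstacle'' and then do not resolve: the comparison between the clamped partition functions $Z_{\Oe}(\mathcal N_{\Oe,r}(v,\kappa)\cap\mathcal N_{\Oe,R_0,\infty}(Z))$ and the unclamped ones $Z_{\Oe}(\mathcal N_{\Oe,r}(v,\kappa))$ to which Theorem~\ref{T:LDab} applies. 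You assert that ``the boundary prescription contributes only a factor $e^{o(\e^{-d})}$ by virtue of (A${}_{\boldsymbol{\partial}}$)''. This does not follow from (A${}_{\boldsymbol{\partial}}$), which is merely a counting bound on the number of lattice sites in a boundary strip. A priori, forcing $\abs{X(i)-X_{u,\e}(i)}<1$ on the shell $S_{R_0}(\Oe)$ could cost a volume-order factor, since typical configurations in $\mathcal N_{\Oe,r}(v,\kappa)$ need not be anywhere near $X_{u,\e}$ pointwise on the boundary. The paper's tool for this step is the Interpolation Lemma (Lemma~\ref{L:inter}), whose proof is a nontrivial construction: doubling the variables, slicing $\p_{\eta/\e}\Oe$ into $N$ sub-strips, selecting a low-energy slice, interpolating between $X$ and the clamp $Z$ across that slice, and paying a price controlled by $\tfrac{K}{N}+\eta+(\tfrac{N\kappa}{\eta})^r$ plus the energy of $Z$ in the strip (with the exponential tightness supplying the cutoff $K$). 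Your derivation of the normalization limit $-\e^d\log Z_{\Oe,X_{u,\e}}\to\min\int_\O W(\nabla\bar v)\,dx$ (lower bound on $Z$) and your local lower bound for open sets both need exactly this lemma, and it is nowhere invoked or replaced by an argument of comparable strength.

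A secondary, related omission: even granting the interpolation, the correction term involves $\sum_{j\in S_{\eta/\e}}U(Z_{\tau_j(A)})$ for the clamping configuration $Z$, so one must choose $Z$ (built from $\bar v$ or $u$) with surface-order energy in the strip; this is where the $W^{1,r}$ regularity of $\bar v$ and hypothesis \eqref{E:Lav} enter, and where one must also justify that $X_{\bar v,\e}$ is admissible for the clamp $\abs{X(i)-X_{u,\e}(i)}<1$ near $\p\O$ --- neither point is addressed. The remaining parts of your proposal (exponential tightness via (A1) as in Lemma~\ref{L:tight} and Lemma~\ref{L:mueu}(a), the finite covering of $C\cap B_M$ with centres in $C$ combined with Theorem~\ref{T:LDab}(a), and the use of \eqref{E:Lav} to pass from $W^{1,p}$ to $W^{1,r}$ test functions) are sound.
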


\begin{remark}
Similar statements hold for  Neumann or periodic boundary conditions.

The existence of $v\in W^{1,p}_0(\O)+u$ such that there is a strict inequality in \eqref{E:Lav}
is the  so called Lavrentiev gap. 
Thus, our claim about LDP comes under the assumption of the absence of Lavrentiev gap.
\end{remark}

\begin{remark}
Suppose that $F: W^{1,p}(\O) \to \R$ is weakly continuous and has a growth strictly smaller then $p$. Then the rate functional corresponding to  $H+F\circ \Pie$ 
(i.e. the rate functional of the measure  $\frac{e^{F(\Pie(X))}  \mu_\e(d X)}{\int  e^{F(\Pie(Y)) } \mu_\e(d Y)}$) is 
\begin{equation}
\int_{\O} W(\nabla v(x))dx +F(v) - \min_{\bar v\in W^{1,r}_0(\O)+u}\,  \Bigl[\int_{\O} W(\nabla \bar v(x))dx+F(\bar v)\Bigr].
\end{equation}
\end{remark}
\bigskip

We note that large deviation principle has been discussed   before  (often under more restrictive conditions on the potential and  in the scalar case $m=1$). See, for example, \cite{DGI} (the case of strictly convex potentials and $m=1$) and \cite{S} (potentials satisfying analog of our assumptions (A1) and (A2) with $r=p$ and with large deviation formulated in detail only for $m=1$).

While the large deviation principle clarifies the role of minimizers of the functional $I$ for  the description of the ``macroscopic'' asymptotic behaviour of measures $\mu_{\e,u}$,
our final claim, introducing the notion of ``Young-Gibbs'' measures, inspects the asymptotic ``microscopic'' behaviour of $\mu_{\e,u}$.

\subsection{Gradient Young-Gibbs measures}

Here, we will need the Gibbs specification $\mu_{\Lambda}(d X |  Y)$ in the volume $\Lambda\subset \Z^d$ and with boundary condition $Y$.
First, for any $X,Y\in(\R^m)^{\Z^d}$, we introduce  the Hamiltonian
\begin{equation}
\label{E:H}
H_{\Lambda}(X |  Y)= \negthickspace\negthickspace\sum_{j\in\Z^d\colon \tau_j(A)\cap \Lambda \neq\emptyset} 
\negthickspace\negthickspace U((X\vee Y)_{\tau_j(A)}), \text{ where }  X\vee Y=
\begin{cases} X \text{ on } \Lambda\\  Y \text{ otherwise. }\end{cases}
\end{equation}
Then,
\begin{equation}
\label{E:mu}
\mu_{\Lambda}(d X |  Y)=\frac{\exp\bigl\{-  H_{\Lambda}(X |  Y)\bigr\}}{Z_{\L}(Y)}\prod_{i\in \Lambda} d X(i)\prod_{i\in \L^{\text{\rm c}}} \delta_{Y(i)}(dX(i))
\end{equation}
with
\begin{equation}
\label{E:Z}
Z_{\L}(Y)=\int_{(\R^m)^\Lambda}\exp\bigl\{-  H_{\Lambda}(X |  Y)\bigr\}\prod_{i\in \Lambda} d X(i).
\end{equation}
 A gradient Gibbs measure (with potential $U$) is any probability measure $\mu$ on the space $S=(\R^m)^{\Z^d}/\R^m$ 
 such that 
 \begin{equation}
 \mu(f)= \int \mu_{\Lambda}(f |  Y)\mu(dY)
 \end{equation}
for any finite $\L\subset \Z^d$ and any measurable function $f$ on $S$ (i.e. a measurable function on $(\R^m)^{\Z^d}$ invariant under translations in $\R^m$). Notice that for such $f$, the function
$Y\to  \mu_{\Lambda}(f |  Y)$ is also invariant under translations in $\R^m$ and can thus be integrated with the probability measure $\mu$ on $S$.
We use $\mathcal G$ to denote the set of  all gradient  Gibbs measures
and  $\mathcal G^p$ to denote the set of
all $\mu\in \mathcal G$ with finite $p$th moment.
Both are subsets of the set $\cP(S)$ of probability measures on the space $S$.

Further, we introduce two measure spaces  on sets of measures,  both with the corresponding weak topology:
the space $\mathcal P(\mathcal G^p)$  of probability measures on $\mathcal G^p$ and
the space $\mathcal P(W^{1,p}_0(\O)+u)$  of probability measures on $W^{1,p}_0(\O)+u$ (nonnegative functionals on bounded weakly continuous functions  on $W^{1,p}_0(\O)+u$ of mass 1).
A probability measure  $\gamma\in\mathcal P(W^{1,p}_0(\O)+u)$   together with a map $\nu: \O\times (W^{1,p}_0(\O)+u) \to 
\mathcal P(\mathcal G^p)\subset BC(BC(S)^\ast)^\ast$ 
that is weakly measurable with respect to  $\gamma\times\lambda$, where $\lambda$ is the normalized Lebesgue measure on $\O$,  is called a \emph{gradient Young-Gibbs measure}. 

We will show that the measures $\mu_{\e,u}$ converge, in appropriate sense, to 
a gradient Young-Gibbs measure with a slope condition linking the slope $\nabla v(x)$
with the expectation $\int \E_{\mu}(\nabla X(0))\,d \nu_{x,v}(\mu)$.
 
We formulate the convergence in terms of appropriate test functions.
Namely, we consider the space $\mathcal X$ of  \emph{test functions}
\begin{equation}
\varphi: (W^{1,p}_0(\O)+u)\times BC(S)^\ast_+\times
 \O\to \R
\end{equation}
that are weakly continuous and for any $\delta$ fulfill the growth condition
\begin{equation}
\label{E:PhiLip}
\abs{\varphi(v,\mu,x)}\le \eta(x)\exp\{ c_{\varphi} \norm{v}_{1,p}^p)\}  \mu\bigl(\delta\textstyle{\sum}_{i\in\Lambda}\abs{\nabla X(i)}^p +C(\delta)\bigr)
\end{equation}
with a fixed $\Lambda$, $\eta\in C_0(\O)$ and the constant $c_{\varphi}$ depending only on $\varphi$.

\begin{theorem}[Convergence to gradient Young-Gibbs measures]\hfill
\label{T:GYG}

\noindent
Assume that  the potential $U$  satisfies the assumptions (A1) and (A2) with $r\ge p> 1$,  $\frac1r>\frac1p-\frac1d$ and let $u\in W^{1,p}(\O)$.
There exists a sequence $\e_n\to 0$ and a gradient Young-Gibbs measure $\nu$ with $\gamma$ supported,
if the functional $I$ has no Lavrentiev gap,  on minimisers of $I$ in $W^{1,p}_0(\O)+u$ such that   
\begin{multline}
\label{L:GYGdesint}
\lim_{\L\to\Z^d}\lim_{n\to \infty}\mu_{\e_n,u}\Bigl(\textstyle{\int_{\O}} \varphi(\Pie(X), \mu_{\L}(\cdot|\tau_{\lfloor{x}/{\e}\rfloor}(X)), x)dx\Bigr)=\\=
\int_{W^{1,p}_0(\O)+u}\int_{\O} \nu_{x,v}\bigl(\varphi(v,\cdot,x) 
\bigr)d\gamma(v) dx
\end{multline}
for all $\varphi\in \mathcal X$ and
\begin{equation}
\label{L:GYGslope}
\int \E_{\mu}(\nabla X(0))\,d \nu_{x,v}(\mu)=\nabla v(x)
\end{equation}
for almost all $x$ and almost all  $v$ (with respect to $\gamma$).
\end{theorem}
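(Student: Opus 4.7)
The plan is to obtain $(\gamma,\nu)$ as subsequential weak limits of empirical objects built from $\mu_{\e,u}$, and then verify the two claimed properties. For each small $\e$, a configuration $X$ drawn from $\mu_{\e,u}$ produces simultaneously a macroscopic function $\Pie(X)\in W^{1,p}_0(\O)+u$ and, for each $x\in\O$ and each finite window $\L\subset\Z^d$, a local specification $\mu_\L(\,\cdot\mid \tau_{\lfloor x/\e\rfloor}(X))\in \cP(S)$. Averaging in $x$ and then integrating against $\mu_{\e,u}$ defines a probability measure on $(W^{1,p}_0(\O)+u)\times\bigl(\O\to \cP(\mathcal G^p)\bigr)$, the discrete precursor of the object sought.

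First I would establish tightness. By Theorem~\ref{T:LD} the laws of $\Pie(X)$ under $\mu_{\e,u}$ are exponentially tight on $W^{1,p}_0(\O)+u$ and, under the no-Lavrentiev-gap hypothesis, concentrate weakly on minimisers of $I$; this yields the marginal $\gamma$ together with its support property. For the Gibbs component I would invoke the Exponential Tightness Lemma announced in the paper together with the growth bounds (A1), (A2) and \eqref{E:A2grad}, producing compact sets $K_M\subset\cP(\mathcal G^p)$ such that, uniformly in $\e$ and $\L$, the $x$-averaged probability that $\mu_\L(\cdot\mid\tau_{\lfloor x/\e\rfloor}(X))$ escapes $K_M$ vanishes as $M\to\infty$. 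Extracting diagonally along $\e_n\to 0$ and $\L_k\uparrow\Z^d$ produces weak convergence to some limit $(\gamma,\nu_{x,v})$ against the class $\mathcal X$ of admissible $\varphi$ in \eqref{L:GYGdesint}.

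Second, I would identify $\nu_{x,v}$ as a probability measure concentrated on $\mathcal G^p$. The key exact identity is the consistency of finite-volume specifications: for $\L_0\subset\L$ and any bounded local $f$ on $S$,
\[
\mu_\L(f\mid Y)=\mu_\L\bigl(\mu_{\L_0}(f\mid\,\cdot\,)\mid Y\bigr).
\]
Testing \eqref{L:GYGdesint} against $\varphi$ of the form $\eta(x)\bigl(\mu(f)-\mu(\mu_{\L_0}(f\mid\cdot))\bigr)$, passing to the iterated limit $n\to\infty$ and then $\L\to\Z^d$, and using \eqref{E:PhiLip} (where the weight $\exp(c_\varphi\|v\|^p_{1,p})$ is absorbed by the exponential tightness inherited from the LDP, while the local $p$-th moment factor is absorbed by the Exponential Tightness Lemma), yields the DLR equations for $\nu_{x,v}$-a.e.\ $\mu$ on a dense countable family of $(f,\L_0)$. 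This gives $\nu_{x,v}(\mathcal G^p)=1$ and, along the way, convergence \eqref{L:GYGdesint} on the full test class.

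Third, for the slope condition \eqref{L:GYGslope} I would compare two ways of evaluating the gradient at the macroscopic scale. On the one hand, weak convergence $\Pie(X)\rightharpoonup v$ in $W^{1,p}$ along the subsequence gives $\nabla\Pie(X)\rightharpoonup\nabla v$, and the cell-wise value of $\nabla\Pie(X)$ at $x$ is, up to a triangulation error, $\nabla X(\lfloor x/\e\rfloor)$. On the other hand, $\nabla X(\lfloor x/\e\rfloor)$ differs from $\int\nabla X(0)\,d\mu_\L(\cdot\mid\tau_{\lfloor x/\e\rfloor}(X))$ only by a boundary term of the window $\L$, negligible in $L^p$-average as $\L\to\Z^d$. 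Choosing $\varphi(v,\mu,x)=\eta(x)\,\mu(\nabla X(0))$ in \eqref{L:GYGdesint}, which is admissible by \eqref{E:PhiLip}, and letting $\eta$ range over $C_0(\O)$ identifies the two limits and delivers \eqref{L:GYGslope}. The principal obstacle is the second step: promoting the finite-volume consistency to genuine DLR equations for the limit measure requires simultaneously (i) uniform control of local Hamiltonians on the non-compact state space $S=(\R^m)^{\Z^d}/\R^m$, which is precisely what the Exponential Tightness Lemma provides, (ii) the correct order of the iterated limit $\lim_\L\lim_n$ so that finite-$\L$ errors disappear only after the thermodynamic extraction, and (iii) integrability of the unbounded test functionals allowed by \eqref{E:PhiLip}. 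Once this identification is in place the remainder of the theorem is a combination of the LDP, the quasiconvex lower semicontinuity from Proposition~\ref{P:QC}, and the weak-gradient argument above.
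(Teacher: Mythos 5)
Your proposal is correct and follows essentially the same route as the paper: tightness of the macroscopic marginal via the LDP, tightness of the local specifications via the Exponential Tightness Lemma with a diagonal extraction and disintegration into a Young measure, identification of the limit as a Gibbs measure through the consistency of finite-volume specifications tested against bounded functionals, extension to the unbounded class $\mathcal X$ by truncation, and the slope identity via the DLR equation applied to $\eta(x)\E_\mu(\nabla X(0))$ compared with $\nabla \Pie(X)$. The only part you leave implicit that the paper works out in detail is the explicit construction of separable compact subsets of $BC(S)^\ast$ (the sets $\cK_{2,\ell}$) needed to make the abstract disintegration lemma applicable on the non-compact state space.
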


The idea of of gradient Young-Gibbs measures has its precursor in the non-stochastic case, 
where the Young measures and the methods of $\Gamma$-convergence were often used  in the context of nonlinear elasticity  (see \cite{B} for a review). We were also inspired 
 by the procedure of two-scale convergence that is used in homogenization. See, for example, \cite{BLM}. 
Whenever there is a unique Gibbs measure corresponding to the affine mapping $L=\nabla v(x)$, the measure $\nu_{x,v}$
is actually a Dirac measure. For the scalar case, $m=1$, the unicity of Gibbs measure corresponding to a fixed slope  has been proven for a class of strictly convex potentials \cite{FS}. Notice, however, that even in the scalar case, this is not always the case
as phase transitions may occur \cite{BK}.

\section{Proofs}
\label{S:proofs}

\subsection{Exponential Tightness}
Here, we first state two crucial Lemmas (with the proofs deferred to the Appendix) and then, using them, we prove the claims from Section 1.
In the following, we consider $m,d$, and $\Omega$ to be fixed (often without explicitly mentioning the dependence of various constants on these parameters).  

The first technical Lemma assures a needed tightness of finite volume Gibbs measures
when conditioned on the neighbourhood ${\mathcal N}_{\Oe,r}(v,\kappa)$.
Let, for any $K\in (0,\infty)$, the set ${\mathcal M}_K$ 
be  defined by
\begin{equation}
{\mathcal M}_K= \{ X:\Oe \to \R^m |    H_{\Oe}(X )> K \abs{\Omega_{\e}}\}.
\end{equation}

\begin{lemma}[Exponential Tightness]\hfill
\label{L:tight}

\noindent
Assume that  $U$  satisfies the assumption (A1).   There exists  a constant $K_0$ and,  for any
$r$, and $\kappa$,  a constant  $\e_0=\e_0(r,\kappa)$ such that, for any
 $\e\le \e_0$, $K\ge K_0$,    $u\in L_{1,\text{loc}}(\O,\R^m)$, and $v\in L^r(\O)$,  we have
 \begin{equation}
\label{E:intM^b}
Z_{\Oe}({\mathcal M}_K\cap {\mathcal N}_{\Oe,R_0,\infty}(X_{u,\e}))   \le e^{-\tfrac12 K \abs{\Oe}}
D^{\abs{\Oe}} 
\end{equation}
and
 \begin{equation}
\label{E:intM^f2}
Z_{\Oe}({\mathcal M}_K\cap {\mathcal N}_{\Oe,r}(v,\kappa))   \le
e^{-\tfrac12 K \abs{\Oe}} D^{\abs{\Oe}}  
\end{equation}
with  $D=2(\tfrac2{ c})^{\frac{m}p}c(p,m)$,
where  
$c(p,m)=\int_{\R^m} \exp\bigl(-\abs{\xi}^p\bigr)d\xi $.
\end{lemma}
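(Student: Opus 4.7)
The plan is to reduce both assertions to the same combinatorial/measure-theoretic bound and then handle the two constraint sets with two slightly different arguments. First, on $\mathcal M_K$ we have $e^{-H_\Oe(X)}\le e^{-K|\Oe|/2}e^{-H_\Oe(X)/2}$, so it suffices to show that $\int_{\mathcal C}e^{-H_\Oe(X)/2}\prod_i dX(i)\le D^{|\Oe|}$ for the relevant constraint set $\mathcal C$ (namely $\mathcal N_{\Oe,R_0,\infty}(X_{u,\e})$ for \eqref{E:intM^b} and $\mathcal N_{\Oe,r}(v,\kappa)$ for \eqref{E:intM^f2}). Using assumption (A1) I would bound $H_\Oe(X)\ge c\sum_{j\in\Oe^{\circ},\,k}|\nabla_k X(j)|^p$ with $\Oe^{\circ}:=\{j:\tau_j(A)\subset\Oe\}$, and introduce $\phi(\xi):=\exp(-c|\xi|^p/2)$, for which $\int_{\R^m}\phi=(2/c)^{m/p}c(p,m)=D/2$. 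The strategy is then a spanning-tree change of variables $\xi(i)=X(i)-X(p(i))$ (Jacobian one), reducing the integral to a product of factors $D/2$ over tree edges lying in $\Oe^{\circ}$, multiplied by prefactors coming from integration over ``anchor'' sites whose bound is supplied by $\mathcal C$.

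For bound (1), I would root the spanning tree of $\Oe$ at some $i_0$ in the boundary strip $B:=S_{R_0}(\Oe)$. Noting that $\dist(i,\Z^d\setminus\Oe)>R_0\ge\diam A$ for $i\in I:=\Oe\setminus B$ forces $I\subset\Oe^{\circ}$, every tree edge $(p(i),i)$ with $i\in I$ is automatically covered by the gradient lower bound and thus contributes $\int\phi\,d\xi=D/2$. For each $i\in B$ (including the root), the boundary constraint $|X(i)-X_{u,\e}(i)|<1$ gives $\int dX(i)\le 2^m$, which in the $\xi$-variables translates to $\int d\xi(i)\le 2^m$ once $X(p(i))$ is held fixed. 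Multiplying, the integral is at most $2^{m|B|}(D/2)^{|I|}$, and this is $\le D^{|\Oe|}$ once $|I|/|B|>m+\max(0,-\log_2 D)$, so for $\e\le\e_0$ given $|B|=O(\e^{-(d-1)})$ and $|I|=\Theta(\e^{-d})$ via (A$_\partial$).

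For bound (2) the soft $\ell^r$-constraint $\sum_i|X(i)-X_{v,\e}(i)|^r<\kappa^r|\Oe|^{1+r/d}$ (equivalent, up to the factor $2$ of \eqref{E:discontineq}, to the $L^r$ constraint defining $\mathcal N_{\Oe,r}(v,\kappa)$) does not pin individual boundary sites, so I would decouple the corona $B':=\Oe\setminus\Oe^{\circ}$ from $\Oe^{\circ}$ by dropping the positive terms in $\tfrac c2\sum|\nabla_k X(j)|^p$ indexed by ``interface'' bonds with one endpoint in $B'$. The resulting upper bound on $e^{-H_\Oe/2}$ depends only on $X|_{\Oe^{\circ}}$, and the trivial splitting of the $\ell^r$-constraint as a product of the corresponding constraints on $\Oe^{\circ}$ and on $B'$ separates the integral into two independent pieces. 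The $B'$-piece would be bounded by the Lebesgue volume $V_{B'}$ of an $\ell^r$-ball of radius $\kappa|\Oe|^{1/r+1/d}$ in $\R^{m|B'|}$, evaluated by a Dirichlet integral. For the $\Oe^{\circ}$-piece I would apply pigeonhole to obtain $i^{*}\in\Oe^{\circ}$ with $|X(i^{*})-X_{v,\e}(i^{*})|\le 2^{1/r}\kappa|\Oe|^{1/d}$, union-bound over $i^{*}$, root a spanning tree of $\Oe^{\circ}$ at $i^{*}$ (every internal bond being covered since its lower endpoint lies in $\Oe^{\circ}$), and integrate as in bound~(1): each of the $|\Oe^{\circ}|-1$ tree edges contributes $D/2$ and the root integration contributes at most $\omega_m(2^{1/r}\kappa|\Oe|^{1/d})^m$.

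The main obstacle will be to verify that the cumulative prefactor collected for bound~(2) -- the corona volume $V_{B'}$, the union-bound factor $|\Oe^{\circ}|$ and the root-volume $\omega_m(\kappa|\Oe|^{1/d})^m$ -- is absorbed into the exponential slack $D^{|B'|+1}\,2^{|\Oe^{\circ}|-1}$ separating $(D/2)^{|\Oe^{\circ}|-1}$ from $D^{|\Oe|}$. A Stirling expansion of the Dirichlet integral yields $\log V_{B'}=O(|B'|\log|\Oe|)=O(\e^{-(d-1)}\log(1/\e))$, while the union-bound and root-volume factors contribute only $O(\log(1/\e))$ to the logarithm. Since $|\Oe^{\circ}|-1=\Theta(\e^{-d})$, the $2^{|\Oe^{\circ}|-1}$ piece of the slack dominates every prefactor for $\e$ small. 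The threshold $\e_0$ thus depends on $\kappa$ (through the root volume) and on $r$ (through the Dirichlet estimate for $V_{B'}$), while $D$ is independent of $r$ and $\kappa$ as claimed.
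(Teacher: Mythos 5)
Your proposal is correct and follows essentially the same route as the paper: the split $e^{-H}\le e^{-K\abs{\Oe}/2}e^{-H/2}$ on ${\mathcal M}_K$, assumption (A1), and a spanning-tree (telescoping) integration anchored at pinned boundary sites in the $\ell^\infty$ case and at a pigeonholed site (with a union bound) in the $\ell^r$ case --- which is precisely the content of the paper's Technical Lemma~\ref{L:tech}. The only cosmetic differences are that you bound the corona contribution by the volume of an $\ell^r$-ball via a Dirichlet integral where the paper bounds each corona site separately by the ball's radius, and that you absorb the resulting $O(\e^{-(d-1)}\log(1/\e))$ prefactor into the factor $2^{\abs{\Oe}}$ hidden in $D$ rather than into $e^{-K\abs{\Oe}/2}$; both choices work.
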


\begin{remark} 
\label{R:tight}
We refer to the above claim as the exponential tightness since, under additional assumption (A2),  it
 implies that 
  \begin{multline}
\label{E:tight}
\mu_{\e,u}({\mathcal M}_K)=\frac{Z_{\Oe}({\mathcal M}_K\cap  {\mathcal N}_{\Oe,R_0,\infty}(X_{u,\e})}{ Z_{\Oe}({\mathcal N}_{\Oe,R_0,\infty}(X_{u,\e}))}  \le\\
\le e^{-\frac12 K \abs{\Oe}}
(D/\omega(m))^{\abs{\Oe}}  \exp\bigl(C \bigl(H_{\Oe}(X_{u,\e})+(1+R_0^d)\abs{\Oe}\bigr)\bigr)
\end{multline}
with  $\o(m)$ denoting the volume of  a unit ball in $\R^m$. 
Indeed, considering  the   $\ell^{\infty}(\Oe)$-neighbourhood
\begin{equation}
\label{E:Nd}
{\mathcal N}_{\Oe,\infty}(Z)= \{ X:\Oe \to \R^m |   \abs{X(i)- Z(i)}<1 \text{ for all } i\in \Oe\},
\end{equation}
and using (A2) in the form \eqref{E:A2short}, we get
\begin{equation}
\label{E:ZOe>}
Z_{\Oe}({\mathcal N}_{\Oe,R_0,\infty}(X_{u,\e}))\ge Z_{\Oe}({\mathcal N}_{\Oe,\infty}(X_{u,\e})) \ge
\exp\bigl(-C \bigl(H_{\Oe}(X_{u,\e})+(1+R_0^d)\abs{\Oe}\bigr)\bigr)\omega(m)^{\abs{\Oe}}.
\end{equation} 

Similarly, observing that for any $X\in {\mathcal N}_{\Oe,\infty}(Z)$ and a sufficiently small $\e$, we have  $\abs{\Oe}^{1/d}>2/\kappa$ and thus
\begin{equation}
\label{E:L^r-l^r}
\sum_{i\in\Oe}\abs{X(i)-Z(i)}^r\le \abs{\Oe} \le (\tfrac{\kappa}2)^r\abs{\Oe}^{1+\frac{r}{d}},
\end{equation}
 implying ${\mathcal N}_{\Oe,\infty}(Z) \subset  \overline{\mathcal N}_{\Oe,r}(Z,\kappa/2)\subset {\mathcal N}_{\Oe,r}(v,\kappa)$, we get
 \begin{equation}
\label{E:tight}
\frac{Z_{\Oe}({\mathcal M}_K\cap {\mathcal N}_{\Oe,r}(v,\kappa)) }{Z_{\Oe}({\mathcal N}_{\Oe,r}(v,\kappa))}  \le e^{-\frac12 K \abs{\Oe}}
(D/\omega(m))^{\abs{\Oe}}  \exp\bigl(C \bigl(H_{\Oe}(Z)+(1+R_0^d)\abs{\Oe}\bigr)\bigr)
\end{equation}
for any  $Z\in {\mathcal N}_{\Oe,r}(v,\kappa/2)$.
\end{remark}

\subsection{Interpolation}
The crucial step in the proof of the Large Deviation statement is based on the possibility to
approximate with partition functions on cells of a triangulation given in terms of $L^r$-neighbourhoods of linearizations of a minimiser of the rate functional. An important tool that will eventually allow to impose a boundary condition on each cell of the triangulation consists in switching between the corresponding partition
function $Z_{\Oe}({\mathcal N}_{\Oe,r}(v,\kappa))$ and the version $Z_{\Oe}({\mathcal N}_{\Oe,r}(v,2\kappa)\cap{\mathcal N}_{\Oe,R_0,\infty}(Z))$  with  an additional soft clamp $\abs{X(i)- Z(i)}<1$ enforced  in the boundary strip of the width $R_0> \diam(A)$ with $Z\in  {\mathcal N}_{\Oe,r}(v,\kappa)$ arbitrarily chosen.

Fixing parameters $\eta>0$ and  $N\in\N$, we will  slice the strip $\partial_{\eta/\e} \Oe$
into strips of width $\frac{\eta}{\e N}$ that will provide a framework for the interpolation.
Recalling   the notation
 \begin{equation}
F_{\kappa,\e}(v)=- \e^d \abs{\O}^{-1} \log Z_{\Oe}({\mathcal N}_{\Oe,r}(v,\kappa)), 
\end{equation}
we have the following claim.

\begin{lemma}[Interpolation]\hfill
\label{L:inter}  

\noindent
 Suppose that $U$  satisfies the assumptions (A1) and (A2). There exist constants $\kappa_0, b$, and ${\mathcal C}$
 (depending on $c, C$, $R_0$, $|\Omega|$, $|\partial\Omega|$, $p$,  and $m$)  and  a function $\e_0(\kappa, \eta, N)$ 
such that 
\begin{multline}
\label{E:Zfr<Zb}
 Z_{\Oe}({\mathcal N}_{\Oe,r}(v,\kappa))\le  Z_{\Oe}({\mathcal N}_{\Oe,r}(v,2\kappa)\cap{\mathcal N}_{\Oe,R_0,\infty}(Z))\times\\
 \times\exp\bigl\{\mathcal C \bigl((\tfrac{b+F_{\kappa,\e}(v)}{ N}+\eta + (\tfrac{N\kappa}\eta)^r) \e^{-d}+ \sum_{\substack{j\in S_{\eta/\e} \\ \tau_j(A)\subset\Oe}}  U({Z}_{\tau_j(A)})\bigr) \bigr\}
\end{multline}
 for any  $v\in L^{r}(\O)$, any $\kappa\le \kappa_0$, $\eta>0$, $N\in \N$, $Z\in  {\mathcal N}_{\Oe,r}(v,\kappa)$, and any $\e\le  \e_0(\kappa, \eta, N)$.
\end{lemma}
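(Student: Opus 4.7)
The strategy is to map each configuration $X$ in the larger set $\mathcal N_{\Oe,r}(v,\kappa)$ to a nearby $\tilde X$ lying in the smaller set $\mathcal N_{\Oe,r}(v,2\kappa)\cap\mathcal N_{\Oe,R_0,\infty}(Z)$ by gluing $Z$ onto $X$ through a thin transition strip whose position is chosen, for each $X$, by a pigeonhole argument. The three summands inside the exponential in \eqref{E:Zfr<Zb} will then correspond to three separate costs of this gluing.

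First I would partition the annulus $S_{\eta/\e}\setminus S_{R_0}$ into $N$ concentric sub-strips $T_1,\dots,T_N$ of width $\sim \eta/(\e N)$. Once $\e\le\e_0(\kappa,\eta,N)$ this width exceeds $R_0$, so each $\tau_j(A)$ meets at most two adjacent sub-strips, and the inequalities
\[
\sum_{k=1}^N\sum_{j:\tau_j(A)\cap T_k\ne\emptyset}U(X_{\tau_j(A)})\le 2H_{\Oe}(X),\qquad\sum_{k=1}^N\sum_{i\in T_k}|X(i)-Z(i)|^r\le\|X-Z\|_{\ell^r}^r,
\]
combined with $\|X-Z\|_{\ell^r}\le 2\kappa|\Oe|^{1/r+1/d}$ (since both $X,Z\in\mathcal N_{\Oe,r}(v,\kappa)$), yield, by a Markov-type pigeonhole, an index $k=k(X)\in\{1,\dots,N\}$ for which both sums restricted to $T_k$ are at most $4/N$ of their totals. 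I would then pick a discrete cutoff $\phi:\Oe\to[0,1]$ equal to $1$ on $T_{k+1}\cup\dots\cup T_N$ and the interior, equal to $0$ on $S_{R_0}\cup T_1\cup\dots\cup T_{k-1}$, and with $|\nabla\phi|\lesssim N\e/\eta$ across $T_k$, and set $\tilde X:=\phi X+(1-\phi)Z$. Then $\tilde X\equiv Z$ on $S_{R_0}$ and $\|\tilde X-Z\|_{\ell^r}\le\|X-Z\|_{\ell^r}\le 2\kappa|\Oe|^{1/r+1/d}$, so $\tilde X$ indeed lies in the target.

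For the energy, I would split $H_{\Oe}(\tilde X)$ by the location of $\tau_j(A)$. On $\{\phi\equiv 1\}$ the terms equal $U(X_{\tau_j(A)})$; on $\{\phi\equiv 0\}$ they equal $U(Z_{\tau_j(A)})$, supplying the $\sum_{j\in S_{\eta/\e}}U(Z_{\tau_j(A)})$ summand directly. On the transition region I apply (A2) with $s=\phi_j$ at the centre of each $\tau_j(A)$, writing $\tilde X_{\tau_j(A)}=\phi_j X_{\tau_j(A)}+(1-\phi_j)Z_{\tau_j(A)}+\xi_j$ with $|\xi_j(i)|\le R_0|\nabla\phi|\,|X(i)-Z(i)|\lesssim (R_0 N\e/\eta)|X(i)-Z(i)|$. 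Summing over $j\in T_k$ and inserting the pigeonhole bounds, the $U(X_{\tau_j(A)})$ terms contribute at most $CH_{\Oe}(X)/N$ which, after restricting to bounded-energy configurations via the Exponential Tightness Lemma~\ref{L:tight} so that $\e^d H_{\Oe}(X)/|\O|\le b+F_{\kappa,\e}(v)$, produces the $(b+F_{\kappa,\e}(v))/N\cdot\e^{-d}$ summand; the $\sum_i|\xi_j(i)|^r$ terms contribute $C(R_0 N\e/\eta)^r\cdot(2\kappa)^r|\Oe|^{1+r/d}/N=C(N\kappa/\eta)^r\e^{-d}$ after using $|\Oe|\sim|\O|\e^{-d}$.

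Finally, to convert the pointwise bound $e^{-H_{\Oe}(X)}\le e^{\text{corrections}(k)}e^{-H_{\Oe}(\tilde X)}$ into an integral inequality, I would split the source integral as $\sum_{k=1}^N\int_{\{k(X)=k\}}e^{-H_{\Oe}(X)}dX$ and perform the change of variables $X\mapsto\tilde X$ for each $k$. The delicate point, and the main obstacle, is that on the outer region $B=S_{R_0}\cup T_1\cup\dots\cup T_{k-1}$ (of cardinality $|B|\lesssim\eta\e^{-d}$) the map is not invertible---$\tilde X\equiv Z$ there, independent of $X_B$---and the Jacobian $\prod_i\phi_i^m$ degenerates at the outer edge of $T_k$. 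I would absorb both effects into the $\eta\e^{-d}$ summand by retaining a portion of $e^{-H_{\Oe}(X)}$ on $B$ to suppress large $X_B$ via (A1), namely $U\ge c|\nabla X|^p$, whose one-site Gaussian-type tail integral is bounded by a universal constant $D$ (the one appearing in Lemma~\ref{L:tight}), so that the total $X_B$-integration contributes at most a factor $D^{|B|}=e^{\mathcal C\eta\e^{-d}}$. Balancing the three error scales---the pigeonhole waste $F_{\kappa,\e}(v)/N$ favouring large $N$, the cutoff gradient cost $(N\kappa/\eta)^r$ favouring small $N$ and large $\eta$, and the transition volume $\eta\e^{-d}$ favouring small $\eta$---is possible precisely because $\kappa,\eta,N$ remain decoupled in the statement of the lemma and can be optimised separately when the bound is invoked in the sequel.
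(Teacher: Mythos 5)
Your overall strategy coincides with the paper's: slice the boundary layer into $N$ sub-strips of width $\sim\eta/(\e N)$, pigeonhole to find a low-energy strip (the paper partitions $\mathcal N_{\Oe,r}(v,\kappa)$ into sets $\widehat{\mathcal N}^{(k)}$ on which the $k$-th strip carries at most $\frac1{N-1}H_{\Oe}(X)$), interpolate between $X$ and $Z$ with a cutoff, estimate the energy of the interpolant via (A1)/(A2) exactly as you do (producing the three error terms, with the $K/N$ term converted to $(b+F_{\kappa,\e}(v))/N$ via the Exponential Tightness Lemma and the choice $K=\log D+\e^d\log 2/\abs{\O}+F_{\kappa,\e}(v)$). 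Where you diverge is the change of variables, and this is where your argument has a genuine gap. The paper does not map $X\mapsto\tilde X$ directly; it \emph{doubles the variables}, introducing an auxiliary $Y$ ranging over the product of unit balls $\mathcal N_{S,\infty}(Z)$ and an intermediate integral $I_k$ with integrand $\exp(-H_{\Oe}(T_k(X,Y))-c\sum_{j\in S}\abs{\nabla X(j)}^p)$. It then applies, fibrewise, the linear maps $\Phi_i(\xi,\zeta)=(\varTheta_k(i)\xi+(1-\varTheta_k(i))\zeta,\zeta)$ on the outer half of the transition strip (where $\varTheta_k\ge\frac12$) and $\Phi_i(\xi,\zeta)=(\varTheta_k(i)\xi+(1-\varTheta_k(i))\zeta,\xi)$ on the inner half (where $1-\varTheta_k\ge\frac12$), so every Jacobian is bounded by $2^m$ and the freed second coordinate is controlled either by the unit-ball constraint on $Y$ or by the added gradient weight via Lemma~\ref{L:tech}.

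Your direct map collapses all coordinates on $B=S_{R_0}\cup T_1\cup\dots\cup T_{k-1}$ to the single point $Z_B$. You correctly identify that the source integration over $X_B$ must be paid for, and the retained weight $e^{-c\sum_{B}\abs{\nabla X}^p}$ can indeed do this --- though only through a tree-integration rooted at an anchored site (Lemma~\ref{L:tech} requires a pinned vertex or the $\ell^r$-ball constraint; the one-site integral alone is translation invariant and divergent), a point you gloss over. The step you do not address at all is the \emph{other} side: the image of your map is the measure-zero slice $\{\tilde X_B=Z_B\}$, whereas the target $Z_{\Oe}(\mathcal N_{\Oe,r}(v,2\kappa)\cap\mathcal N_{\Oe,R_0,\infty}(Z))$ integrates freely over the configurations on $B\setminus S_{R_0}$ and over a unit ball around $Z$ on $S_{R_0}$. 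To lower-bound the target by the slice you must ``reinflate'' those coordinates, i.e.\ show $e^{-H_{\Oe}(\tilde X)}\le\omega(m)^{-\abs{B}}e^{C\sum_B(1+U(Z_{\tau_j(A)})+R_0^d)}\int_{\mathcal N_{B,\infty}(Z)}e^{-H_{\Oe}(W\vee\tilde X_{B^{\rm c}})}\prod_{i\in B}dW(i)$ by another application of (A2) around $Z$; this is absorbable into your error terms but is a necessary and missing step (it is exactly what the auxiliary variable $Y$ accomplishes automatically in the paper). Separately, your Jacobian $\prod_i\phi(i)^{-m}$ is \emph{not} controlled by the retained energy on $B$ as you suggest; it needs its own estimate ($\prod_{j=1}^{R}(j/R)^{-m}\approx e^{mR}$ per boundary column by Stirling, totalling $e^{Cm\eta\e^{-d}/N}$, which is fine but must be said). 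With these two repairs your route closes; as written, it does not.
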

 \begin{remark}
Applying the lemma, we are only interested in the case when $r\ge p>1$. However, it is actually valid for any $r\ge p>0$.
 \end{remark}



\subsection{Equivalent definitions of the free energy}
Before attending to the proofs of our main Theorems, we will introduce several alternative partition functions yielding the  same free energy  $W(L)$ as that defined in Proposition~\ref{T:WL}.

As suggested above, one possibility is to relax the  boundary condition and to consider, instead, the configurations that are $\ell^r$-close
to $L$ by taking $Z_{\Oe}({\mathcal N}_{\Oe,r}(L,\kappa))$ as defined in \eqref{Zr,kappa}.
The same limit is obtained also by combining  both and considering the partition function
$Z_{\Oe}({\mathcal N}_{\Oe,r}(L,\kappa)\cap  {\mathcal N}_{\Oe,R_0, \infty}(L))$.

\begin{lemma}\hfill
\label{T:tildWL}

\noindent
 Suppose that $U$  satisfies the assumptions (A1) and (A2) with $r\ge p> 1$, $\tfrac1r>\tfrac1p-\tfrac1d$, and
let $L:\R^d\to\R^m$ be affine and $W(L)$ be as defined in \eqref{E:W}.
Then:

\noindent
a) We have 
\begin{equation}
-\lim_{\e\to 0}\e^d \abs{\O}^{-1} \log Z_{\Oe}({\mathcal N}_{\Oe,r}(L,\kappa)\cap  {\mathcal N}_{\Oe,R_0, \infty}(L))=W(L).
\end{equation}
  In particular, the limit  does not depend on $\kappa$ and  $\O$.

\noindent
b) Using 
\begin{equation}
W_{\kappa}(L) =-\limsup_{\e\to 0}\e^d \abs{\O}^{-1} \log Z_{\Oe}({\mathcal N}_{\Oe,r}(L,\kappa)),
\end{equation}
 we have  $\lim_{\kappa\to 0}  W_{\kappa}(L)=  W(L)$.

\noindent
c)
The free energy   ${W}(L)$  satisfies the bounds   $b\le {W}(L)\le B(L)$
with   $b=\frac{m}{p} \log c - \log  c(p,m) $,
where  $c(p,m)=\int_{\R^m} \exp\bigl(-\abs{\xi}^p\bigr)d\xi $, and  $B(L)=  C \bigl[\bigl(1+d\norm{L}^r \bigr) R_0^d+1+C\bigr].$
\end{lemma}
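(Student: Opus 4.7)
The plan is to prove parts (c), (a), (b) in that order: part (c) supplies the elementary bounds on $W(L)$ that underpin part (a), and part (b) follows from part (a) combined with the Interpolation Lemma. For part (c), both bounds are obtained by direct calculation. The upper bound $W(L) \le B(L)$ is obtained by restricting $Z_{\Oe,L}$ to the narrower set $\mathcal N_{\Oe,\infty}(L) = \{X : |X(i) - L(i)| < 1 \text{ for all } i \in \Oe\} \subset \mathcal N_{\Oe,R_0,\infty}(L)$; on this set \eqref{E:A2short} applied pointwise with $Z = L$, combined with translation invariance and a second use of (A2) to bound $U(L_{\tau_j(A)})$ by a $j$-independent constant depending on $\|\nabla L\|$, $R_0$, $C$, yields $H_\Oe(X) \le B(L)|\Oe|$, so $Z_{\Oe,L} \ge \omega(m)^{|\Oe|} e^{-B(L)|\Oe|}$. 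For the lower bound $W(L) \ge b$, (A1) gives $H_\Oe(X) \ge c \sum_{j \in \Oe^{\ast}} |\nabla_{e_1} X(j)|^p$ with $\Oe^{\ast} = \{j : \tau_j(A) \subset \Oe\}$; decomposing $\Oe$ into lines parallel to $e_1$ and integrating each line starting from a boundary-clamped endpoint (volume $\omega(m)$) and marching through the bulk (each interior site contributing $\int_{\R^m} \exp(-c|\xi|^p)\,d\xi = c^{-m/p} c(p,m)$) yields $Z_{\Oe,L} \le \omega(m)^{|S_{R_0}(\Oe)|} (c^{-m/p} c(p,m))^{|\Oe| - |S_{R_0}(\Oe)|}$, the boundary contribution being $o(|\Oe|)$ by Assumption (A${}_{\boldsymbol\partial}$).

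For part (a), the inequality $\widetilde W_\kappa(L) \ge W(L)$ is immediate from the inclusion $\mathcal N_{\Oe,r}(L,\kappa) \cap \mathcal N_{\Oe,R_0,\infty}(L) \subset \mathcal N_{\Oe,R_0,\infty}(L)$. For the reverse, I write $Z_{\Oe,L} = Z_\Oe(\mathcal N_{\Oe,r}(L,\kappa) \cap \mathcal N_{\Oe,R_0,\infty}(L)) + Z_\Oe(\mathcal N_{\Oe,R_0,\infty}(L) \setminus \mathcal N_{\Oe,r}(L,\kappa))$ and argue the second term is exponentially dominated by the lower bound on $Z_{\Oe,L}$ from part (c). Splitting further by energy: Lemma~\ref{L:tight} controls the $\mathcal M_K$-piece by $e^{-K|\Oe|/2} D^{|\Oe|}$; on $\mathcal M_K^c$ the bound $\sum_j |\nabla X(j)|^p \le K|\Oe|/c$, the discrete Sobolev-Poincar\'e inequality (subcritical because $\tfrac1r > \tfrac1p - \tfrac1d$) and the clamp $|X - L| < 1$ on $S_{R_0}(\Oe)$ together force $\|X - L\|_{\ell^r(\Oe)} \le A(K,L)\,|\Oe|^{1/r + 1/d}$ with $A(K,L) = C(K/c + \|\nabla L\|^p)^{1/p}$, so the relevant set is empty once $\kappa > A(K,L)$. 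Choosing $K$ large enough that $e^{-K|\Oe|/2} D^{|\Oe|} \ll \omega(m)^{|\Oe|} e^{-B(L)|\Oe|}$ yields $\widetilde W_\kappa(L) = W(L)$ for $\kappa$ above a threshold $A_0(L)$ depending on $\|\nabla L\|$ and $K_0$. To extend to arbitrary $\kappa > 0$, I partition $\Oe$ into $\sim N^d$ sub-cubes with $N$ chosen so that $\kappa N > A_0(L)$ but $N \ll \e^{-1}$: the global constraint $\|X - L\|_{\ell^r(\Oe)}^r < \kappa^r |\Oe|^{1 + r/d}$ is implied by per-cube constraints $\|X - L\|_{\ell^r(\L_k)} < \kappa N |\L_k|^{1/r + 1/d}$, the per-cube analysis at parameter $\kappa N$ gives the correct free energy $W$, and approximate subadditivity analogous to \eqref{E:subaditivity} assembles these with inter-cube boundary corrections of order $R_0 N |\Oe|^{(d-1)/d} = o(|\Oe|)$.

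For part (b), monotonicity $W_\kappa(L) \le W_{\kappa'}(L)$ for $\kappa \ge \kappa'$ (smaller $\kappa$ gives smaller $Z$) ensures $\lim_{\kappa \to 0} W_\kappa(L) = \sup_\kappa W_\kappa(L)$ exists. The inclusion $\mathcal N_{\Oe,r}(L,\kappa) \cap \mathcal N_{\Oe,R_0,\infty}(L) \subset \mathcal N_{\Oe,r}(L,\kappa)$ combined with part (a) gives $W_\kappa(L) \le \widetilde W_\kappa(L) = W(L)$ for every $\kappa$, whence $\lim_{\kappa \to 0} W_\kappa \le W$. For the matching inequality I apply the Interpolation Lemma \eqref{E:Zfr<Zb} with $v = Z = L$; the boundary sum $\sum_{j \in S_{\eta/\e},\, \tau_j(A) \subset \Oe} U(L_{\tau_j(A)})$ is controlled by Assumption (A${}_{\boldsymbol\partial}$) as $\le C \eta \e^{-d} |\p \O|(1 + \|\nabla L\|^r)$, and after optimizing (choose $\eta \sim (N\kappa)^{r/(r+1)}$ to balance $\eta$ against $(N\kappa/\eta)^r$, then $N \sim \kappa^{-r/(2r+1)}$) the cost $g(\kappa) = O(\kappa^{r/(2r+1)})$ in the exponent $\e^{-d} g(\kappa)$ vanishes as $\kappa \to 0$. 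Together with $Z_\Oe(\mathcal N_{\Oe,r}(L,\kappa) \cap \mathcal N_{\Oe,R_0,\infty}(L)) \le Z_{\Oe,L}$, this produces $W_\kappa(L) \ge W(L) - g(\kappa)/|\O|$ and hence $\lim_{\kappa \to 0} W_\kappa(L) \ge W(L)$. The main obstacle I anticipate lies in part (a) at small $\kappa$, where the direct Sobolev-plus-tightness argument breaks down and the subadditivity/rescaling route requires careful bookkeeping of the inter-cube boundary corrections to keep them $o(|\Oe|)$ while $N$ grows with $1/\kappa$.
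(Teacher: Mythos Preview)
Your proposal is correct and follows the same architecture as the paper: part (c) from direct integration bounds using (A1) and (A2), part (a) from Sobolev--Poincar\'e plus exponential tightness together with a $\kappa$-reduction step, and part (b) from the Interpolation Lemma with $v=Z=L$.

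The one structural difference worth noting is in the $\kappa$-reduction step of part (a). The paper first establishes existence and $\Omega$-independence of $\widetilde W_\kappa$ by an approximate subadditivity of $-\log\widetilde Z_{\Lambda,\kappa}$ analogous to \eqref{E:subaditivity}, and then proves $\kappa$-independence by a \emph{doubling} argument: assembling $2^d$ translates of $\Oe$ into $\overline\O_\e$ and observing $\cap_k\mathcal N_{\tau_{i_k}(\Oe),r}(L,\kappa)\subset\mathcal N_{\overline\O_\e,r}(L,\kappa/2)$, which yields $\widetilde W_{\kappa/2}\le\widetilde W_\kappa$; only after that does it invoke Poincar\'e and tightness (for some, not every, $\kappa$) to identify $\widetilde W$ with $W$. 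You run the argument in the opposite direction: first pin $\widetilde W_\kappa=W$ for $\kappa$ above a threshold $A_0(L)$, then \emph{partition} $\Oe$ into $N^d$ sub-cubes so that the per-cube parameter becomes $\kappa N>A_0(L)$. These are dual uses of the same scaling identity and the same subadditivity; your ordering has the mild advantage that existence of the limit need not be argued separately (it is inherited from $W$), while the paper's doubling avoids tracking $N^d$ inter-cube boundary layers. For part (c), your line-by-line integration is the one-dimensional specialisation of the spanning-tree bound the paper records as Lemma~\ref{L:tech}(a); either gives the same constant $b=\tfrac{m}{p}\log c-\log c(p,m)$.
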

\begin{proof}

\noindent
a)
Using the shorthand
 $\widetilde Z_{\Lambda,\kappa}(L)=Z_{\Lambda}({\mathcal N}_{\L,r}(L,\kappa)\cap  {\mathcal N}_{\L,R_0, \infty}(L))$,
the existence of the limit and its independence on $\O$
follows easily by an obvious monotonicity in $\kappa$ and by
 standard methods with the help of approximative subadditivity (of $-\log \widetilde Z_{\Lambda,\kappa}(L)$) similar to \eqref{E:subaditivity},
\begin{equation}
\label{E:tildesubaditivity}
\log \tilde Z_{\L,\kappa} (L) \ge \log \tilde Z_{\L_1,\kappa}( L)+ \log \tilde Z_{\L_2,\kappa}( L)-B(L) |S(\L_1,\L_2)|.
\end{equation}
In addition to inserting the indicator $\1_{\L_1,L}(X)\1_{\L_2,L}(X)$,
 we also observe  the following inclusion,  ${\mathcal N}_{\L_1,r}(L,\kappa)\cap{\mathcal N}_{\L_2,r}(L,\kappa)\subset{\mathcal N}_{\L,r}(L,\kappa)$. Here, by  ${\mathcal N}_{\L_1,r}(L,\kappa)\cap{\mathcal N}_{\L_2,r}(L,\kappa)$ we mean a shorthand for the set
\begin{equation}
\{X\in (\R^m)^{\L}: X_{\L_1}\in
{\mathcal N}_{\L_1,r}(L,\kappa) \text{ and } X_{\L_2}\in
{\mathcal N}_{\L_2,r}(L,\kappa)\}.
\end{equation}
Indeed, for any $X\in {\mathcal N}_{\L_1,r}(L,\kappa)\cap{\mathcal N}_{\L_2,r}(L,\kappa)$, 
we have
\begin{equation}
\label{E:L12}
\sum_{i\in\L}\abs{X(i)-L(i)}^r\le  \kappa^r(\abs{\L_1}^{1+\frac{r}{d}}+\abs{\L_2}^{1+\frac{r}{d}})\le \kappa^r  \abs{\L}^{1+\frac{r}{d}}
\end{equation}
since, using $\xi$ to denote $\xi=  \frac{\abs{\L_1}}{\abs{\L}}$ with $1-\xi=  \frac{\abs{\L_2}}{\abs{\L}}$, we have $\xi^{1+\frac{r}{d}}+(1-\xi)^{1+\frac{r}{d}}\le 1$. 
 
To prove that the limit, denoted momentarily as 
\begin{equation}
\label{E:tilW}
\widetilde W_\kappa(L) =-\lim_{\e\to 0}\e^d \abs{\O}^{-1} \log Z_{\Oe}({\mathcal N}_{\Oe,r}(L,\kappa)\cap  {\mathcal N}_{\Oe,R_0, \infty}(L)),
\end{equation}
actually does not depend on $\kappa$, we first use the independence on $\O$ and 
consider the limit above with a cube $\O$. Notice that the cube $\overline{\O}_{\e}$ obtained as the cube $\O_{\e/2}$ rescaled by the factor 2 consists of a disjoint union of $2^d$ shifts of copies of the cube $\Oe$,
$\overline{\O}_{\e}=\cup_{k=1,\dots, 2^d} \tau_{i_k} (\Oe)$,  $\abs{\overline{\O}_{\e}}=2^d   \abs{\Oe}$.
 We have    $  \cap_{k=1,\dots, 2^d} {\mathcal N}_{ \tau_{i_k} (\Oe),r}(L,\kappa)\subset {\mathcal N}_{\overline{\O}_{\e},r}(L,\kappa/2)$.
 Indeed,  for any $X\in \cap_{k=1,\dots, 2^d} {\mathcal N}_{ \tau_{i_k} (\Oe),r}(L,\kappa)$, similarly as in \eqref{E:L12}, we have
 \begin{equation}
\sum_{i\in\overline{\O}_{\e}}\abs{X(i)-L(i)}^r\le  2^d \kappa^r   \abs{\Oe}^{1+\frac{r}{d}}=  2^d {\kappa}^r(2^{-d}\abs{\overline{\O}_{\e}})^{1+\frac{r}{d}}=(\tfrac{\kappa}2)^r \abs{\overline{\O}_{\e}}^{1+\frac{r}{d}}.
\end{equation}
As a result,
 \begin{equation}
\log  \tilde Z_{\overline{\O}_{\e},\kappa/2}(L)= \log \tilde Z_{\O_{\e/2},\kappa/2}(L) \ge   2^d   \log \tilde Z_{\Oe,\kappa}-B(L) 2^d \abs{\p\O} \e^{-d+1} R_0.
\end{equation}
Multiplying by $-(\e/2)^d \abs{\O}^{-1}$ and taking the limit $\e\to 0$, we get $ \widetilde W_{\kappa/2}(L)\le \widetilde W_{\kappa}(L)$.
On the other hand, $ \widetilde W_{\kappa/2}(L)\ge \widetilde W_{\kappa}(L)$ since $ \widetilde W_{\kappa}(L)$ is clearly decreasing in $\kappa$.

Combining discrete Poincar\'e inequality with the assumption (A1), we see that for any fixed $K$ and $\kappa$,
we have  ${\mathcal N}_{\Oe,r}(L,\kappa)^{\text{c}}\subset \mathcal M(K)$  for sufficiently small $\e$.
Then, by exponential tightness, for any fixed $\delta$ and sufficiently small $\e$,
\begin{equation}
Z_{\Oe}({\mathcal N}_{\Oe,r}(L,\kappa)\cap  {\mathcal N}_{\Oe,R_0, \infty}(L))\ge (1-\delta)Z_{\Oe}(  {\mathcal N}_{\Oe,R_0, \infty}(L)),
\end{equation}
implying that the limiting value $\widetilde W_\kappa=\widetilde W$ satisfies, for any $\delta$, the inequalities
\begin{multline}
W(L)\le \widetilde W(L) =-\lim_{\e\to 0}\e^d \abs{\O}^{-1}\log Z_{\Oe}({\mathcal N}_{\Oe,r}(L,\kappa)\cap  {\mathcal N}_{\Oe,R_0, \infty}(L))\le\\
\le -\lim_{\e\to 0}\e^d \abs{\O}^{-1}\log [(1-\delta)Z_{\Oe}(  {\mathcal N}_{\Oe,R_0, \infty}(L))]=W(L).
\end{multline}

 \noindent 
 b) Choosing $v=L$ and $Z=L$ ($Z(i)=L(i)$ for each $i\in\Oe$) in the interpolation lemma,  we get
 \begin{equation}
\widetilde W(L)\ge  W_{\kappa}(L)\ge \widetilde W(L)-
 \mathcal C \bigl(\tfrac{b+W_{\kappa}(L)}N+\eta + (\tfrac{N\kappa}\eta)^r + \eta \norm{L}^r C_{\p} \abs{\p \O}\bigr)
 \end{equation}
yielding 
 \begin{equation}
\widetilde W(L)\ge  \lim_{\kappa\to 0} W_{\kappa}(L)\ge \widetilde W(L)-
 \mathcal C \bigl(\tfrac{b+ \lim_{\kappa\to 0} W_{\kappa}(L)}N+\eta(1+ \norm{L}^r C_{\p} \abs{\p \O})  \bigr)
 \end{equation}
 for arbitrarily small $\eta$ and arbitrarily large $N$.

 \noindent
 c)
The lower bound follows from the inequality 
\begin{equation}
\widetilde Z_{\Oe,\kappa}(L)\le \o(m)  \bigl(c^{-m/p} c(p,m)\bigr)^{\abs{\Oe}},
\end{equation}
 obtained with help of (A1) and the bound from technical Lemma~\ref{L:tech} a) proven in Appendix.
For the upper bound, we just take into account that ${\mathcal N}_{\Oe,\infty}(Z) \subset {\mathcal N}_{\Oe,r}(v,\kappa)$
for sufficiently small $\e$ (cf.  Remark~\ref{R:tight}), to get 
\begin{equation}
\widetilde Z_{\Oe,\kappa}(L)\ge  
 \omega(m)^{\abs{\Oe}}\exp\bigl(-C \bigl(C(1+d R_0^d\norm{L}^r)+(1+R_0^d)\bigr)\abs{\Oe}\bigr)
\end{equation}
similarly as in \eqref{E:ZOe>} with the bound $U(L)\le C(1+d R_0^d \norm{L}^r)$ resulting  from (A2) (in the form from Remark~\ref{R:gradiendbound}). 
\end{proof}

Finally, we can enforce a version of approximate periodic boundary conditions yielding again the same free energy $W(L)$. Namely, 
consider the sets 
\begin{equation}
{\mathcal N}^{\per,\e}(L)=\{X:\Z^d\to \R^m\colon \abs{X(i+\tfrac{\bold{e}_j}{\e})-X(i)-\tfrac{L(\bold{e}_j)}{\e}}\le 2\  \forall i\in \Z^d, j=1,\dots,d \},
\end{equation}
and
\begin{equation}
{\mathcal N}^{\per,\e}_{r}(L,\kappa)=\{X\in {\mathcal N}^{\per,\e}(L)\colon  \norm{\Pe(X)-L}_{L^{r}([0,1]^d)} \le\kappa\},
\end{equation}
and define
\begin{equation}
Z_{[0,1]^d_{\e,R_0}}({\mathcal N}^{\per,\e}(L))= \int_{{\mathcal N}^{\per,\e}(L)}\exp\{-H_{[0,1]^d_{\e,R_0}} (X)\} \prod_{i\in [0,1]^d_{\e,R_0}}dX(i)
\end{equation}
and, similarly, also
$Z_{[0,1]^d_{\e,R_0}}({\mathcal N}^{\per,\e}_{r}(L,\kappa))$.
Here, we use $[0,1]^d_{\e,R_0}$ to denote the set 
\begin{equation}
\{i\in \Z^d\colon i_j\in[-R_0, \e^{-1} +R_0], j=1,\dots,d\}.
\end{equation}
Observing that
\begin{multline}
Z_{[0,1]^d_{\e,R_0}}({\mathcal N}_{[0,1]^d_{\e,R_0},r}(L,\kappa)\cap  {\mathcal N}_{[0,1]^d_{\e,R_0},R_0, \infty}(L))\le
Z_{[0,1]^d_{\e,R_0}}({\mathcal N}^{\per,\e}_{r}(L,\kappa))\le\\
\le Z_{[0,1]^d_{\e,R_0}}({\mathcal N}_{[0,1]^d_{\e,R_0},r}(L,\kappa))
\end{multline}
and applying the preceding lemma, we get
\begin{equation}
W(L)=-\lim_{\kappa\to 0}\lim_{\e\to 0} \e^d  \log Z^{\per}_{[0,1]^d_{\e,R_0}}({\mathcal N}^{\per,\e}_{r}(L,\kappa)).
\end{equation}
Similarly as in Lemma~\ref{T:tildWL} (b), we obtain the same limit also with $Z_{[0,1]^d_{\e,R_0}}({\mathcal N}^{\per,\e}(L))$:


\begin{lemma}\hfill
\label{T:WLalt}

\noindent
Suppose that (A1) and (A2) hold with $r\ge p>1$ and $\frac1r> \frac1p-\frac1d$. 
Then the free energy $W(L)$ from Proposition~\ref{T:WL} equals 
\begin{equation}
W(L)=-\lim_{\kappa\to 0}\lim_{\e\to 0} \e^d  \log Z^{\per}_{[0,1]^d_{\e,R_0}}({\mathcal N}^{\per,\e}_{r}(L,\kappa))=-\lim_{\e\to 0} \e^d  \log Z^{\per}_{[0,1]^d_{\e,R_0}}({\mathcal N}^{\per,\e}(L)).
\end{equation}
\end{lemma}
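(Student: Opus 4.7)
The first equality is obtained by sandwiching. Taking $-\e^d\abs{\O}^{-1}\log$ of the chain of inclusions displayed immediately above the lemma statement (with $\O=[0,1]^d$) and passing $\e\to 0$, Lemma~\ref{T:tildWL}(a) collapses the lower envelope to $\widetilde W(L)=W(L)$ independently of $\kappa$, while the upper envelope converges by definition to $W_\kappa(L)$. Letting then $\kappa\to 0$ and invoking Lemma~\ref{T:tildWL}(b) pins the intermediate quantity at $W(L)$.

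The second equality is handled in direct analogy with Lemma~\ref{T:tildWL}(b), the role of ``removing the $L^\infty$ soft clamp'' now being played by ``removing the $L^r$ bound''. The inequality $-\lim_\e\e^d\log Z^{\per}_{[0,1]^d_{\e,R_0}}({\mathcal N}^{\per,\e}(L))\le W(L)$ is immediate from ${\mathcal N}^{\per,\e}_r(L,\kappa)\subseteq{\mathcal N}^{\per,\e}(L)$ combined with the first equality, sending $\kappa\to 0$.

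For the reverse inequality, I would first use that the rigid-motion invariance of $U$ renders $H$ — and the set ${\mathcal N}^{\per,\e}(L)$ — invariant under the constant shift $X\mapsto X+a$, and fix this degeneracy by passing to the $\R^m$-quotient (equivalently, imposing $\overline X=\overline L$); the resulting factor is subexponential and does not alter the rate. On the zero-mean sector, the discrete Poincar\'e--Sobolev inequality on the torus, valid under the standing assumption $\tfrac 1r>\tfrac 1p-\tfrac 1d$, combined with (A1) yields
\[
\norm{\Pie(X)-L}_{L^r([0,1]^d)}\le C'\bigl(H(X)/c\bigr)^{1/p},
\]
so any $X$ with $H(X)\le K\abs{\Oe}$ lies in ${\mathcal N}^{\per,\e}_r(L,C_K)$ with $C_K=C'(K/c)^{1/p}$ independent of $\e$. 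The complementary Hamiltonian level ${\mathcal M}_K$ is handled by the exponential tightness Lemma~\ref{L:tight}, whose proof extends verbatim to the periodic cube, producing $Z^{\per}({\mathcal M}_K)\le e^{-K\abs{\Oe}/2}D^{\abs{\Oe}}$. Combining,
\[
Z^{\per}({\mathcal N}^{\per,\e}(L))\le Z^{\per}({\mathcal N}^{\per,\e}_r(L,C_K))+e^{-K\abs{\Oe}/2}D^{\abs{\Oe}}.
\]

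The main obstacle, and the step I expect to be the most delicate, is passing from the large radius $C_K$ to the $\kappa\to 0$ limit needed to invoke the first equality. My plan is to apply the interpolation Lemma~\ref{L:inter} inside the periodic cube with $v=Z=L$ to replace $C_K$ by an arbitrarily small $\kappa$ at a cost which, after taking $-\e^d\log$, becomes $\mathcal C\bigl[(b+W_\kappa)/N+\eta+(N\kappa/\eta)^r\bigr]+o(1)$, vanishing in the iterated limit $\e\to 0$, $N\to\infty$, $\eta\to 0$, $\kappa\to 0$. The delicate technical point is verifying that the interpolation construction respects the periodic quasi-linearity, so that the inserted boundary layer keeps the modified configuration inside ${\mathcal N}^{\per,\e}_r(L,\kappa)$. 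Once this is in place, the chain of bounds yields $-\lim_\e\e^d\log Z^{\per}({\mathcal N}^{\per,\e}(L))\ge W(L)$, completing the proof.
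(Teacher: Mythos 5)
Your treatment of the first equality is correct and is exactly the paper's argument: the paper proves it in the displayed chain of inclusions immediately preceding the lemma, combined with Lemma~\ref{T:tildWL}(a) for the lower envelope and Lemma~\ref{T:tildWL}(b) for the upper one. Your observation that the translation degeneracy must be quotiented out (otherwise $Z^{\per}_{[0,1]^d_{\e,R_0}}({\mathcal N}^{\per,\e}(L))$ as literally written is infinite) is a point the paper glosses over, and your reduction, via (A1), the discrete Poincar\'e--Sobolev inequality and exponential tightness, of ${\mathcal N}^{\per,\e}(L)$ to ${\mathcal N}^{\per,\e}_r(L,C_K)$ with an $\e$-independent radius $C_K$ is sound.

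The gap is in the step you yourself flag as delicate, and it is genuine. First, Lemma~\ref{L:inter} does not ``replace $C_K$ by an arbitrarily small $\kappa$'': it passes from $Z_{\Oe}({\mathcal N}_{\Oe,r}(v,\kappa))$ to $Z_{\Oe}({\mathcal N}_{\Oe,r}(v,2\kappa)\cap{\mathcal N}_{\Oe,R_0,\infty}(Z))$, i.e.\ it \emph{enlarges} the $L^r$ radius while inserting the soft clamp. That particular defect is harmless, since by Lemma~\ref{T:tildWL}(a) the clamped free energy $\widetilde W_{\kappa}(L)$ is independent of the radius. The real obstruction is the error factor: applying the interpolation lemma at radius $\kappa=C_K$ produces the cost $\mathcal C\bigl(\tfrac{K}{N}+\eta+(\tfrac{NC_K}{\eta})^r\bigr)\e^{-d}$, and for a \emph{fixed} $C_K$ this bracket is bounded below by a positive constant: killing $K/N$ forces $N\to\infty$, killing the $\eta$ term forces $\eta\to0$, and then $(NC_K/\eta)^r\to\infty$. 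The term $(N\kappa/\eta)^r$ vanishes only in the regime $\kappa\to0$ after $\eta\to 0$ and $N\to\infty$, which is precisely how it is used in Lemma~\ref{T:tildWL}(b); moreover the lemma is only stated for $\kappa\le\kappa_0$, which $C_K$ need not satisfy. So your chain of bounds yields at best $-\lim_\e\e^d\log Z^{\per}({\mathcal N}^{\per,\e}(L))\ge W(L)-c(K,C_K)$ with $c(K,C_K)>0$, not the claimed inequality. Note also that simply discarding the periodicity, $Z^{\per}({\mathcal N}^{\per,\e}_r(L,C_K))\le Z({\mathcal N}_r(L,C_K))$, only gives the rate $W_{C_K}(L)$, which for large $C_K$ can be strictly below $W(L)$ (the system may relax to an energetically better slope), so the quasi-periodicity constraint must be exploited in an essential way to shrink the effective radius before interpolating — for instance via the $n$-fold quasi-periodic extension, under which the normalization $\abs{\L}^{\frac1r+\frac1d}$ in \eqref{E:NrdL} improves the radius from $C_K$ to $C_K/n$. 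To be fair, the paper itself offers no details here beyond ``similarly as in Lemma~\ref{T:tildWL}(b)'', so you have correctly located the crux of the second equality, but as written your proposal does not close it.
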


\subsection{Proof of Large Deviation Principle}

To prove  Theorem~\ref{T:LDab}, we begin by considerng, for any
$\r>0$ and $z \in Q(\r)=[-\tfrac{\r}2,\tfrac{\r}2]^d$,  the lattice
\begin{equation} 
\cL_{\r,z}=  (\r \Z)^d+z= \bigl\{x\in (\r \Z)^d+z\bigr\}.
\end{equation}
Our strategy will be  to approximate the integrals $Z_{\Oe}\bigl(O(v)\bigr)$ over  suitably chosen neighbourhoods
$O(v)$, $v\in W^{1,p}(\O)$, 
by a product of contributions over cubes obtained from $Q(\r)$ by shifts from $\cL_{\r,z}$.
Here $\r$ and $z$ will be chosen so that the function $v$ is, on each cube $x+Q(\r)$ for which $x+Q(\r)\subset \O$,  well approximated by its linear part  ${\mathcal L}_x v$ defined at $x$ by
${\mathcal L}_x v(y)= \nabla v(x)\cdot y+\dashint_{x+Q(\rho)}v(t) d t$ and, in the same time, the sum of 
the contributions ${W}(\nabla v(x))$ over the linear patches is well represented by the integral 
$\int_{\O} {W}(\nabla v(x))dx$.

To show that such a choice (of $\r$ and $z$)  is possible, we will use the following ``blow up'' lemma (the Corollary below)
with a function $f(x)$  related to an approximation of ${W}(\nabla v(x))$ and  the functions $v_{x,\r}$
representing the difference $v-{\mathcal L}_x v$; explicitly, we define
\begin{equation}
v_{x,\r}(y)=\tfrac1\r (v(x+\r y)-   {\mathcal L}_x v(\r y) )=\frac{v(x+\r y)-\dashint_{x+Q(\rho)}v(t) d t}\r -\nabla v(x)\cdot y
\end{equation} 
for any   $x\in \cL_{\r,z}$ and any
 $y\in Q=Q(1)$. For $v\in W^{1,p}(\O)$, the function $v_{\cdot,\r}(\cdot)$ is considered as belonging to $L^p(\O, W^{1,p}(\O))$.

\begin{lemma}
\label{L:blowup}
Let $r\ge p >1$,  $\frac1r>\frac1p-\frac1d$,  and let $v\in W_0^{1,p}(\R^d)$.
Then there exists a function  $\o_v: \R^+\to \R^+$ such that $\lim_{\r\to 0}\o_v(\rho)=0$  and

\noindent
a) $\dashint_{Q(\r)} \sum_{x\in  \cL_{\r,z}}\r^d (\int_{Q(1)} \abs{\nabla v_{x,\r}(y)}^pdy)dz  \le \o_v(\rho)$, 

\noindent
b) $\dashint_{Q(\r)} \sum_{x\in  \cL_{\r,z}}\r^d \bigl(\int_{Q(1)} \abs{v_{x,\r}(y)}^rdy\bigr)^{p/r}dz  \le \o_v(\rho)$.
\end{lemma}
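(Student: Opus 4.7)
The plan is to reduce both estimates to the continuity of translations in $L^p(\R^d)$. The key computation is that, almost everywhere,
\[
\nabla_y v_{x,\rho}(y) = \nabla v(x+\rho y) - \nabla v(x),
\]
so that the change of variables $t = x + \rho y$ gives
\[
\rho^d \int_{Q(1)} |\nabla v_{x,\rho}(y)|^p\, dy = \int_{x+Q(\rho)} |\nabla v(t) - \nabla v(x)|^p\, dt.
\]

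First I would prove part (a). The cubes $\{x+Q(\rho)\}_{x\in\cL_{\rho,z}}$ tile $\R^d$ up to a null set, so summing in $x$ yields
\[
\sum_{x\in\cL_{\rho,z}} \rho^d \int_{Q(1)} |\nabla v_{x,\rho}(y)|^p\, dy = \int_{\R^d} |\nabla v(t) - \nabla v(\pi_{\rho,z}(t))|^p\, dt,
\]
where $\pi_{\rho,z}(t)$ denotes the unique lattice point of $\cL_{\rho,z}$ whose cube contains $t$. Then by Fubini,
\[
\dashint_{Q(\rho)} \sum_{x\in\cL_{\rho,z}} \rho^d \int_{Q(1)} |\nabla v_{x,\rho}(y)|^p\, dy\, dz = \int_{\R^d} \dashint_{Q(\rho)} |\nabla v(t) - \nabla v(\pi_{\rho,z}(t))|^p\, dz\, dt.
\]
The crucial observation is that for fixed $t$ the map $z \mapsto s := t - \pi_{\rho,z}(t)$ is a measure-preserving bijection $Q(\rho) \to Q(\rho)$ (it is the composition of $z \mapsto t-z$ with reduction modulo $\rho\Z^d$ into the fundamental domain $Q(\rho)$, both measure-preserving). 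Applying this change of variables yields
\[
\dashint_{Q(\rho)} \sum_{x\in\cL_{\rho,z}} \rho^d \int_{Q(1)} |\nabla v_{x,\rho}(y)|^p\, dy\, dz = \dashint_{Q(\rho)} \|\nabla v - \tau_s \nabla v\|_{L^p(\R^d)}^p\, ds,
\]
which tends to zero as $\rho \to 0$ by strong continuity of translations in $L^p(\R^d)$, since $\nabla v \in L^p(\R^d)$.

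For part (b), I would use Sobolev--Poincar\'e on $Q(1)$. A direct calculation shows $\dashint_{Q(1)} v_{x,\rho}(y)\, dy = 0$, both summands in the definition having mean zero (the first by construction of the average, the second by symmetry of $Q(1)$ about the origin). Under the hypothesis $\tfrac{1}{r} > \tfrac{1}{p} - \tfrac{1}{d}$, the exponent $r$ is strictly below the Sobolev conjugate $p^*$ (with $p^* = \infty$ if $p \ge d$), hence
\[
\left(\int_{Q(1)} |v_{x,\rho}(y)|^r\, dy\right)^{p/r} \le C \int_{Q(1)} |\nabla v_{x,\rho}(y)|^p\, dy
\]
with a constant $C$ independent of $x$ and $\rho$. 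Summing in $x \in \cL_{\rho,z}$, averaging in $z \in Q(\rho)$, and invoking the estimate of part (a) gives the bound with $\omega_v(\rho) \le C\,\omega_v^{(a)}(\rho) \to 0$.

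The only delicate step is the tiling--Fubini identity: the fact that averaging the ``cell error'' $\nabla v(t)-\nabla v(\pi_{\rho,z}(t))$ over the shift $z$ is equivalent to averaging the genuine translation defect $\nabla v(t) - \nabla v(t-s)$ over $s \in Q(\rho)$. Once this reduction to an $L^p$-modulus of continuity is in place, both parts are routine; part (b) is essentially a corollary of part (a) via a single application of Sobolev--Poincar\'e on the unit cell.
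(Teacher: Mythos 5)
Your proof is correct and follows essentially the same route as the paper's: part (a) is the identity $\dashint_{Q(\r)}\sum_{x\in\cL_{\r,z}}\r^d\int_{Q}\abs{\nabla v_{x,\r}}^p\,dy\,dz=\int_{\R^d}\int_{Q}\abs{\nabla v(x+\r y)-\nabla v(x)}^p\,dy\,dx$, reduced to the $L^p$-continuity of translations of $\nabla v$, and part (b) follows from (a) by Sobolev--Poincar\'e on the unit cell using $\frac1r>\frac1p-\frac1d$. You simply spell out the measure-preserving reindexing and the zero-mean property that the paper leaves implicit.
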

\begin{proof} 
a)
Notice first that  for any $\o>0$ we can choose  $\r$ sufficiently small, to get 
\begin{equation}
\int_{\R^d}\int_{Q} \abs{\nabla v_{x,\r}(y)}^p dy dx=\int_{\R^d}\int_{Q} \abs{\nabla v(x+\r y)-\nabla v(x)}^p dy dx<\o
\end{equation}
by Lebesgue differentiation theorem.
Rewriting the integral $\int_{\R^d}\int_{Q} \abs{\nabla v_{x,\r}(y)}^p dy dx$ in the form of the sum $\int_{Q(\r)} 
\sum_{x \in \cL_{\r,z}} \int_{Q} \abs{\nabla v_{x,\r}(y)}^p dydz$,
we get
\begin{equation}
\label{E:meansum}
\frac{1}{\r^d}\int_{Q(\r)}\Bigl(\r^d \sum_{x \in \cL_{\r,z}} \int_{Q} \abs{\nabla v_{x,\r}(y)}^p dy\Bigr)dz<\o.
\end{equation}
\noindent
b) Follows from a) by Sobolev imbedding.
\end{proof}

\begin{corollary}
\label{L:blowup}
Let  $v\in W_0^{1,p}(\R^d)$, $\d>0$, $f\in L_1(\R^d)$, and let 
 $\ell<\int_{\R^d}f(x)dx$.  
Then there exists a constant $\r_0=\r_0(v,f,\ell,\d)$ and for each $\rho\le \rho_0$ a point $z\in Q(\r)$ such that 
\begin{equation}
\label{E:tauI}
\sum_{x \in \cL_{\r,z}} \r^{d}\int_{Q} \abs{\nabla v_{x,\r}(y)}^p dy<\delta 
\ \ \text{ and } \  \
\r^d  \sum_{x \in \cL_{\r,z}}f(x) > \ell.
\end{equation}
\end{corollary}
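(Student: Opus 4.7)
The plan is a pigeonhole/averaging argument over the shift parameter $z \in Q(\r)$. Introduce the two nonnegative functions on $Q(\r)$,
\[
G(z) = \r^d \sum_{x \in \cL_{\r,z}} \int_{Q} \abs{\nabla v_{x,\r}(y)}^p\, dy, \qquad F(z) = \r^d \sum_{x \in \cL_{\r,z}} f(x),
\]
and aim to show that for all sufficiently small $\r$ the two ``bad'' sets $\{G \ge \delta\}$ and $\{F \le \ell\}$ together have measure strictly less than $\abs{Q(\r)} = \r^d$, so that a $z$ satisfying both inequalities of \eqref{E:tauI} exists.

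For the gradient term, the Fubini/unfolding identity $\int_{\R^d}g(x)\,dx = \int_{Q(\r)}\sum_{x\in\cL_{\r,z}} g(x)\, dz$ (just the partition of $\R^d$ into translates of $Q(\r)$) combined with Lemma \ref{L:blowup}(a) gives $\dashint_{Q(\r)} G(z)\, dz \le \o_v(\r)$, so Markov yields $|\{z : G(z)\ge \delta\}| \le \o_v(\r)\r^d/\delta = o(\r^d)$. For the Riemann sum $F$, the same unfolding gives the exact identity $\dashint_{Q(\r)} F(z)\, dz = \int_{\R^d} f$, but this controls only the average and not the lower tail directly. The fix is an $L^1$-approximation: set $\e = \int f - \ell >0$, choose $\tilde f \in C_c(\R^d)$ with $\norm{f - \tilde f}_{L^1} < \e^2/16$, and split $F = F_{\tilde f} + F_{f - \tilde f}$. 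For $\tilde f$, uniform continuity plus compact support make $F_{\tilde f}(z)$ a standard Riemann sum converging to $\int \tilde f$ uniformly in $z$, so for $\r$ small enough $|F_{\tilde f}(z) - \int \tilde f| < \e/4$ for all $z\in Q(\r)$. For the rough remainder $h = f - \tilde f$, $\dashint_{Q(\r)}|F_h(z)|\,dz \le \norm{h}_{L^1}$, and Markov gives $|\{z: |F_h(z)| \ge \e/4\}| \le 4\norm{h}_{L^1}/\e \cdot \r^d < \e \r^d / 4$. Outside this set, $F(z) \ge \int\tilde f - \e/2 \ge \int f - \e = \ell$.

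Choosing $\rho_0 = \rho_0(v,f,\ell,\delta)$ small enough that $\o_v(\r)/\delta < 1/4$ and the Riemann-sum uniform convergence kicks in, the two bad sets have total measure less than $\r^d/2 + \r^d/4 < \r^d$ for every $\r \le \rho_0$, and any $z$ in the intersection of their complements gives the desired inequalities. The main obstacle is the lower-tail estimate for $F$: the hypothesis $f \in L_1$ alone only controls the mean of $F$, so the $C_c$-approximation together with the uniform-Riemann-sum estimate for $\tilde f$ and a Markov bound for $f-\tilde f$ is what converts the average inequality $\int f > \ell$ into the required pointwise inequality on a large set of shifts.
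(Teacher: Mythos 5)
Your argument is correct in substance and the gradient half coincides with the paper's: both apply Markov's inequality to $G(z)$ using the mean bound $\dashint_{Q(\r)}G(z)\,dz\le\o_v(\r)$ from Lemma a). The treatment of $F$ is genuinely different. The paper assumes (as a reduction) $f\le K\1_{Q(R)}$, so that $F\le KR^d$, and combines this upper bound with the exact identity $\dashint_{Q(\r)}F(z)\,dz=\int f>\ell$ in a reverse-Chebyshev fashion to conclude that the \emph{good} set $\{F>\ell\}$ occupies a fixed positive fraction $\frac{\int f-\ell}{KR^d-\ell}$ of $Q(\r)$, uniformly in $\r$; since the gradient bad set has relative measure $\o_v(\r)/\d\to0$, the intersection is nonempty for small $\r$. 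You instead split $f=\tilde f+h$ with $\tilde f\in C_c$ and $\norm{h}_{L^1}$ small, use uniform convergence of the Riemann sums of $\tilde f$ and Markov on $|F_h|$, and conclude that the \emph{bad} set $\{F\le\ell\}$ has small relative measure. Your route gives a stronger conclusion (almost every shift is good for small $\r$) and dispenses with the paper's somewhat glossed-over ``without loss of generality'' truncation for general $f\in L^1$; the paper's route is shorter and needs only a one-sided bound on $f$. One small repair: your specific constants break down when $\e=\int f-\ell$ is large --- the Markov bound $|\{|F_h|\ge\e/4\}|\le\frac{4\norm{h}_{L^1}}{\e}\r^d<\frac{\e}{4}\r^d$ exceeds $\r^d/4$ unless $\e<1$, and the inequality $\int f-\e^2/16-\e/2\ge\ell$ needs $\e\le8$. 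This is harmless: either take $\norm{f-\tilde f}_{L^1}<\eta\,\e$ with a small absolute $\eta$, or note that it suffices to prove the claim with $\ell$ replaced by $\int f-\min(\e,1)\ge\ell$.
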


\begin{proof}

Interpreting the integral in \eqref{E:meansum} as the mean over $Q(\r)$ of the function in the brackets  and
using ${\cM}_{\delta}\subset Q(\r)$ to denote the set of points for which the first inequality in \eqref{E:tauI} is not valid, 
\begin{equation}
{\cM}_{\delta}=\{z\in Q(\r)\mid    \r^d\sum_{x\in\cL_{\r,z}(\O)}\int_{Q}\abs{\nabla v_{x,\r}(y)}^pdy\ge \d\},
\end{equation}
we can apply  Markov's inequality to get 
\begin{equation}
\abs{\cM_{\d}}\le \frac{  \o_v(\rho)  \r^d}{\d}  .
\end{equation}

On the other hand, assuming without loss of generality that $f\le K\1_{Q(R)}$ for some (large) $K$ and $R$ and
denoting $F(z)=\r^d  \sum_{x \in \cL_{\r,z}}f(x)$, we have $F\le K R^d$ with
the mean over $Q(\r)$ satisfying  
\begin{equation}
\dashint F(z) dz=\int_{Q(\r)} \r^{-d}F(z) dz=  \int f(y)dy> \ell.
\end{equation}
Denoting 
\begin{equation}
{\overline{\cM}}_{\ell}=\{z\in Q(\r)\mid   F(z) \le \ell\},
\end{equation}
we get
\begin{equation}
\int f(y)dy=\dashint F(z) dz  \le \ell \r^{-d}\abs{{\overline{\cM}}_{\ell}}+
K R^d(1-\r^{-d}\abs{{\overline{\cM}}_{\ell}}).
\end{equation}
Hence,
\begin{equation}
1-\r^{-d}\abs{{\overline{\cM}}_{\ell}}\ge \frac{\int  f(y) dy-\ell} {K R^d-\ell}.
\end{equation}

A point $z$ satisfying simultaneously both bounds in \eqref{E:tauI} thus exists once 
$1-\r^{-d}\abs{{\overline{\cM}}_{\ell}}-\frac{  \o_v(\rho)}{\d}> \epsilon$ for a fixed $\epsilon$ and $\r$ small.
For this to hold, it is enough to choose $\o_v$ (and corresponding $\r$) sufficiently small.
\end{proof}

Theorem~\ref{T:LDab} a) follows directly from the following lemma.

\begin{lemma}
\label{L:a}
For every $\d,\kappa, M\in(0,\infty)$ and any $v\in W^{1,p}(\O)$ with  $r\ge p >1$,  $\frac1r>\frac1p-\frac1d$, there exists $\tilde\kappa$ such 
\begin{equation}
Z_{\Oe}({\mathcal N}_{\Oe,r}(v,\tilde\kappa))\le \exp\bigl\{\e^{-d}\bigl(-\int_{\O} (W_\kappa(\nabla v(x))\wedge M) dx+\d\bigr)\bigr\}
\end{equation} 
for sufficiently small $\e$.
\end{lemma}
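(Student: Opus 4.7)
The plan is to coarse-grain the integral $Z_\Oe({\mathcal N}_{\Oe,r}(v,\tilde\kappa))$ into a product of partition functions on mesoscopic cubes of side $\r$ on which $v$ is close to its affine part $\mathcal L_x v$, and then to apply the limsup definition of $W_\kappa$ on each factor. I first apply the blow-up Corollary with the choice $f(x)=(W_\kappa(\nabla v(x))\wedge M)\mathbf{1}_\O(x)$ and $\ell$ slightly less than $\int_\O(W_\kappa\wedge M)\,dx$, together with a small tolerance $\delta'$, to select $\r$ small and a shift $z\in Q(\r)$ such that simultaneously (i) the Riemann sum $\r^d\sum_{x\in\cL_{\r,z},\,Q_x\subset\O}(W_\kappa(\nabla v(x))\wedge M)$ differs from $\int_\O(W_\kappa\wedge M)\,dx$ by at most $\delta/4$, and (ii) the blow-up averages $\sum_x\r^d\int_Q|\nabla v_{x,\r}|^p\,dy$ and $\sum_x\r^d\bigl(\int_Q|v_{x,\r}|^r\bigr)^{p/r}$ are arbitrarily small (the latter via Sobolev and $\tfrac1r>\tfrac1p-\tfrac1d$).

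I then decompose $\Oe=\bigsqcup_{x:\,Q_x\subset\O}Q_{x,\e}\sqcup S_\e$, with a residual strip satisfying $|S_\e|\le C\eta\,\e^{-d}$ by $(\mathrm{A}_\partial)$ for $\eta$ comparable to $\r$. Using $U\ge 0$ from $(A1)$, I drop all interactions whose support crosses distinct $Q_{x,\e}$ or involves $S_\e$, obtaining $H_\Oe(X)\ge \sum_x H_{Q_{x,\e}}(X|_{Q_{x,\e}})$, where $H_{Q_{x,\e}}$ retains only those $j$ with $\tau_j(A)\subset Q_{x,\e}$. Consequently the integral factorises, up to a bounded-volume contribution on $S_\e$ whose logarithm is absorbed as $O(\eta\,\e^{-d})$, into $\prod_x\int_{A_x}e^{-H_{Q_{x,\e}}}\,dX|_{Q_{x,\e}}$, with constraint sets $A_x\subset(\R^m)^{Q_{x,\e}}$ inherited from the global ${\mathcal N}_{\Oe,r}(v,\tilde\kappa)$.

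The delicate step is converting the single global $\ell^r$-constraint into useful local ones. From $X\in{\mathcal N}_{\Oe,r}(v,\tilde\kappa)$ one only obtains $\sum_{i\in Q_{x,\e}}|X(i)-X_{v,\e}(i)|^r\le \tilde\kappa^r|\Oe|^{1+r/d}$ per cube, which after rescaling becomes a local $\tilde\kappa'$-neighbourhood with $\tilde\kappa'\sim \tilde\kappa(|\O|/\r^d)^{1/r+1/d}$. Choosing $\tilde\kappa$ small in terms of $\r,\kappa$ forces $\tilde\kappa'\le\kappa/2$. The $\ell^r$-triangle inequality, combined with the cubewise linearisation error $e_x=\sum_{i\in Q_{x,\e}}|X_{v,\e}(i)-X_{\mathcal L_x v,\e}(i)|^r$, then yields the inclusion $A_x\subset{\mathcal N}_{Q_{x,\e},r}(\mathcal L_x v,\kappa)$ on the set of ``good'' cubes where $e_x\le (\kappa/2)^r|Q_{x,\e}|^{1+r/d}$. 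Item (ii) together with Markov's inequality guarantee that only a vanishing fraction of cubes (as $\r\to 0$) is ``bad''; for these, together with $S_\e$, I invoke the uniform lower bound $W_\kappa\ge b$ from Lemma~\ref{T:tildWL}(c) to conclude that their aggregate contribution to $\log Z$ is $o(\e^{-d})$.

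Finally, on each good cube the definition of $W_\kappa$ as a (negative) limsup yields $\log Z_{Q_{x,\e}}({\mathcal N}_{Q_{x,\e},r}(\mathcal L_x v,\kappa))\le -\r^d\e^{-d}(W_\kappa(\nabla v(x))-\eta)$ for $\e$ sufficiently small; since $\r$ and $z$ are fixed, the set $\{\nabla v(x)\colon x\in\cL_{\r,z},\,Q_x\subset\O\}$ is finite and a single threshold $\e_0$ works uniformly across cubes. Using $W_\kappa\ge W_\kappa\wedge M$, summing over good cubes, and adding the bad-cube and $S_\e$ errors yields $\log Z_\Oe({\mathcal N}_{\Oe,r}(v,\tilde\kappa))\le \e^{-d}\bigl(-\int_\O(W_\kappa(\nabla v)\wedge M)\,dx+\delta\bigr)$ once $\eta$, $\r$, and the bad-cube fraction are chosen small relative to $\delta$. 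The main obstacle is the third paragraph: reconciling a single global $\ell^r$-bound with the need for strong local bounds on every cube, which is resolved by choosing $\tilde\kappa$ very small as a function of $\r$ and by exploiting the Sobolev-enhanced blow-up to control, in mean over $x$, the linearisation error $e_x$.
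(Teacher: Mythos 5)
Your strategy coincides with the paper's: apply the blow-up Corollary with $f=(W_\kappa(\nabla v)\wedge M)\1_\O$ to select $\rho$ and $z$, use $U\ge 0$ (A1) to get superadditivity of $H$ over the cubes of $\cL_{\rho,z}$ and hence factorization of the partition function, invoke the limsup definition of $W_\kappa$ on each of the finitely many cubes with a common threshold $\e_0$, and control the exceptional cubes by Markov's inequality together with the a priori bound $b$ of Lemma~\ref{T:tildWL}(c). The only organizational difference is that the paper assigns each cube its own radius $\kappa_x$ with $\kappa_x^p=\int_Q\abs{\nabla v_{x,\rho}}^p$, so that ${\mathcal N}_{\Oe,r}(v,\tilde\kappa)\subset\bigcap_x{\mathcal N}_{\tau_x(Q(\rho))\cap\O,r}({\mathcal L}_xv,\kappa_x)$ holds on \emph{every} cube (by Sobolev), and then corrects only the cubes with $\kappa_x>\kappa$ using monotonicity of $W_{\kappa'}$ in $\kappa'$; you instead keep the uniform radius $\kappa$ and split into good/bad cubes according to the $L^r$ linearization error. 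The two bookkeepings are equivalent and both rest on the same Markov estimate.

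One step fails as literally written: the residual strip $S_\e$, which in your construction has macroscopic width $\sim\eta$ and hence $\abs{S_\e}\sim\eta\e^{-d}$ sites. If you genuinely drop all interactions involving $S_\e$, the only remaining control on $X\vert_{S_\e}$ is the global $\ell^r$ ball of radius $\tilde\kappa\abs{\Oe}^{\frac1r+\frac1d}$, whose volume in $(\R^m)^{S_\e}$ is of order $\exp\{m\abs{S_\e}\log(1/\e)+O(\abs{S_\e})\}$ even after accounting for the $\Gamma(1+m\abs{S_\e}/r)$ in the denominator of the $\ell^r$-ball volume; then $\e^d\log$ of the strip contribution is $\sim m\eta\log(1/\e)\to\infty$, not $O(\eta)$, and the whole upper bound is destroyed. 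The repair is the one the paper uses implicitly: cover all of $\Oe$ by the (possibly partial) cubes $\tau_x(Q(\rho))\cap\O$, retain the intra-cube interactions so that (A1) supplies a gradient energy $c\sum\abs{\nabla X(i)}^p$ on each piece, and apply Lemma~\ref{L:tech}(b) — only one site per connected component is paid for by the $\ell^r$ ball, all others by the factor $c^{-m/p}c(p,m)$ — to bound the boundary pieces by $e^{O(\abs{S_\e})}=e^{O(\eta\e^{-d})}$. With that correction your argument is complete.
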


\begin{remark}
Whenever $\int_{\O}   W(\nabla v(x))dx<\infty$, we infer 
by Lebesgue theorem that
\begin{equation}
Z_{\Oe}({\mathcal N}_{\Oe,r}(v,\tilde\kappa))\le \exp\bigl\{\e^{-d}\bigl(-\int_{\O}   W(\nabla v(x))dx+\d\bigr)\bigr\}.
\end{equation} 
If $\int_{\O}   W(\nabla v(x))dx=\infty$, we can show that for any $M$ there exists $\e(M)$ so that
\begin{equation}
Z_{\Oe}({\mathcal N}_{\Oe,r}(v,\tilde\kappa))\le \exp\bigl\{-\e^{-d}M\bigr\}
\end{equation}
for $\e<\e(M)$.
\end{remark}

\begin{proof}
Replacing $v$ by an extension to $W^{1,p}(\R^d)$ with compact support, we apply Lemma~\ref{L:blowup} with $f(x)=(W_\kappa(\nabla v(x))\wedge M)\1_{\O}(x)$.
 Thus,   for any constant $\tilde\d>0$ and any $\r_0$, there exists 
  $\r<\r_0$ and a point $z\in Q(\r)$ such that 
  \begin{align}
\label{E:sigma}
&\sum_{x \in \cL_{\r,z}} \r^{d}\int_{Q} \abs{\nabla v_{x,\r}(y)}^p dy<\tilde\delta \ \ \ 
\text{ and } \\
&\r^d  \sum_{x \in \cL_{\r,z}}(W_\kappa(\nabla v(x))\wedge M)\1_{\O}(x)  > \int_{\O}  (W_\kappa(\nabla v(z))\wedge M) dz -\tilde\delta.
 \end{align}
Now, let us consider the vector $\boldsymbol{\kappa}=\{\kappa_x,  x \in \cL_{\r,z}\}$ 
with $\kappa_x^p=\int_{Q} \abs{\nabla v_{x,\r}(y)}^p dy$,  
and  the neighbourhood
\begin{equation}
O_{\boldsymbol{\kappa}}(v)=\bigcap_{x \in \cL_{\r,z}, \tau_x(Q(\rho))\cap\O\neq\emptyset}  {\mathcal N}_{\tau_x(Q(\r))\cap\O,r}({\mathcal L}_x v,\kappa_x).
\end{equation}
Cf.   \eqref{E:NrdL} for the definition of ${\mathcal N}_{\L,r}(v,\kappa)$.

Using (A1), we have 
$H_{\Oe}(X)\ge \sum_{x\in \cL_{\r,z}(\O)}H_{\tau_{x}(Q(\r))_\e\cap\Oe}(X)$.
Thus 
\begin{equation}
\label{E:Z(O)}
Z_{\Oe}\bigl(O_{\boldsymbol{\kappa}}(v)\bigr)
\le \prod_{x\in \cL_{\r,z}}
Z_{\tau_{x}(Q(\r))_\e}({\mathcal N}_{\tau_{x}(Q(\r))_\e\cap \Oe,r}({\mathcal L}_x v,\kappa_x)).
\end{equation}
Above, we take $Z_{\tau_{x}(Q(\r))_\e}({\mathcal N}_{\tau_{x}(Q(\r))_\e\cap \Oe,r}({\mathcal L}_x v,\kappa_x))=1$
whenever $\tau_{x}(Q(\r))_\e\cap \Oe=\emptyset$.

Taking now $\limsup$ of the appropriately rescaled logarithm of \eqref{E:Z(O)},
we get
\begin{multline}
\label{E:logZ(O)}
\!\!\!\!\!\!\!\!\limsup_{\epsilon\to0}\epsilon^d\log Z_{\Oe}\bigl(O_{\boldsymbol{\kappa}}(v)\bigr)
\le  \sum_{x\in \cL_{\r,z}}
\limsup_{\epsilon\to0}  \epsilon^d\log Z_{\tau_{x}(Q(\r))_\e}({\mathcal N}_{\tau_{x}(Q(\r))_\e\cap\Oe,r}({\mathcal L}_x v,\kappa_x))=\\
\!\!\!\!\!\!\!\!=-\r^d  \sum_{x \in \cL_{\r,z}}W_{\kappa_x}(\nabla v(x))\1_{\O}(x)
\le -\r^d  \sum_{x \in \cL_{\r,z}}(W_{\kappa_x}(\nabla v(x))\wedge M)\1_{\O}(x)\le\\
\!\!\!\!\!\!\!\!\le -\r^d  \sum_{x \in \cL_{\r,z}}(W_{\kappa}(\nabla v(x))\wedge M)\1_{\O}(x)
-\r^d  \sum_{\substack{x \in \cL_{\r,z}\\ \kappa_x>\kappa}}(W_{\kappa_x}(\nabla v(x))\wedge M- W_{\kappa}(\nabla v(x))\wedge M)\1_{\O}(x).
\end{multline} 
The absolute value of each term in the last last sum can be bounded by $M+\abs{b}$ with $b$ the lower bound from 
 Lemma~\ref{T:tildWL}. In the same time,
 the number of terms $n_{\kappa}$ for which $\kappa_x>\kappa$ is, in view of the bound $\sum \r^d\kappa_x^p<\tilde\delta$,
bunded by $n_{\kappa}\le \r^{-d}\frac{\tilde\delta}{{\kappa}^p}$.
  
 In summary, observing that for sufficiently small $\tilde\kappa$ the set ${\mathcal N}_{\L,r}(v,\tilde\kappa)$ is contained in
the  intersection $O_{\boldsymbol{\kappa}}(v)$ of a finite number of open sets,
 we are getting, for sufficiently small $\e$,
 \begin{equation}   
 \label{E:upper} 
\epsilon^d\log Z_{\Oe}\bigl(O_{\boldsymbol{\kappa}}(v)\bigr)\le
-\int_{\O} W_\kappa(\nabla v(x))dx+ (M+\abs{b})\frac{\tilde\delta}{{\kappa}^p} +\tilde\delta
\end{equation}
obtaining the claim by choosing sufficiently small $\rho$ and $\tilde\d$.
\end{proof}

For the lower bound, Theorem~\ref{T:LDab} b), we have to use Interpolation Lemma again (more precisely, we use  Lemma~\ref{T:tildWL}(a) that is based on it).
\begin{lemma}
a) For every $\d,\kappa\in(0,\infty)$ and any $v\in W^{1,p}(\O)$, we have
\begin{equation}
Z_{\Oe}({\mathcal N}_{\L,r}(v,\kappa))\ge \exp\bigl\{\e^{-d}\bigl(-\int_{\O}  W_{\kappa/2}(\nabla w(x)) dx-\d\bigr)\bigr\}
\end{equation} 
if $w$ is a piecewise linear function such that $\norm{w-v}_r\le \frac{\kappa}2 \abs{\O}^{\frac1r+\frac1d}$
and  $\e$ is sufficiently small.

\noindent
b) For every $\d,\kappa\in(0,\infty)$ and any $v\in W^{1,r}(\O)$, we have
\begin{equation}
Z_{\Oe}({\mathcal N}_{\L,r}(v,\kappa))\ge \exp\bigl\{\e^{-d}\bigl(-\int_{\O}   W(\nabla v(x)) dx-\d\bigr)\bigr\}
\end{equation} 
for   sufficiently small $\e$.
\end{lemma}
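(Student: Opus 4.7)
Part (a). My plan is to tile $\O$ by small cubes $C_x = x + Q(\r)$, $x \in \mathcal{L}_{\r, z}$, fine enough that $w$ is affine on each with gradient $\nabla w(x)$ and affine extension $L_x$, and to consider the product configuration set
\[
\mathcal{S} = \bigcap_x \bigl[\mathcal{N}_{C_{x,\e}, r}(L_x, \kappa) \cap \mathcal{N}_{C_{x,\e}, R_0, \infty}(L_x)\bigr].
\]
Summing the cellwise $\ell^r$ constraints gives $\sum_{i\in\Oe} \abs{X(i) - X_{w,\e}(i)}^r \le \kappa^r |\Oe|^{1+r/d} N^{-r/d}$, where $N$ is the number of cubes; for $N$ large this is $\ll (\kappa/2)^r |\Oe|^{1+r/d}$, which combined with the hypothesis $\norm{w - v}_r \le (\kappa/2)\abs{\O}^{1/r+1/d}$ and the triangle inequality yields $\mathcal{S} \subset \mathcal{N}_{\Oe, r}(v, \kappa)$.

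On $\mathcal{S}$ the soft clamps $\abs{X(i) - L_x(i)} < 1$ on each cube's $R_0$-boundary strip, together with \eqref{E:A2short}, bound every inter-cell interaction $U(X_{\tau_j(A)})$ by a constant depending only on $\norm{\nabla w}_\infty, C, R_0$; the boundary-strip contribution to $H_{\Oe}$ is $O(\e^{-(d-1)})$, i.e., $o(\e^{-d})$. Hence
\[
Z_{\Oe}(\mathcal{N}_{\Oe, r}(v, \kappa)) \ge Z_{\Oe}(\mathcal{S}) \ge e^{-o(\e^{-d})} \prod_x Z_{C_{x,\e}}(\mathcal{S}_x).
\]
For each cube, the Interpolation Lemma~\ref{L:inter} applied in the lower-bound form $Z(\text{clamped}) \ge Z(\text{unclamped}) \exp(-\text{err})$ (with $v = Z = L_x$) bounds $Z_{C_{x,\e}}(\mathcal{S}_x)$ below by $Z_{C_{x,\e}}(\mathcal{N}_{C_{x,\e}, r}(L_x, \kappa/2))$ times an exponential factor, which by the definition of $W_{\kappa/2}$ is at least $\exp(-\e^{-d} \r^d (W_{\kappa/2}(\nabla w(x)) + \delta'))$ for $\e$ small. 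Multiplying over cubes, using the Riemann-sum convergence $\r^d \sum_x W_{\kappa/2}(\nabla w(x)) \to \int_\O W_{\kappa/2}(\nabla w)\, dx$ as $\r \to 0$, and tuning the Interpolation-Lemma parameters $(N_0,\eta)$ as in Lemma~\ref{T:tildWL}(b) to absorb all remaining errors into $\delta$, yields the assertion.

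Part (b) follows by approximation: for $v \in W^{1,r}(\O)$ pick piecewise linear $w_n \to v$ in $W^{1,r}$; by the $r$-growth $\abs{W(L)} \le \overline C(1 + \norm{L}^r)$ from Proposition~\ref{P:QC} and dominated convergence, $\int W(\nabla w_n)\,dx \to \int W(\nabla v)\,dx$; since $W_{\kappa/2} \le W$ (Lemma~\ref{T:tildWL}(b)), applying part (a) with $w_n$ sufficiently close to $v$ and then tightening $\delta$ delivers the claim.

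The main obstacle is the simultaneous tuning of the parameters $(\r, N_0, \eta, \e)$ in part (a) so that the Interpolation-Lemma errors, the Riemann-sum error, and the inter-cell Hamiltonian correction all collapse into a single $\delta\e^{-d}$ in the exponent. The role of the Interpolation Lemma here is essential: it links the clamped partition functions on cells (which factor nicely thanks to the boundary clamps) to the unclamped ones (which realize the precise $W_{\kappa/2}$ limit), and without it a direct factoring argument would yield only the weaker bound with $W$ in place of $W_{\kappa/2}$.
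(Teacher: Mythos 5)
Your overall architecture for part (a) --- decompose $\O$ into cells on which $w$ is affine, restrict to per-cell neighbourhoods carrying both the $\ell^r$ constraint and the soft clamp, check that their intersection sits inside $\mathcal{N}_{\Oe,r}(v,\kappa)$, and control the cross-cell interaction terms via \eqref{E:A2short} on the clamped boundary strips (an $O(\e^{-d+1})$ contribution) --- is the paper's. But the final per-cell step does not close. You unclamp via the Interpolation Lemma and then assert that ``by the definition of $W_{\kappa/2}$'' one has $Z_{C_{x,\e}}(\mathcal{N}_{C_{x,\e},r}(L_x,\kappa/2))\ge \exp\bigl(-\e^{-d}|C_x|(W_{\kappa/2}(\nabla w(x))+\delta')\bigr)$ for all small $\e$. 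This is not what the definition gives: $W_{\kappa/2}$ is defined through a $\limsup$, so it yields an upper bound on the partition function for all small $\e$ and a lower bound only along a subsequence (which may differ from cell to cell). A lower bound valid for every small $\e$ must come from a quantity that is a genuine limit, and that is exactly what Lemma~\ref{T:tildWL}(a) supplies for the clamped-and-restricted sets: their normalized log-partition functions converge (not merely $\limsup$) to $-|C_x|\,W(\nabla w(x))$. The paper therefore evaluates $Z_{C_{x,\e}}(\mathcal{S}_x)$ directly by Lemma~\ref{T:tildWL}(a); no unclamping occurs at this stage, and the Interpolation Lemma enters only upstream, inside the proof of that lemma. (A smaller point: cubes $x+Q(\r)$ straddling the facets of the triangulation of $w$ are not sets on which $w$ is affine, no matter how small $\r$ is; one should use the triangulation $\{T_j\}$ adapted to $w$, and then the ``Riemann sum'' is exact and no $\r\to0$ limit is needed.)

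Part (b) also has a gap: dominated convergence gives $\int W(\nabla w_n)\to\int W(\nabla v)$ only if $W(\nabla w_n(x))\to W(\nabla v(x))$ pointwise a.e., i.e.\ only if $W$ is continuous. Continuity of $W$ is part of Proposition~\ref{P:QC}, which in the paper's logical order is obtained from Theorem~\ref{T:LDab}, which in turn rests on the present lemma --- so invoking it here is circular. The paper avoids continuity entirely: the inclusion of neighbourhoods gives the elementary comparison $W_{\kappa/2}(L)\le W_{\kappa/4}(\widetilde L)$ whenever $\norm{L-\widetilde L}<\kappa/4$, which handles the set where $\nabla w_n$ is close to $\nabla v$, while the $r$-growth bound of Lemma~\ref{T:tildWL}(c) provides the equiintegrability needed to dispose of the exceptional set. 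This is also precisely why the statement of part (a) is formulated with $W_{\kappa/2}(\nabla w)$ rather than $W(\nabla w)$.
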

\begin{proof}

a)
For the first claim we first observe that
\begin{equation}
\bigcap_{j} \bigl( {\mathcal N}_{T_j,r}(w,\kappa/2)\cap  {\mathcal N}_{T_j,R_0,\infty}(w)\bigr)\subset {\mathcal N}_{\O,r}(v,\kappa)
\end{equation}
with $\{T_j\}$ denoting a triangulation consistent with piecewise linearity of $w$. Using the bound
$U(X_{\tau_i(A)})\le C(1+ U(X_{w,\e})+R_0^d)$ whenever $\tau_i(A)$ is 
 reaching over the boundaries of the linear parts of $w$ and then applying 
 Lemma~\ref{T:tildWL}(a) to evaluate each term $Z_{T_j}({\mathcal N}_{T_j,r}(w,\kappa/2)\cap  {\mathcal N}_{T_j,R_0,\infty}(w))$, we get
 the sought bound with 
a constant proportional to $\e^{-d+1}$ which is smaller than $\delta$ for sufficiently small $\e$.

\noindent
b)
For the second claim, we first notice (see Lemma~\ref{T:tildWL}(c)) that
$W(L)\le C(1+\norm{L}^r)$. It follows that, if $w_n$ is a sequence of piecewise linear functions such that $\int \abs{\nabla(w_n(x)-v(x)}^r dx\to0$, then 
$W_{\kappa/2}(\nabla w_n(x))$ is equiintegrable.
Using  the bound
$ W_{\kappa/2}(L)\le  W_{\kappa/4}(\widetilde L)$ valid for $\norm{L-\widetilde L}<\kappa/4$,
we conclude that 
\begin{equation}
\lim\int W_{\kappa/2}(\nabla w_n(x))dx\le 
 \int W_{\kappa/4}(\nabla v(x))dx\le \int W(\nabla v(x))dx.
 \end{equation}
\end{proof}
 
\subsection{Proof of Proposition~\ref{P:QC}}
Consider $v\in W^{1,r}_0(\O)+L$.
According to Theorem~\ref{T:LDab} b), we have
$ \lim_{\kappa\to 0} \liminf_{\e\to 0} F_{\kappa,\e}(v)\le \frac1{\abs{\O}}\int_{\O} W(\nabla v(x)) dx$ with
$F_{\kappa,\e}(v)=- \e^d \abs{\O}^{-1} \log Z_{\Oe}({\mathcal N}_{\Oe,r}(v,\kappa))$.
Using the obvious inequality 
\begin{equation}
Z_{\Oe}({\mathcal N}_{\Oe,r}(v,2\kappa)\cap {\mathcal N}_{\Oe,R_0,\infty}(L))\le 
Z_{\Oe}( {\mathcal N}_{\Oe,R_0,\infty}(L))
\end{equation} and  Interpolation Lemma, we get
\begin{equation}
Z_{\Oe}({\mathcal N}_{\Oe,r}(v,\kappa))\le Z_{\Oe}({\mathcal N}_{\Oe,R_0,\infty}(L))
\exp\bigl\{ \e^{-d}\mathcal C \bigl(\tfrac{b+F_{\kappa,\e}(v)}N+\eta + (\tfrac{N\kappa}\eta)^r + \eta \norm{L}^r C_{\p} \abs{\p \O}\bigr)\bigr\}.
 \end{equation}
In view of Proposition~\ref{T:WL} thus
 \begin{equation}
 \lim_{\kappa\to 0} \liminf_{\e\to 0} F_{\kappa,\e}(v)= W(L),
 \end{equation}
 implying the claim
 \begin{equation}
 W(L)\le\frac1{\abs{\O}} \int_\O W(\nabla v(x))dx.
 \end{equation}
{}\hfill\hfill\qed

\subsection{Non-convexity of the free energy}
\label{S:nonconv}
Let us  briefly discuss the fact that the free energy $W(L)$ may be, in general, a  non-convex function of $L$ (Remark~\ref{R:non-convex}).
The idea hinges on the fact that an addition, to the original Hamiltonian $H^{(0)}$, of a term in the form of a hugely non-convex discrete null Lagrangian leads to a directly computable addition to the original free energy $W_0$ yielding a non-convex sum $W(L)$.
It suffices to assume that the free energy $W_0$ is bounded from above and below,  $ b\le W_0(L)\le B$,  for all $L$ with $\norm{L}\le 1$.
 An example might be the potential $U(X)= \abs{\nabla X(0)}^p$  for which $W_0(L)\sim \norm{L}^p$.

In more details, consider, for simplicity, the case $d=m=2$. Let $Q$ be a unit square $Q=(i_0,i_1,i_2,i_3)$ in $\Z^2$ (with $i_0=(0,0), i_1=(1,0), i_2=(0,1)$, and $i_3=(1,1)$) and, for any $X\in (\R^m)^{Q}$, let $V(X_Q)$ be defined by 
\begin{equation}
V(X_Q)=\tfrac12\det\bigl(X(i_1)-X(i_0), X(i_2)-X(i_0)\bigr)+\tfrac12\det\bigl(X(i_1)-X(i_3), X(i_2)-X(i_3)\bigr).
\end{equation}
Geometrically, $V(X_Q)$ yields the area of the rectangle $\bigl(X(i_0),X(i_1), X(i_3),X(i_2)\bigr)$.
In particular, for an affine map $L$, $V(L)$ is the area   of the deformed square $L(Q)$. Thus,
$V(\text{id})=1$ for the identity map $\text{id}$,  $\text{id}(i)=i$, and $V(L^{(0)})=0$ for the zero map $L^{(0)}$, $L^{(0)}(i)=0$.

Consider the Hamiltonian 
\begin{equation}
H_{\Oe}(X)= H^{(0)}_{\Oe}(X) +M \negthickspace\negthickspace\sum_{j\in\Z^d\colon \tau_j(Q)\subset \Oe} 
\negthickspace\negthickspace (1-V(X_{\tau_j(Q)}))=H^{(0)}_{\Oe}(X) +\begin{cases} 0\ \text{ for }\ L=\text{id},\\
M\ \text{ for }\ L=L^{(0)},
\end{cases}
\end{equation}
where  $H_{\Oe}^{(0)}(X)$ is the original Hamiltonian
and $M>0$ is a constant.  The crucial point is that the term $V$ is a discrete null Lagrangian (see e.g. \cite{FT}):
the value of the additional term
\begin{equation}
H^*_{\Oe}(X)= M \negthickspace\negthickspace\sum_{j\in\Z^d\colon \tau_j(Q)\subset \Oe} 
\negthickspace\negthickspace (1-V(X_{\tau_j(Q)}))
\end{equation}
 depends only on $X$  in the boundary layer, $H^*_{\Oe}(X)=H^*_{\Oe}(\overline X)$ if $X\vert_{\p_{R_0}\Oe}=\overline X\vert_{\p_{R_0}\Oe}$. More precisely, $H^*_{\Oe}(X)$ equals 
 $M(\text{vol}(\text{id}) - \text{vol}(X))$, where $\text{vol}(X)$ is
 the signed volume of the envelope of the set points $X(i), i\in\Oe$. 
 
 We have

\begin{lemma}
Let  $U$ be a 
potential whose corresponding free energy $W_0$  is bounded from above and below,
$W_0(L) \in (b,B)$,  for every $L$ such that $\norm{L}\le 1$. Then the free energy $W$ corresponding to the Hamiltonian 
$H^{(0)}+H^{*}$ is non-convex for M sufficiently large.
\end{lemma}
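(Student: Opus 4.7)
The plan is to exploit the discrete null-Lagrangian character of $V$ in order to compute the contribution of $H^{*}$ to the free energy exactly (up to negligible boundary corrections), and then to exhibit failure of convexity on the segment between $L_0\equiv 0$ and $L_1=\operatorname{id}$.

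Since $V$ is a discrete null Lagrangian, the sum $\sum_{j\colon\tau_j(Q)\subset\Oe}V(X_{\tau_j(Q)})$ equals (in $d=m=2$) the signed area $\operatorname{vol}(X)$ of the image envelope, which depends only on the boundary values of $X$. For any configuration $X$ in the constraint set $\abs{X(i)-L(i)}<1$ on $S_{R_0}(\Oe)$ defining $Z_{\Oe,L}$, a direct estimate together with the count $\abs{\Oe}+O(\e^{-d+1})$ of interior unit squares in $\Oe$ yields
\[
H^{*}_{\Oe}(X)=M(1-\det L)\abs{\Oe}+O\bigl(M(1+\norm{L})\,\e^{-d+1}\bigr),
\]
uniformly in admissible $X$. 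Pulling this essentially constant factor outside the partition function, one obtains $Z_{\Oe,L}=e^{-M(1-\det L)\abs{\Oe}+O(M\e^{-d+1})}\,Z^{(0)}_{\Oe,L}$. Taking $-\e^{d}\abs{\O}^{-1}\log$ and letting $\e\to 0$, and invoking Proposition~\ref{T:WL} together with the remark following it (whose boundedness hypothesis is precisely what is assumed on $W_0$ in the region $\norm{L}\le 1$), one arrives at the additive decomposition
\[
W(L)=W_0(L)+M(1-\det L).
\]

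Applying this to the three affine maps $L_0\equiv 0$, $L_1=\operatorname{id}$, and their midpoint $L_{1/2}=\tfrac12 L_0+\tfrac12 L_1=\tfrac12\operatorname{id}$, all of norm at most $1$ and hence with $W_0(L_i)\in(b,B)$, whose determinants are $0$, $1$, and $\tfrac14$ respectively, one finds
\[
W(L_{1/2})-\tfrac12 W(L_0)-\tfrac12 W(L_1)=\bigl[W_0(L_{1/2})-\tfrac12 W_0(L_0)-\tfrac12 W_0(L_1)\bigr]+\tfrac{M}{4}.
\]
The bracketed quantity is bounded in absolute value by $\tfrac32(B-b)$, so for any $M>6(B-b)$ the right-hand side is strictly positive, violating midpoint convexity of $W$ along the segment from $L_0$ to $L_1$.

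The step I expect to demand the most care is the quantitative null-Lagrangian identity used above: one must both verify the telescoping of $\sum_j V(X_{\tau_j(Q)})$ to a boundary sum and then establish the sharp bound $|\operatorname{vol}(X)-\det L\cdot\abs{\Oe}|=O((1+\norm{L})\e^{-d+1})$ when $X$ agrees with $L$ to within $1$ on the boundary strip $S_{R_0}(\Oe)$. Both follow from standard discrete divergence-theorem manipulations, but they require a careful bookkeeping on boundary cells using the condition (A${}_{\boldsymbol{\partial}}$) to control $\abs{\partial\Oe}$. Once these estimates are in place, the explicit form of the additive correction and the ensuing non-convexity argument are elementary.
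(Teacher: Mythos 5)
Your proposal is correct and follows essentially the same route as the paper: the null-Lagrangian structure makes $H^{*}$ depend only on boundary data, so it contributes the explicitly computable, non-convex term $M(1-\det L)$ to the free energy, which for large $M$ overwhelms the bounded $W_0$. The only difference is cosmetic — you test midpoint convexity at $0$, $\tfrac12\operatorname{id}$, $\operatorname{id}$, while the paper uses $-\operatorname{id}$, $0$, $\operatorname{id}$ (so that both endpoints have determinant $1$ and the correction vanishes there exactly, giving the slightly better threshold $M>B-b$).
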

\begin{proof}
Consider $L_1= \text{id}$ and $L_2=-\text{id}$. For any $X\in {\mathcal N}_{\Oe,R_0, \infty}(L_1)$, we have  $H^*_{\Oe}(X)= H^*_{\Oe}(L_1)+ O(\e^{d-1})\norm{L_1}= O(\e^{d-1})$ since the volume spanned by $X(i), i\in\Oe$ differs from the volume spanned by $\Oe$ at most by $O(\e^{d-1})\norm{L_1}$. Similarly $H^*_{\Oe}(X)=O(\e^{d-1})$ for  any $X\in {\mathcal N}_{\Oe,R_0, \infty}(L_2)$.
Thus 
$W(L_1)=W^{(0)}(L_1)$ and $W(L_2)=W^{(0)}(L_2)$. Given that   $\tfrac12 L_1 +\tfrac12 L_2=L^{(0)}$ and 
$H^*_{\Oe}(X)=M\abs{\Oe} + O(\e)$ for every $X\in {\mathcal N}_{\Oe,R_0, \infty}(L_0)$,
we get $W(L^{(0)})=W^{(0)}(L^{(0)}) +M\ge b+M> B\ge \tfrac12 W^{(0)}(L_1)+ \tfrac12 W^{(0)}(L_2)=\tfrac12 W(L_1)+ \tfrac12 W(L_2)$
once $M>B-b$.
\end{proof}

\subsection{Proof of Theorem~\ref{T:GYG}}

We will  use a particular case of the following Lemma  formulated in an abstract setting.
It is based on the following two standard facts.

\noindent
(1) Let $\cX$ be a topological space,
$\cK_\ell\subset\subset \cX$ a sequence of its compact separable subspaces, and $\e_\ell\to 0$ a sequence of positive numbers. Then the set of Borel probability measures with uniform tightness condition,
\begin{equation}
\cM_{(\e_\ell)}=\{\alpha\in BC(\cX)^\ast\colon 0\le \alpha, \alpha(1)=1, 
\alpha(\cX\setminus\cK_\ell)\le \e_\ell\}
\end{equation}
is weakly compact.
Here, as usually, $\alpha(\cX\setminus\cK_\ell)=\sup\{\alpha(\varphi)\colon \varphi\in C(\cX), \varphi\le \1_{\cX\setminus\cK_\ell}\}$.

Moreover, if we have a sequence $\mu_n$ of Borel probability mesures on $\cX$ such that
$\mu_n(\cX\setminus\cK_\ell)\le \e_\ell$ for all $n> n(\ell)$, then a subsequence converges weakly to
a Borel probability measure $\mu\in  \cM_{(\e_\ell)}$.

\noindent
(2)
Let $\cX_1$ and  $\cX_2$ be topological spaces and $\cK_{1,\ell}\subset\subset\cX_1$ and 
$\cK_{2,\ell}\subset\subset\cX_2$ be sequences of separable compact 
subspaces and $\mu\in BC(\cX_1\times\cX_2)^\ast$ be such that 
$\lim_{\ell\to\infty}\mu(\cX_1\times\cX_2\setminus \cK_{1,\ell}\times\cK_{2,\ell})=0$.
Then there exists a weakly measurable map $\nu: \cX_1\to BC(\cX_2)^\ast$ so that
 \begin{equation}
\mu(\varphi)=\int_{\cX_1}\nu_x(\varphi(x,\cdot)) d\mu_1,
\end{equation}
where $\mu_1$ is the marginal of $\mu$. Moreover, $\nu_x(1)=\lim \nu_x(\cK_{2,\ell})$
for almost all $x$.

Now, let $(S_n, \mu_n, \Sigma_n)$ be a sequence of probabilities,
$\cX_1$, $\cX_2$, and $\cX_3$ be topological spaces and $\cK_{k,\ell}\subset\subset\cX_k$, $k=1,2,3$,
be sequences of separable compact subspaces, and let  $\lambda$ be a Borel probability measure on $\cX_3$. 
Further, let a sequence of  mappings
$\cT_n: S_n\times \cX_3\to \cX_1\times\cX_2$ be given that are measurable with respect to the Borel $\sigma$-algebras on $\cX_1$, $\cX_2$. 

We say that a sequence $\cT_n$ fulfills a \emph{uniform tightness condition} (with respect to probabilities $\mu_n\times \lambda$ and a sequence ($\e_\ell$), $\e_\ell\to 0$) if
\begin{equation}
\lambda(\cX_3\setminus \cK_{3,\ell})+(\mu_n\times\lambda)(\cT_n^{-1}((\cX_1\times \cX_2)\setminus (\cK_{1,\ell}\times \cK_{2,\ell}))\le
\e_\ell
\end{equation}
for every $\ell$ and $n\ge n(\ell)$.

In this setting, the observations (1) and (2) lead to the following claim.

\begin{lemma}
\label{L:abstr}
Given a sequence $\cT_n$ fulfilling a uniform tightness condition,
there exists a subsequence $n_k\to\infty$, Borel probability measures $\gamma$ on $\cX_1$ 
and  $\l$ on $\cX_3$ such that 
\begin{equation}
\gamma(\cX_1\setminus \cK_{1,\ell})\le \e_\ell \text{ and }
\lambda(\cX_3\setminus \cK_{3,\ell})\le \e_\ell \text{ for all }\ell,
\end{equation}
and a mapping $\nu:\cX_1\times\cX_2\to \cP(\cX_2)$ that is weakly measurable (with respect to the weak topology of $\bigl(BC(\cX_2)\bigr)^\ast$) satisfying  
\begin{equation}
0\le \nu_{x_1,x_3}, \nu_{x_1,x_3}(1)=1\ \text{ and }\
\int_{\cX_1\times\cX_3} \nu_{x_1,x_3}(\cX_2\setminus\cK_{2,\ell}) d\gamma(x_1) d\lambda(x_3)\le\e_\ell
\end{equation}
 for almost all $x_1$ and $x_3$,
such that
\begin{multline}
\lim_{k\to\infty}\int \varphi(\cT_{n_k}(\o,x_3),x_3) d\mu_{n_k}(\o) d\lambda(x_3) =\\=
\int_{\cX_1\times\cX_3} \bigl[\textstyle{\int}_{\cX_2}\varphi(x_1,x_2,x_3)d\nu_{x_1,x_3}(x_2)\bigr] d\gamma(x_1) d\lambda(x_3).
\end{multline}
for any bounded and continuous test function $\varphi$ on $\cX_1\times \cX_2\times \cX_3$.
\end{lemma}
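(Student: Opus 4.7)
The plan is a pushforward-compactness-disintegration argument. I would first encode the joint distributions into a single sequence of Borel probability measures $\tilde\mu_n$ on $\cX_1\times\cX_2\times\cX_3$ defined as the pushforward of $\mu_n\otimes\lambda$ under $(\omega,x_3)\mapsto(\cT_n(\omega,x_3),x_3)$; equivalently,
\[
\tilde\mu_n(\varphi)=\int\varphi\bigl(\cT_n(\omega,x_3),x_3\bigr)\,d\mu_n(\omega)\,d\lambda(x_3)
\]
for each $\varphi\in BC(\cX_1\times\cX_2\times\cX_3)$. By construction the $\cX_3$-marginal of each $\tilde\mu_n$ is $\lambda$, and the left-hand side of the asserted identity is exactly $\tilde\mu_{n_k}(\varphi)$.

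Next, I would turn the uniform tightness hypothesis on the maps $\cT_n$ into uniform tightness of $(\tilde\mu_n)$: the complement of $\cK_{1,\ell}\times\cK_{2,\ell}\times\cK_{3,\ell}$ inside $\cX_1\times\cX_2\times\cX_3$ is contained in the union of $\cT_n^{-1}\bigl((\cX_1\times\cX_2)\setminus(\cK_{1,\ell}\times\cK_{2,\ell})\bigr)$ and $\cX_1\times\cX_2\times(\cX_3\setminus\cK_{3,\ell})$, so its $\tilde\mu_n$-mass is at most $\e_\ell$ for $n\ge n(\ell)$. Fact (1) then provides a subsequence $\tilde\mu_{n_k}$ converging weakly to a Borel probability measure $\tilde\mu$ inheriting the same tightness bound. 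Define $\gamma$ to be the $\cX_1$-marginal of $\tilde\mu$; the $\cX_3$-marginal is preserved in the weak limit and equals $\lambda$, and both marginals inherit the tightness bounds required in the statement.

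I would then apply fact (2) with base space $\cX_1\times\cX_3$ and fiber $\cX_2$. This produces a weakly measurable family $\nu_{x_1,x_3}\in\cP(\cX_2)$ with $\tilde\mu(\varphi)=\int\nu_{x_1,x_3}\bigl(\varphi(x_1,\cdot,x_3)\bigr)\,d\pi(x_1,x_3)$, where $\pi$ is the $\cX_1\times\cX_3$-marginal of $\tilde\mu$; the bound $\int\nu_{x_1,x_3}(\cX_2\setminus\cK_{2,\ell})\,d\pi\le\e_\ell$ follows from the tightness of $\tilde\mu$. The convergence claim of the lemma is then just weak convergence $\tilde\mu_{n_k}\to\tilde\mu$ applied to $\varphi$ viewed as a bounded continuous test function on the triple product.

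The delicate point, and what I expect to be the main obstacle, is matching the disintegration against $d\pi$ with the product form $d\gamma(x_1)\,d\lambda(x_3)$ appearing in the statement. In the intended application to Theorem~\ref{T:GYG}, the $\cX_1$-component of $\cT_n$ is $\Pie(X)$, which depends only on $\omega=X$ and not on the auxiliary variable $x_3\in\O$; this structural feature forces the joint marginal $\pi$ to split as $\gamma\otimes\lambda$ and recovers the formula as written. Without such a structural assumption the identification $\pi=\gamma\otimes\lambda$ need not hold, and the stated formula must be read with $d\pi$ in place of the product. Verifying this factorisation in the setting where the lemma is invoked, and checking weak measurability of $\nu_{x_1,x_3}$ with respect to $\gamma\otimes\lambda$ rather than $\pi$, is the only nontrivial bookkeeping; everything else reduces to Prokhorov's compactness theorem and the standard disintegration of a Borel probability on a product of topological spaces.
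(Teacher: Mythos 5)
Your argument is essentially the paper's intended one --- the paper offers no explicit proof beyond asserting that facts (1) and (2) yield the claim, and your pushforward--tightness--disintegration chain is the natural way to combine them. Your flag about the product form $d\gamma(x_1)\,d\lambda(x_3)$ versus the joint marginal $d\pi$ is a genuine subtlety of the statement as written, and your resolution (in the application the $\cX_1$-component of $\cT_{\Lambda,n}$ depends only on $X$ and not on $x_3$, so each pushforward $\tilde\mu_n$ has product marginal on $\cX_1\times\cX_3$, a structure preserved in the weak limit) is exactly what makes the formula correct where the lemma is invoked.
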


We will apply the above Lemma in the following situation.

We take $\cX_1=W_0^{1,p}(\O)$,  $\cX_2=BC(S)^\ast$, and $\cX_3=\O$
(recall that $S=(\R^m)^{\Z^d}/\R^m$).
Further, we consider the  sets 
\begin{equation}
\cK_{1,\ell}=\{ v\in W_0^{1,p}(\O)+u\colon \norm{\nabla v}_{p}\le \ell\}.
\end{equation}
Note that by the Poincar\'e inequality, $\cK_{1,\ell}$ is bounded in the norm topology of $W_{1,p}(\O)$.
To define $\cK_{2,\ell}$, we first introduce the sets 
\begin{equation}
B_{\Lambda,\ell}=\{X\in S\colon \sum_{i\in\Lambda}\abs{\nabla X(i)}^p\le \ell\}
\end{equation}
and 
\begin{equation}
\widetilde{\cK}_{2,\ell}=\{\mu\in BC(S)^\ast\colon 
0\le \mu, \abs{\mu}\le 1, \supp \mu\subset B_{\Lambda_N,2^N \abs{\Lambda_N} \ell} \text{ for every } N\}.
\end{equation}
Here, $(\Lambda_N)$ is the sequence of sets $\Lambda_N=[-N,N]^d\cap\Z^d$.
Then 
\begin{multline}
{\cK}_{2,\ell}=\{ \mu\in BC(S)^\ast\colon 0\le \mu, \abs{\mu}\le 1, 
\text{ for any } k \\
\text{ there exists } \mu_k\in \widetilde{\cK}_{2,2^k \ell}
 \text{ so that } \abs{\mu-\mu_k}\le 2^{-k/2}\}.
\end{multline}
Clearly, the sets ${\cK}_{1,\ell}$, $\widetilde{\cK}_{2,\ell}$, and ${\cK}_{2,\ell}$ are compact separable in the weak topology.
Also, we take  $\lambda$, the normalized Lebesgue measure on $\O$, and  
\begin{equation}
\cK_{3,\ell}=\O\setminus\p_{1/\ell}\O=\{x\in\O:\dist(x,\p\O)\ge 1/\ell\},
\end{equation}
  and  for the probabilities $(S_n, \mu_n, \Sigma_n)$ we take $S_n=S$ and $\mu_n=\mu_{\e_n,u}$
with $\e_n\to 0$ as $n\to\infty$.

Finally, we introduce the family of mappings 
$\cT_{\Lambda,n}: S\times \O\to (W_0^{1,p}(\O)+u)\times BC(S)^\ast$
defined by 
\begin{equation}
\cT_{\Lambda,n}(X,x)=(\Pie(X), \mu_\Lambda(\cdot\mid \tau_{\lfloor x/\e_n\rfloor}(X)).
\end{equation}
We will consider the sequences $\cT_{\Lambda_N,n}$, first in $n$ and then in $N$,
and show that they satisfy a uniform tightness condition.
To this end we verify the following bounds.

\begin{lemma} There exist fixed constants $\tilde c$ and  $\tilde\ell$ such that, uniformly in $\Lambda$,
\label{L:mueu}

\noindent
(a) $\mu_{\e,u}\bigl(\{X\in S\colon \int_{\O}\abs{\nabla\Pi_{\e}(X)(x)}^p d\l(x)\ge\ell \}\bigr)\le \exp(-\tilde c\ell\e^{-d})$ and

$\int (\int_\O (\abs{\nabla\Pi_{\e}(X)(x)}^p-\ell)_+d\l(x))d\mu_{\e,u}(X)\le \exp(-\tilde c\ell\e^{-d})$  for any $\ell\ge \tilde\ell$.

\noindent
(b) $\int_{\cK_{3,\ell}}\int (1-\1_{\cK_{2,\ell}})(\mu_{\Lambda}(\cdot \mid \tau_{\lfloor x/\e\rfloor}(X)) 
d\mu_{\e,u}(X) d\lambda(x)\le \tilde c(1+\tfrac{\norm{u}_{1,p}}{\tilde\ell})\tfrac{\tilde\ell}\ell$ whenever 

\noindent
\ \ \ \ $\e\,\diam\Lambda<1/\ell$.
\end{lemma}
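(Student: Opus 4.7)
The plan is to tackle (a) by combining the Exponential Tightness estimate (Lemma \ref{L:tight}) with the pointwise gradient comparison furnished by (A1), and to handle (b) by exploiting the DLR property of $\mu_{\e,u}$ on strictly interior boxes together with two nested applications of Markov's inequality.

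For (a), I would first observe that (A1) yields $H_{\Oe}(X)\ge c\sum_{i\in\Oe}|\nabla X(i)|^p$, while the usual piecewise-linear interpolation estimate gives $\int_\O |\nabla \Pi_\e(X)(x)|^p\,dx\le 2\e^d\sum_{i\in\Oe}|\nabla X(i)|^p$. Therefore the event $\{X:\int_\O|\nabla \Pi_\e X|^p\,d\lambda\ge\ell\}$ sits inside $\mathcal{M}_{K}$ for $K=c\ell/(2|\O|)$. The tail estimate \eqref{E:tight} of Remark~\ref{R:tight} then yields the desired decay $\exp(-\tilde c\,\ell\,\e^{-d})$ once the threshold $\tilde\ell$ is chosen large enough to absorb both $\log(D/\omega(m))$ and the prefactor $\exp(CH_{\Oe}(X_{u,\e}))$; bounding $H_{\Oe}(X_{u,\e})$ via (A2), Remark~\ref{R:gradiendbound}, and the discrete Poincaré estimate in terms of $\|u\|_{1,p}$ shows that $\tilde\ell$ depends linearly on a power of $\|u\|_{1,p}$ but $\tilde c$ is universal. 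The expectation bound follows from the tail bound via the layer-cake formula applied to $(\int_\O|\nabla \Pi_\e X|^p\,d\lambda-\ell)_+$.

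For (b), the key structural fact is that when $x\in\cK_{3,\ell}=\O\setminus\partial_{1/\ell}\O$ and $\e\,\diam\Lambda<1/\ell$, the translated box $\tau_{\lfloor x/\e\rfloor}(\Lambda)$ sits strictly inside $\Oe$, disjoint from the clamped strip $S_{R_0}(\Oe)$. Hence $\mu_{\e,u}$ satisfies the DLR equation on such boxes, and for translation-invariant $f$ on $S$ one has $\int \mu_\Lambda(f\mid\tau_{\lfloor x/\e\rfloor}X)\,d\mu_{\e,u}(X)=\int f(\tau_{-\lfloor x/\e\rfloor}X)\,d\mu_{\e,u}(X)$. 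Applying this with $f(X')=\sum_{i\in\Lambda_N}|\nabla X'(i)|^p$ and integrating in $x$ over $\cK_{3,\ell}$, then interchanging the sum with the $x$-integral (each lattice site is covered by an $x$-set of mass $\e^d|\Lambda_N|/|\O|$), one obtains, using (a) to control $\int\int_\O|\nabla\Pi_\e X|^p\,dx\,d\mu_{\e,u}$,
\begin{equation*}
\int_{\cK_{3,\ell}}\!\!\int \mu_\Lambda\bigl(\textstyle\sum_{i\in\Lambda_N}|\nabla X'(i)|^p\,\big|\,\tau_{\lfloor x/\e\rfloor}X\bigr)\,d\mu_{\e,u}(X)\,d\lambda(x)\le \tilde C|\Lambda_N|(1+\|u\|_{1,p}/\tilde\ell)\tilde\ell.
\end{equation*}
For each $k$ I would then define $\mu_k$ to be the restriction of $\mu=\mu_\Lambda(\cdot\mid\tau_{\lfloor x/\e\rfloor}X)$ to $\bigcap_N B_{\Lambda_N,2^N|\Lambda_N|2^k\ell}$, so that $\mu_k\in\widetilde\cK_{2,2^k\ell}$ automatically. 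Markov's inequality applied for each $N$ and summed in $N$ gives $\E|\mu-\mu_k|\le \tilde C'(1+\|u\|_{1,p}/\tilde\ell)\tilde\ell/(2^k\ell)$. A second Chebyshev estimate with threshold $2^{-k/2}$ followed by a geometric sum in $k$ (since $\sum_k 2^{-k/2}<\infty$) converts the event $\{\mu\notin\cK_{2,\ell}\}$ into a union of failures $\{|\mu-\mu_k|>2^{-k/2}\}$ and produces the stated $O(1/\ell)$ bound.

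The main obstacle I anticipate is establishing the DLR identity in a form that simultaneously handles the translation $\tau_{\lfloor x/\e\rfloor}$ and the clamped boundary condition of $\mu_{\e,u}$, keeping track of translation-invariance of $f$ on $S$. A secondary but quantitatively important subtlety is converting the prefactor $e^{CH_{\Oe}(X_{u,\e})}$ in Remark~\ref{R:tight} into a polynomial-in-$\|u\|_{1,p}$ factor so that $\tilde c$ in (a) can be taken independent of $u$; this is precisely what explains the form $\tilde c(1+\|u\|_{1,p}/\tilde\ell)\tilde\ell/\ell$ of the final bound in (b) and dictates the dependence of the threshold $\tilde\ell$ on the boundary data.
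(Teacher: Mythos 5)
Your proposal is correct and follows essentially the same route as the paper: part (a) is the inclusion of the event in $\mathcal M_K$ with $K\sim c\ell/\abs{\O}$ followed by the Exponential Tightness bound of Remark~\ref{R:tight} (with the expectation bound by layer cake), and part (b) is the DLR/translation identity \eqref{E:DLR} on interior boxes, the restriction construction of $\mu_k\in\widetilde{\cK}_{2,2^k\ell}$, and two Markov estimates summed geometrically in $N$ and $k$. The two subtleties you flag (validity of DLR away from the clamped strip, and absorbing the $\exp\{CH_{\Oe}(X_{u,\e})\}$ prefactor into the choice of $\tilde\ell$) are exactly the points the paper relies on.
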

\begin{proof}\

\noindent
(a) 
is an immediate consequence  of the assumption (A1) and Exponential Tightness once we observe that $\{X\colon \sum_{i\in \Oe} \abs{\nabla X(i)}^p > \e^{-d}\ell\}\subset \mathcal M_K$ with $K=\frac{c\ell}{\abs{\O}}$.

\noindent
(b) 
Notice that $\lambda(\p_{1/\ell}\O)\le   C_{\p}  \frac{\abs{\p\O}}{\abs{\O}}\frac1\ell$
in view of the condition (A${}_{\boldsymbol{\partial}}$).
On several occasions we will use the DLR condition in the following form:
whenever $\L\subset\Oe$ and $f,g$ are  measurable cylinder functions  on $S$ with $g$ living on   $\Z^d\setminus\L$, then
\begin{equation}
\label{E:DLR}
\int \mu_{\L}(f\mid X)g(X)d \mu_{\e,u}(X)=\mu_{\e,u}(fg).
\end{equation}
Using this (with $g=1$) and assuming that $\e\,\diam \,\L<1/\ell$, we have
\begin{multline}
\label{E:DLRtau}
\int_{\cK_{3,\ell}}\int \mu_{\Lambda}(f\mid \tau_{\lfloor x/\e\rfloor}(X))
d\mu_{\e,u}(X) d\lambda(x)=\\=
\int_{\cK_{3,\ell}}\int \mu_{\tau_{\lfloor x/\e\rfloor}(\Lambda)}(f\circ \tau_{\lfloor x/\e\rfloor}\mid X)
d\mu_{\e,u}(X) d\lambda(x) =
\int_{\cK_{3,\ell}}\mu_{\e,u}(f\circ \tau_{\lfloor x/\e\rfloor})d\lambda(x). 
\end{multline}
Taking $f=\1_{{}{B_{\Lambda_N,2^{N+k} \abs{\Lambda_N}\ell}^{\text{c}}}}$, we get
\begin{multline}
\int_{\cK_{3,\ell}} \int (1-\1_{\cK_{2,\ell}})(\mu_{\Lambda}(\cdot \mid \tau_{\lfloor x/\e\rfloor}(X)) 
d\mu_{\e,u}(X) d\lambda(x)\le\\
\le\sum_{k} 2^{k/2}\! \int_{\cK_{3,\ell}} \mu_{\e,u}(\cup_N B_{ \tau_{\lfloor x/\e\rfloor}(\Lambda_N),2^{N+k} \abs{\Lambda_N} \ell}^{\text{c}})d\lambda(x)\le\\
\le \sum_{k,N} 2^{k/2} \int_\O\int  \tfrac1{2^{N+k}\abs{\Lambda_n}\ell} \sum_{i\in \tau_{\lfloor x/\e\rfloor}(\Lambda_n)}\abs{\nabla X(i)}^p d\mu_{\e,u}(X) d\lambda(x)
\le\\ \le
\sum_{k,N} 2^{-N-k/2}\!\! \int  \!\!\tfrac{\e^d}{\ell}  \sum_{i\in \Z^d}\abs{\nabla X(i)}^p d\mu_{\e,u}(X) \le\\
\le
\sum_{k,N} 2^{-N-k/2}  \tfrac{1}{\ell} \bigl[\norm{u}_{1,p}+ \tilde\ell + \sum_{\ell\ge0}2^{\ell+1} \tilde\ell\mu_{\e,u}(\{X\colon \e^d \sum_{i\in\Oe}  \abs{\nabla X(i)}^p \in[2^{\ell} \tilde\ell, 2^{\ell+1} \tilde\ell\}] \bigr]\le
\\ \le
\sum_{k,N} 2^{-N-k/2}  \tfrac{\tilde\ell}{\ell} \bigl[\tfrac{\norm{u}_{1,p}}{\tilde\ell}+ 1 + \sum_{\ell\ge0}2^{\ell+1} 
\exp(-\tilde c2^\ell \tilde\ell\e^{-d}) \bigr]\le\\
\le \tfrac{\tilde\ell}\ell \sum_{k,~N}  2^{-N-k/2}(\tfrac{\norm{u}_{1,p}}{\tilde\ell}+1+2\sum_{\ell} 2^{-\ell}) \le 2\sqrt 2(\sqrt 2+1) (5+\tfrac{\norm{u}_{1,p}}{\tilde\ell})\tfrac{\tilde\ell}\ell
\end{multline}
once  $\tilde\ell$ is large enough (and for $\e$ small enough) so that $t^2 e^{-\tilde c \tilde\ell\e^{-d} t}\le 1$ for any $t\ge 0$.
\end{proof}
Applying now Lemma~\ref{L:abstr}, we get the claim \eqref{L:GYGdesint} for any $\varphi\in BC((W^{1,p}_0(\O)+u)\times BC(S)^\ast_+\times \O)$.
To extend \eqref{L:GYGdesint} to a more general class of test functions, we will use the following Lemma.
\begin{lemma}
Let $\nu_{\widetilde\Lambda,\e}(\varphi)=\mu_{\e,u}\bigl(\textstyle{\int_{\O}} \varphi(\Pie(X), \mu_{\widetilde\Lambda}(\cdot|\tau_{\lfloor{x}/{\e}\rfloor}(X)), x)d\l(x)\bigr)$. Consider the weak closure $\cM$ of the set
$\bigl\{\nu_{\widetilde\Lambda,\e}: \e\,\diam{\widetilde\Lambda}<\rho\bigr\}$.
 Further, let
\begin{equation}
\psi: (W^{1,p}_0(\O)+u)\times BC(S)^\ast_+\times
 \O\to \R
\end{equation}
be weakly continuous and such that for any $\delta$ it fulfills the growth condition
\begin{equation}
\label{E:PhiLippsi} 
\abs{\psi(v,\mu,x)}\le \eta(x)\exp\{ c_{\psi} \norm{\nabla v}_{p}^p)\}  \mu\bigl(\delta g_{\L} +C(\delta)\bigr)
\end{equation}
with  fixed $\Lambda$, $g_{\L}(X)=\textstyle{\sum}_{i\in\Lambda}\abs{\nabla X(i)}^p$, $\eta\in C_0(\O)$ such that $\dist(\supp \,\eta,\p\O)>\rho$, and a constant $c_{\psi}$ depending only on $\psi$.
Then, if for some $\nu_n, \nu \in \cM$ we have $\nu_n(\varphi)\to\nu(\varphi)$ for all
$\varphi\in BC((W^{1,p}_0(\O)+u)\times BC(S)^\ast_+\times \O)$, 
then also $\nu_n(\psi)\to\nu(\psi)$.
\end{lemma}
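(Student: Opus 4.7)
The plan is to approximate $\psi$ by bounded continuous functions in the $\nu_n$-integral, thus reducing the claim to the assumed $BC$-convergence. Set $\psi^{(M)}:=(\psi\wedge M)\vee(-M)$, which lies in $BC((W^{1,p}_0(\O)+u)\times BC(S)^\ast_+\times\O)$ as a composition of the weakly continuous $\psi$ with the Lipschitz truncation $t\mapsto(t\wedge M)\vee(-M)$. By hypothesis $\nu_n(\psi^{(M)})\to\nu(\psi^{(M)})$, so with $|\psi-\psi^{(M)}|\le|\psi|\one_{\{|\psi|>M\}}$ the decomposition
\begin{equation*}
|\nu_n(\psi)-\nu(\psi)|\le\nu_n(|\psi-\psi^{(M)}|)+|\nu_n(\psi^{(M)})-\nu(\psi^{(M)})|+\nu(|\psi-\psi^{(M)}|)
\end{equation*}
reduces everything to showing $\sup_n\nu_n(|\psi|\one_{\{|\psi|>M\}})\to 0$ as $M\to\infty$. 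The corresponding bound for $\nu$ follows by lower semicontinuity ($|\psi|\one_{\{|\psi|>M\}}$ is LSC, being the product of a continuous nonnegative function with the indicator of the open set $\{|\psi|>M\}$), combined with the fact that every $\nu\in\cM$ is weakly approximable by the defining net $\{\nu_{\widetilde\L,\e}\}$.

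For the uniform tail estimate, I invoke the growth condition \eqref{E:PhiLippsi}. For any $\delta>0$ and parameters $A,B$, if $\norm{\nabla v}_p^p\le A$ and $\mu(g_\L)\le B$, then $|\psi|\le\norm{\eta}_\infty e^{c_\psi A}(\delta B+C(\delta))$, so choosing $M$ larger than this forces $\{|\psi|>M\}\subset\{\norm{\nabla v}_p^p>A\}\cup\{\mu(g_\L)>B\}$ and hence
\begin{equation*}
\nu_n(|\psi|\one_{\{|\psi|>M\}})\le\nu_n(|\psi|\one_{\{\norm{\nabla v}_p^p>A\}})+\nu_n(|\psi|\one_{\{\mu(g_\L)>B\}}).
\end{equation*}
In the first summand, I substitute the growth bound and apply Cauchy--Schwarz on $\mu_{\e_n,u}$ to separate $\one_{\{\norm{\nabla\Pie(X)}_p^p>A\}}e^{c_\psi\norm{\nabla\Pie(X)}_p^p}$ (which only depends on $X$ through $\Pie(X)$) from the factor $\int_\O\mu_{\widetilde\L_n}(\delta g_\L+C(\delta)\mid\tau_{\lfloor x/\e_n\rfloor}(X))d\l(x)$. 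For $n$ large enough that $2c_\psi<\tilde c\e_n^{-d}/2$, Lemma~\ref{L:mueu}(a) provides the super-exponential tail $\mu_{\e_n,u}(\norm{\nabla\Pie}_p^p>s)\le e^{-\tilde c s\e_n^{-d}}$, which dominates $e^{2c_\psi s}$ and makes the first Cauchy--Schwarz factor tend to $0$ as $A\to\infty$ uniformly in $n$. The complementary factor is uniformly controlled by Jensen for the conditional expectation followed by the DLR identity \eqref{E:DLR}--\eqref{E:DLRtau}, which converts $\mu_{\widetilde\L_n}(g_\L^2\mid\tau_{\lfloor x/\e_n\rfloor}(\cdot))$ into an unconditional gradient-energy moment bounded via Lemma~\ref{L:mueu}(a). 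The $\{\mu(g_\L)>B\}$ tail is handled symmetrically, via Markov on $\mu(g_\L)$ combined with DLR to bound the relevant moments unconditionally.

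The principal obstacle is the coupling of the two factors in \eqref{E:PhiLippsi} through the same configuration $X$: the exponential $e^{c_\psi\norm{\nabla v}_p^p}$ enters via $\Pie(X)$ while $\mu(\delta g_\L+C(\delta))$ enters via the conditioning $\mu_{\widetilde\L_n}(\cdot\mid\tau_{\lfloor x/\e_n\rfloor}(X))$. The Cauchy--Schwarz decoupling works precisely because the super-exponential tail from Exponential Tightness (Lemma~\ref{L:mueu}(a)) beats the exponential growth factor once $\e_n$ is small, and because DLR turns conditional integrals against $\mu_{\e_n,u}$ into translated global ones. Having established the uniform tail bound for the net $\{\nu_{\widetilde\L,\e}\}$, the LSC argument above propagates it to every $\nu_n,\nu\in\cM$, completing the extension of the $BC$-convergence to all $\psi$ satisfying \eqref{E:PhiLippsi}.
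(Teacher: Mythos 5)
Your overall strategy is the paper's: truncate $\psi$ to reduce to a uniform tail bound, and control the tail using the growth condition, exponential tightness, and the DLR identity. Your treatment of the $\{\norm{\nabla v}_p^p>A\}$ piece is a genuinely different (and arguably cleaner) route than the paper's: instead of the pointwise decomposition that the paper uses to make the indicator $\1_{n<\norm{\nabla\Pie(X)}_p^p\le 2n}$ measurable outside $\tau_{\lfloor x/\e\rfloor}(\widetilde\L\cup\L)$ so that DLR applies, you decouple the two $X$-dependent factors by Cauchy--Schwarz. This works, but you should note that the second Cauchy--Schwarz factor is \emph{not} uniformly bounded: Jensen plus DLR reduces it to $\int\int_{\O}\mu_{\e,u}\bigl(g_{\tau_{\lfloor x/\e\rfloor}\L}^2\bigr)d\l(x)$, and the only bound available from Lemma~\ref{L:mueu}(a) is via $\sum_i a_i^2\le(\sum_i a_i)^2$, which yields $O(\e^{-d})$. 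This is harmless only because the first factor decays like $e^{-\tilde c A\e^{-d}/4}$; that cancellation should be stated, since it is the reason the estimate survives.

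The genuine gap is the $\{\mu(g_\L)>B\}$ piece. ``Handled symmetrically, via Markov on $\mu(g_\L)$ combined with DLR to bound the relevant moments'' does not close. Markov gives $\nu(\1_{\{G>B\}})\le\nu(G)/B$ with $G=\mu_{\widetilde\L}(g_\L\mid\tau_{\lfloor x/\e\rfloor}(X))$, but $|\psi|$ is unbounded on $\{G>B\}$, so you need $\nu(|\psi|\1_{\{G>B\}})$, and any symmetric Cauchy--Schwarz or Markov argument then requires a second (or higher) moment of $G$ uniformly in $\e$. As noted above, those moments are only $O(\e^{-d})$ from the available estimates, and---unlike the first piece---there is no super-exponentially small companion factor here: the factor $\bigl(\int e^{2c_\psi\norm{\nabla\Pie(X)}_p^p}d\mu_{\e,u}\bigr)^{1/2}$ is merely bounded, not small. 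The step that rescues this piece is the one quantifier you set up but never deploy: the growth condition holds for \emph{every} $\delta$. Restricting to $\norm{\nabla v}_p^p\le A$ (the complement of the first piece), one has
\begin{equation*}
\nu\bigl(|\psi|\1_{\{\norm{\nabla v}_p^p\le A\}}\1_{\{G>B\}}\bigr)\le \norm{\eta}_\infty e^{c_\psi A}\,\nu\bigl((\delta G+C(\delta))\1_{\{G>B\}}\bigr)\le \norm{\eta}_\infty e^{c_\psi A}\Bigl(\delta+\tfrac{C(\delta)}{B}\Bigr)\nu(G),
\end{equation*}
and $\nu(G)\le\norm{\eta}_\infty|\L|\widetilde C$ uniformly by DLR and the \emph{first}-moment bound of Lemma~\ref{L:mueu}(a). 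Choosing first $A$, then $\delta=\delta(A)$ small, then $B\ge C(\delta)/\delta$, and finally $M$, makes this piece small using only first moments. This is exactly how the paper handles its ``$\norm{\nabla v}_p^p\le k$'' term: $(\psi-\ell)_+\1_{\{\norm{\nabla v}_p^p\le k\}}\le\delta\,\eta(x)e^{c_\psi k}\mu(g_\L)$ once $\ell>\ell(k,\delta)$. Without invoking the arbitrariness of $\delta$ at this point, your argument does not go through.
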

\begin{remark}
Notice that, due to Poincar\'e inequality, the set of test functions above is the same if we replace
$\norm{\nabla v}_{p}^p$ in \eqref{E:PhiLippsi}   by $\norm{v}_{1,p}^p$.
\end{remark}

\begin{proof}
Since
$\psi\min(1, \frac{\ell}{\abs{\psi}})\in BC((W^{1,p}_0(\O)+u)\times BC(S)^\ast_+\times \O)$,
 it is enough to prove that 
\begin{equation}
\nu((\psi-\ell)_+)\le \o(\ell) \text{ with } \lim_{\ell\to\infty}\o(\ell)=0
\end{equation}
uniformly in $\nu\in\cM$. For any $k>0$, let us decompose $\psi$ as follows,
\begin{equation}
\psi= \psi \1_{\norm{\nabla v}_{p}^p\le k}+\psi \1_{\norm{\nabla v}_{p}^p>k}.
\end{equation}

For the first term, we notice that
\begin{equation}
\abs{\psi(v,\mu,x)}\1_{\norm{\nabla v}_{p}^p\le k}(v)\le \eta(x) e^{c_{\psi} k} \bigl(C(\delta) +\delta  \mu (g_{\L})\bigr),
\end{equation}
implies
\begin{equation}
(\psi(v,\mu,x)-\ell)_+\1_{\norm{\nabla v}_{p}^p\le k}(v)\le \delta \eta(x) e^{c_{\psi} k}  \mu (g_{\L})
\end{equation}
once $\ell > \ell(k,\delta)=\norm{\eta}e^{c_{\psi} k} C(\delta)$ and thus,  for $\nu=\nu_{\widetilde\Lambda,\e}$
with $\dist(\supp \,\eta,\p\O)>\e\,\diam \widetilde\L$, we get
\begin{multline}
\nu\bigl((\psi-\ell)_+\1_{\norm{\nabla v}_{p}^p\le k}\bigr)\le \delta e^{c_{\psi} k}\int\bigl( \int\eta(x)   \mu_{\widetilde\Lambda}(g_{\L}|\tau_{\lfloor{x}/{\e}\rfloor}(X))\,d \l(x) \bigr)d\mu_{\e,u}(X)=\\
= \delta e^{c_{\psi} k} \int\eta(x)  \bigl(\int  g_{\tau_{\lfloor{x}/{\e}\rfloor}\L}(X)d\mu_{\e,u}(X)\bigr)\, d\l(x)
\le\delta e^{c_{\psi} k}\norm{\eta} \abs{\L} \widetilde C
\end{multline}
with $\widetilde C$ denoting the bound on $\int \norm{\nabla\Pie X}_{p}^p d\mu_{\e,u}(X)$
(cf. Lemma~\ref{L:mueu} (a)).
Here, we first  used that 
$\int   \mu_{\widetilde\Lambda}(g_{\L}|\tau_{\lfloor{x}/{\e}\rfloor}(X))d\mu_{\e,u}(X)=\mu_{\e,u}(g_{\L}\circ \tau_{\lfloor{x}/{\e}\rfloor})$
similarly as in  \eqref{E:DLRtau} and then bounded $
\int\eta(x)   g_{\tau_{\lfloor{x}/{\e}\rfloor}\L}(X)\, d\l(x)\le \norm{\eta} \abs{\L} \e^d \sum_{i\in\Oe}\abs{\nabla X(i)}^p$.

Further,  we will  slice  
$\psi \1_{\norm{\nabla v}_{p}^p>k}= \sum_{j\ge 0}\psi_{2^j k}$
with the functions $\psi_n= \psi \1_{  n<\norm{\nabla v}_{p}^p\le 2 n}$ satisfying the bound (with $\delta=1$),
\begin{equation}
\abs{\psi_n(v,\mu,x)}\le \eta(x) e^{2 c_{\psi} n} \bigl( \mu (g_{\L})+C(1) \bigr)\1_{ n<\norm{\nabla v}_{p}^p\le 2 n}(v).
\end{equation}
In preparation for the evaluation of $\nu\bigl(\psi_n)=\nu_{\widetilde\Lambda,\e}(\psi_n)$, we use 
$(\widetilde \L\cup \L)_1$ to denote the $1$-neighbourhood of $\widetilde \L\cup \L$ and for any
$X,Y\in (\R^m)^{\Z^d}$ we bound 
\begin{multline}
\label{E:trl}
g_{\L}(Y)
\1_{ n<\norm{\nabla\Pie(X)}_{p}^p\le 2n}(X)\le \\
\!\!\!\!\!\!\!\le (g_{\L}(Y)-g_{(\widetilde \L\cup \L)_1}(\tau_{\lfloor{x}/{\e}\rfloor}(X))_+\1_{ n<\norm{\nabla\Pie(X)}_{p}^p}(X)
+
g_{(\widetilde \L\cup \L)_1}(\tau_{\lfloor{x}/{\e}\rfloor}(X))
\1_{ n<\norm{\nabla\Pie(X)}_{p}^p\le 2n}(X) \le\\
\le
 g_{\L}(Y) \1_{ n/2<\norm{\nabla\Pie(X)}_{p}^p-\e^d g_{\tau_{\lfloor{x}/{\e}\rfloor}(\widetilde\L\cup\L)_1}(X)}(X)
+ (g_{\L}(Y)-\e^{-d} n/2)_+
+\\+
g_{(\widetilde \L\cup \L)_1}(\tau_{\lfloor{x}/{\e}\rfloor}(X))
\1_{ n<\norm{\nabla\Pie(X)}_{p}^p\le 2n}(X) .
\end{multline}
Notice that
$\norm{\nabla\Pie(X)}_{p}^p-\e^d g_{\tau_{\lfloor{x}/{\e}\rfloor}(\widetilde\L\cup\L)_1}(X)=
\e^d g_{\Oe\setminus \tau_{\lfloor{x}/{\e}\rfloor}(\widetilde\L\cup\L)_1}(X)$ and thus the right hand side above
actually does not depend on $X(i), i\in \tau_{\lfloor{x}/{\e}\rfloor}(\widetilde\L\cup\L)$.
As a result, using \eqref{E:DLR} we get
\begin{multline}
\int   \mu_{\widetilde\Lambda}(g_{\L}|\tau_{\lfloor{x}/{\e}\rfloor}(X))\1_{ n<\norm{\nabla\Pie(X)}_{p}^p\le 2n}d\mu_{\e,u}(X) 
\le 
\mu_{\e,u}\bigl((g_{\L}\circ \tau_{\lfloor{x}/{\e}\rfloor}) \1_{ n/2<\norm{\nabla\Pie(\cdot)}_{p}^p}\bigr)
+ \\+
\mu_{\e,u}\bigl((g_{\L}\circ \tau_{\lfloor{x}/{\e}\rfloor}-\e^{-d} n/2)_+\bigr)
+
\int  g_{(\widetilde \L\cup \L)_1}(\tau_{\lfloor{x}/{\e}\rfloor}(X))
\1_{ n<\norm{\nabla\Pie(X)}_{p}^p\le 2n}(X)d\mu_{\e,u}(X)
\end{multline}
Hence, once $\e\, \diam\widetilde\L < \dist(\supp \, \eta, \p\O)$, we bound $\nu\bigl(\psi_n)$, up to a prefactor $e^{2c_{\psi_n} n}\norm{\eta}$, by
\begin{equation}
\int\bigl( C(1)+  2n\abs{\widetilde\L\cup\L} 
\1_{n<\norm{\nabla\Pie(X)}_{p}^p}(X)+
2\abs{\L} 
\norm{\nabla\Pie(X)}_{p}^p\1_{\norm{\nabla\Pie(X)}_p^p+\ge n/2}
(X) \bigr)d\mu_{\e,u}(X).
 \end{equation}
Thus, for $n$ large,
 \begin{equation}
\nu\bigl(\psi_n)\le e^{2c_{\psi_n} n}\norm{\eta}
\bigl( (C(1)+  \abs{\widetilde\L\cup\L} 2n ) e^{-\tilde c n \e^{-d}}+
\bar C\abs{\L} 
 e^{-\tilde c n/2 \e^{-d}} \bigr)
\end{equation}
implying the claim.
\end{proof}

To show that the family $\nu_{x,v}$ of Young measures has support in the set of Gibbs measures,
observe that $\mu\in\mathcal G$ iff
\begin{equation}
\mu(\mu_{\Lambda}(f | \cdot))-\mu(f)= \varphi_{\L,f}(\mu)=0
\end{equation}
for any finite $\L$ and any cylinder function $f$ living in $\L$. Noticing that
$\varphi_{\L,f}$ is a bounded test function,
$\varphi_{\L,f}\in BC((W^{1,p}_0(\O)+u)\times BC(S)^\ast_+\times \O)$
we just have to verify that 
\begin{multline}
\nu_{\widetilde\Lambda,\e}(\varphi_{\L,f})=\int\int_{\O} \varphi_{\L,f}(\mu_{\widetilde\Lambda}(\cdot|\tau_{\lfloor{x}/{\e}\rfloor}(X))d\l(x)d\mu_{\e,u}(X)=\\=
\int\int_{\O}\bigl(\mu_{\widetilde\Lambda}(\mu_{\Lambda}(f | \cdot)|\tau_{\lfloor{x}/{\e}\rfloor}(X))-
\mu_{\widetilde\Lambda}(f|\tau_{\lfloor{x}/{\e}\rfloor}(X))\bigr)d\l(x)d\mu_{\e,u}(X)=0
\end{multline}
since $\mu_{\widetilde\Lambda}(\mu_{\Lambda}(f | \cdot)|\tau_{\lfloor{x}/{\e}\rfloor}(X))=
\mu_{\widetilde\Lambda}(f|\tau_{\lfloor{x}/{\e}\rfloor}(X))$ once $\L\subset\widetilde\Lambda$.

 To show that $\int \E_{\mu}(\nabla X(0))\,d \nu_{x,v}(\mu)=\nabla v(x)$, we use the test function
\begin{equation}
\varphi(v,\mu,x)= \eta(x) \E_{\mu}(\nabla X(0)-\nabla v(x))
\end{equation}
with $\eta\in C^1(\overline \O)^{m\times d}$ and observe that
\begin{multline}
\Bigl| \int \int \varphi(\Pie(X),\mu_{\widetilde\Lambda}(\cdot|\tau_{\lfloor{x}/{\e}\rfloor}(X)),x)d\l(x)d\mu_{\e,u}(X)\Bigr|\le\\
\le \Bigl| \int \eta(x) \int \bigl[ \nabla X(\lfloor{x}/{\e}\rfloor)-\nabla(\Pie(X))(x) \bigr] d\mu_{\e,u}(X) d\l(x)\Bigr|
\le\\
\le \text{const} \, \e \,\norm{\nabla\eta}_{\infty}\mu_{\e,u}(\norm{\nabla \Pie(X)}_1) .
\end{multline}
The last estimate is valid for the linear interpolation; for a more general case, $\norm{\nabla\eta}_{\infty}$
will be replaced by $\norm{\nabla^2\eta}_{\infty}$.

Finally, to show that $\nu_{v,x}(\abs{\nabla X(0)}^p) <\infty$, we can use the function
$\varphi_k(v,\mu,x)= \mu(\min(k, \abs{\nabla X(0)}^p))$ as a test function
yielding the bound uniform in $k$.
\qed

\appendix
\section{Technical Lemmas}

We begin with a technical Lemma that will be useful on several occasions.

\begin{lemma}\hfill
\label{L:tech}

Let $a>0$ and $\Lambda\subset \Oe$ be connected (when viewed as a subgraph of $\Z^d$ with the set of edges consisting of all pairs of nearest neighbours $(i,j), \abs{i-j}=1$). Then:

 a) We have
\begin{equation}
\label{E:bint}
\int \1_{\{j\},y}(X)\exp\bigl(-a\sum_{i\in \L}
\abs{\nabla X(i)}^p\bigr)\prod_{i\in\L} dX(i)\le  \o(m) \bigl(a^{-m/p}c(p,m)\bigr)^{\abs{\L}-1},
\end{equation}
where $j\in\L$ and  $\1_{\{j\},y}$ is the indicator of the set $\{X\in(\R^m)^{\L}\mid \abs{X(j)-y}<1\}$
and $\o(m)$ is  the volume of the unit ball in $\R^m$.

b) 
For any  $v\in L^r(\O,\R^m)$ and  $\e$ sufficiently small,
\begin{equation}
\int_{ {\mathcal N}_{\L,r}(v,\kappa)} \!\!\!\!\exp\bigl(-a\sum_{i\in \L} \abs{\nabla X(i)}^p \bigr)\prod_{i\in\L} dX(i)
\le  \vartheta \abs{\L}^{1+\frac{m}{d}} \bigl(a^{-m/p}c(p,m)\bigr)^{\abs{\L}-1},
\end{equation}
where $\vartheta=
\o(m )\kappa^m $ and 
$c(p,m)=\int_{\R^m} \exp\bigl(-\abs{\xi}^p\bigr)d\xi $.
\end{lemma}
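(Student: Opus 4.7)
The plan is to prove both parts by a spanning-tree change of variables reducing the multi-site integral to one constrained integration over a single root site and $|\L|-1$ free integrations over edge increments.

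For (a), I would fix a spanning tree $T$ of the subgraph induced by $\L$ in $\Z^d$, rooted at $j$; since $\L$ is connected, $T$ has exactly $|\L|-1$ nearest-neighbour edges. Each tree edge of the form $(i, i+\be_k)$ with both endpoints in $\L$ contributes a term $|X(i+\be_k)-X(i)|^p$ to $\sum_{i\in\L}|\nabla X(i)|^p$ via the $k$-th component of the gradient at site $i$ (or at $i+\be_k-\be_k$, whichever endpoint lies in $\L$; at least one always does). Since every term is nonnegative, dropping the non-tree contributions only enlarges the integrand. Now I order the sites of $\L$ by graph distance from $j$ in $T$ and change variables $\xi_i = X(i)-X(\mathrm{parent}_T(i))$ for each non-root $i$. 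The resulting map $(X(i))_{i\neq j}\mapsto(\xi_i)_{i\neq j}$ is lower triangular with unit diagonal, so its Jacobian equals $1$. The integral factorises: the integration of $X(j)$ over the unit ball gives $\omega(m)$, and each of the $|\L|-1$ integrations $\int_{\R^m}\exp(-a|\xi_i|^p)d\xi_i$ produces $a^{-m/p}c(p,m)$ by the scaling $\eta=a^{1/p}\xi_i$. Multiplying these yields the bound in (a).

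For (b), I would use a pigeonhole step to reduce to (a). For any $X\in{\mathcal N}_{\L,r}(v,\kappa)$ the defining inequality gives
\begin{equation*}
\min_{j\in\L}|X(j)-X_{v,\e}(j)|^r \le \frac{1}{|\L|}\sum_{i\in\L}|X(i)-X_{v,\e}(i)|^r < \kappa^r|\L|^{r/d},
\end{equation*}
hence
\begin{equation*}
{\mathcal N}_{\L,r}(v,\kappa)\subset \bigcup_{j\in\L}\{X\in(\R^m)^{\L}:|X(j)-X_{v,\e}(j)|<\kappa|\L|^{1/d}\}.
\end{equation*}
The integral over the union is bounded by the sum over $j$ of integrals over the single-site pieces, and each such piece is precisely what is treated in (a) with the unit ball around $y = X_{v,\e}(j)$ replaced by a ball of radius $\kappa|\L|^{1/d}$. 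This swaps the factor $\omega(m)$ for $\omega(m)(\kappa|\L|^{1/d})^m = \omega(m)\kappa^m|\L|^{m/d}$; summing the resulting bound over the $|\L|$ choices of $j$ gives the prefactor $\omega(m)\kappa^m|\L|^{1+m/d}$, which matches the claimed $\vartheta|\L|^{1+m/d}$ with $\vartheta=\omega(m)\kappa^m$.

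The main technical point is the spanning-tree change of variables in (a): one must enumerate the sites in breadth-first order from $j$ so that the Jacobian is triangular, and one must match each tree edge with an actually present term in $\sum_{i\in\L}|\nabla X(i)|^p$ — which works because $|\nabla X(i)|^p = \sum_{k=1}^d |X(i+\be_k)-X(i)|^p$ accounts for every coordinate-direction increment out of $i\in\L$. The pigeonhole step in (b) is routine once the form of ${\mathcal N}_{\L,r}(v,\kappa)$ is unpacked, and the hypothesis that $\e$ is sufficiently small enters only to guarantee that $X_{v,\e}(i)$ is well defined via the mean-value formula for every $i\in\L$.
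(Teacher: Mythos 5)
Your proof is correct and follows essentially the same route as the paper: a spanning-tree bound $\sum_{i\in\L}\abs{\nabla X(i)}^p\ge\sum_{\{i,i'\}\in t}\abs{X(i)-X(i')}^p$ with the tree rooted at the constrained site, followed by the triangular change of variables to edge increments for (a), and for (b) the pigeonhole inclusion ${\mathcal N}_{\L,r}(v,\kappa)\subset\bigcup_{j\in\L}\{\abs{X(j)-X_{v,\e}(j)}<\kappa\abs{\L}^{1/d}\}$ combined with a union bound. The only difference is that you spell out the pigeonhole step and the Jacobian computation that the paper leaves implicit.
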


\begin{proof}
a) Consider a tree $t$ rooted at  the site $j$. Then
\begin{equation}
\label{E:t}
-\sum_{i\in \L} \abs{\nabla  X(i)}^p\le - \sum_{\{i,j\}\in t}  |X(i)-X(j)|^p.
\end{equation}
and thus
\begin{equation}
\int \1_{\{j\},y}(X) \exp\bigl(-a\sum_{i\in \L} \abs{\nabla X(i)}^p\bigr)\prod_{i\in \L}dX(i)\le \o(m).
\Bigl(a^{-m/p}c(p,m)\Bigr)^{\abs{\L}-1}
\end{equation}

b) The set $\L$ is connected and can be covered by a spanning tree $t$ implying \eqref{E:t}. 
Further, we clearly have  ${\mathcal N}_{\L,r}(v,\kappa)\subset  \cup_{j\in \Oe} {\mathcal N}^{(j)}_{r}(v,\kappa)$ with
\begin{equation}
{\mathcal N}^{(j)}_{r}(v,\kappa)= \bigl\{X: \L \to \R^m\bigl|\abs{X(j)-X_{v,\e}(j)}< \kappa {\abs{\L}}^{\frac1d}\bigr\}.
\end{equation}
Considering the tree $t$ as rooted at $j$, we get
\begin{equation}
\int_{ {\mathcal N}^{(j)}_{r}(u,\kappa)} \exp\bigl(-a\sum_{i\in \L} \abs{\nabla X(i)}^p\bigr)\prod_{i\in \L}dX(i)\le 
\o(m )\kappa^m{\abs{\L}^{\frac{m}d}} \Bigl(a^{-m/p}c(p,m)\Bigr)^{\abs{\L}-1}
\end{equation}
implying the claim with the help of a).
\end{proof}

\begin{remark}
An immediate consequence of Lemma~\ref{L:tech} a), under the assumption (A1),  is the bound
\begin{equation}
\label{E:ZoeNr}
Z_{\Oe}(\mathcal N_{\Oe,r}(u,\kappa))\le   \vartheta \abs{\Oe}^{1+\frac{m}{d}}\bigl(c^{-m/p}c(p,m)\bigr)^{\abs{\Oe}-1}.
\end{equation}
\end{remark}

\begin{proofsect}{Proof of Exponential Tightness Lemma}

\noindent
The bound $H_{\Oe}(X)> K \abs{\Omega_{\e}}$ and (A1) implies that
\begin{equation}
- H_{\Oe}(X )< -\tfrac{1}2K  \abs{\Oe}-\tfrac12  c\!\!\!\sum_{\substack {j\in \Oe\\ \tau_j(A)\subset\Oe }} |\nabla X(j)|^p
\end{equation}
for all $X\in {\mathcal M}_K$. 
Hence,
\begin{multline}
\label{E:Nrkup}
\!\!\!\!\!\!Z_{\Oe}({\mathcal M}_K\cap {\mathcal N}_{\Oe,R_0,\infty}(X_{u,\e})) \le \\   \le
\exp\bigl(-\tfrac{1}2K \abs{\Oe}\bigr)\int_{ {\mathcal N}_{\Oe,R_0,\infty}(X_{u,\e})} \!\!\!\!\!\!\!\!\exp\bigl(-\tfrac12  c\!\!\!\sum_{\substack {j\in \Oe\\ \tau_j(A)\subset\Oe }} |\nabla X(j)|^p\bigr)
\prod_{i\in\Oe}dX(i) 
\end{multline}
Consider the set ${\O}^{0}_\e=\{j\in\Z^d|\tau_j(A)\subset \Oe\}$.
For sufficiently small $\e$, the set ${\O}^{0}_\e$ is connected, $\abs{\Oe\setminus   {\O}^{0}_\e}\le   C_{\partial} R_0 \e^{-d+1}\abs{\partial\O}$ and ${\O}^{0}_\e \cap S_{R_0}(\Oe)\neq\emptyset$. 
Thus, $X\in  {\mathcal N}_{\Oe,R_0,\infty}(X_{u,\e})$ implies  that $\abs{X(j)-X_{u,\e}(j)}\le 1$ for every $j\in{\O}^{0}_\e \cap S_{R_0}(\Oe)$.
Hence, using Lemma~\ref{L:tech} a) to bound the integral on the right hand side, 
we get
\begin{multline}
\label{E:Nrkuup}
\!\!\!\!\int_{ {\mathcal N}_{\Oe,R_0,\infty}(X_{u,\e})} \!\!\!\!\!\!\!\!\!\!\!\exp\bigl(-\tfrac12  c\!\!\!\sum_{\substack {j\in \Oe\\ \tau_j(A)\subset\Oe }} |\nabla X(j)|^p\bigr)
\prod_{i\in\Oe}dX(i)\le\\
 \le  \o(m) \Bigl[\bigl(\tfrac{2}{ c}\bigr)^{m/p}c(p,m)\Bigr]^{\abs{\Oe}}\bigl(\tfrac{\e_0}{\e}\bigr)^{m(1+\frac{d}r)C_{\partial} R_0 \e^{-d+1}\abs{\partial\O}}
\end{multline}
once $\e\le \e_0$ with 
$\e_0=\e_0(\kappa,r)=\bigl(\kappa\bigl(\frac{2}{c}\bigr)^{-1/p}c(p,m)^{-1/m}\bigr)^{\frac{r}{d+r}}\abs{\O}^{\frac1d}$. Thus
\begin{equation}
\label{E:ZMN<}
Z_{\Oe}({\mathcal M}_K\cap {\mathcal N}_{\Oe,r}(u,\kappa)) \le 
 \exp\bigl(-\tfrac{1}2K  \abs{\Oe}\bigr)   \Bigl[2\bigl(\tfrac{4}{c}\bigr)^{m/p}c(p,m)\Bigr]^{\abs{\Oe}}.
\end{equation}
Here,  we also used  the bound $\exp\bigl(-\tfrac{1}2K \abs{\Oe}\bigr) \bigl(\tfrac{\e_0}{\e}\bigr)^{m(1+\frac{d}r)C_{\partial} R_0 \e^{-d+1}\abs{\partial\O}}\le 1$
valid whenever 
\begin{equation}
 K>\e \abs{\log(\tfrac{\e}{\e_0})}  \,m(1+\tfrac{d}r)4C_{\partial}R_0 \frac{\abs{\partial\O}}{\abs{\O}}.
\end{equation}

For the second bound we derive 
\begin{multline}
\label{E:Nrkup}
 \!\!\!\!\!\!\!\!Z_{\Oe}({\mathcal M}_K\cap {\mathcal N}_{\Oe,r}(u,\kappa)) \le 
\exp\bigl(-\tfrac{1}2K \abs{\Oe}\bigr)\int_{ {\mathcal N}_{\Oe,r}(u,\kappa)} \!\!\!\!\!\!\!\!\exp\bigl(-\tfrac12  c\!\!\!\sum_{\substack {j\in \Oe\\ \tau_j(A)\subset\Oe }} |\nabla X(j)|^p\bigr)
\prod_{i\in\Oe}dX(i)
\end{multline}
in a similar way, using the fact that  $X\in  {\mathcal N}_{\Oe,r}(u,\kappa)$ implies  that $\abs{X(i)-\frac1\e u(\e i)}\le \kappa\abs{\Oe}^{\frac1r+\frac1d}$ for every $i\in \Oe$ 
and applying  Lemma~\ref{L:tech} b) to bound the integral on the right hand side.
We also assumed that $\e$ is sufficiently small so that $\vartheta \abs{\Oe}^{1+\frac{m}d} \le 2^{\abs{\Oe}}$.
\qed
\end{proofsect}

\begin{proofsect}{Proof of Interpolation Lemma}

Fixing parameters $\eta>0$ and  $N\in\N$, we   slice the strip $(\partial_\eta \O)_\e$
into strips of width $\frac{\eta}{\e N}$.
In particular, we pick up $R=R(\e)$ so that   $R>2R_0$ and $\eta=  N  \e R$
and partition the set   ${\mathcal N}_{\Oe,r}(u,\kappa)=\cup_{\ell=1}^{N-1}{\widehat{\mathcal N}}^{(\ell)}_{r}(u,\kappa)$
with 
\begin{equation}
{\widehat{\mathcal N}}^{(k)}_{r}(u,\kappa)=\{X\in {\mathcal N}_{\Oe,r}(u,\kappa)| \sum_{j\in S_{\!R_0 + k R}\setminus S_{\!R_0 +(k-1) R}} U(X_{\tau_j(A)})\le \tfrac1{N-1} H_{\Oe}(X)\},
\end{equation}
where $S_{\!R_0 +k R}=S_{\!R_0 +k R}(\Oe)$ is the strip  $S_{\!R_0 +k R}=\{i\in\Oe\colon \e^{-1}\dist(\e i,\O^{\co})\le R_0 +k R\}$.
To see that, indeed, ${\mathcal N}_{\Oe,r}(u,\kappa)\subset \cup_{k=1}^{N-1}{\widehat{\mathcal N}}^{(k)}_{r}(u,\kappa)$,
it suffices to show that any $X$ from  the set ${\mathcal N}_{\Oe,r}(u,\kappa)\setminus \cup_{k=1}^{N-1}{\widehat{\mathcal N}}^{(k)}_{r}(u,\kappa)$
would necessarily satisfy $H_{S_{NR}}(X)>H_{\Oe}(X)$ which is contradiction due to nonnegativity of $U(X_A)$. 
Further, introducing the function 
\begin{equation}
\varTheta_{k}(i)=\min(1, R^{-1} \bigl(\e^{-1}\dist(\e i,\O^{\co})- R_0-(k-1) R)_+)
\end{equation}
 on $\Oe$ interpolating between $1$ on $\Oe\setminus S_{\!R_0+k R}$  and $0$ on
$S_{\!R_0+(k-1) R}$,
we define, for any $X\in ({\R}^m)^{\Oe}$ and $Y\in ({\R}^m)^{S_{NR}}$ the function $T_{k}(X,Y)\in ({\R}^m)^{\Oe}$ by
\begin{equation}
\label{E:T}
T_{k}(X,Y)(i)=\varTheta_{k}( i) X(i)+(1-\varTheta_{k}( i))Y(i).
\end{equation}
It is interpolating between $T_{k}(X,Y)(i)=X(i)$ on $\Oe \setminus S_{\!R_0+k R}$ and $T_{k}(X,Y)(i)=Y(i)$ on $S_{\!R_0+(k-1) R}$.

Let $Z\in  {\mathcal N}_{\Oe,r}(u,\kappa)$ and consider 
 $X\in {\widehat{\mathcal N}}^{(k)}_{r}(u,\kappa)$ and $Y\in {\mathcal N}_{S_{ N R},\infty}(Z)$. For the completeness of the argument, let us first show that $T_{k}(X,Y)\in {\mathcal N}_{\Oe,r}(u,3\kappa)\cap{\mathcal N}_{\Oe,R_0,\infty}(Z)$ for each $1\le k\le N-1$. 
Indeed, extending $Y$ to $\Oe$ by taking  $Y(i)=Z(i) $  on $\Oe\setminus S_{ N R}$ and using that   $X,Z\in{\mathcal N}_{\Oe,r}(u,\kappa)$, we get
\begin{multline}
\label{E:TZ}
\norm{T_k(X,Y))-X_{u,\e}}_{\ell^r(\Oe)}\le \\ \le
\norm{\varTheta_k(X-X_{u,\e})}_{\ell^r(\Oe)}+\norm{(1-\varTheta_k)(Y-Z)}_{\ell^r(\Oe)}+\norm{(1-\varTheta_k)(Z-X_{u,\e})}_{\ell^r(\Oe)} \le\\
\le
\Bigl(2\kappa +\abs{S_{ N  R}}^{\frac1r}\abs{\Oe}^{-\frac1r-\frac1d}\Bigr) \abs{\Oe}^{\frac1r+\frac1d}
\le   3 \kappa \abs{\Oe}^{\frac1r+\frac1d}.
\end{multline}
Here, we first bounded
\begin{equation}
 \abs{S_{ N  R}}^{\frac1r}\abs{\Oe}^{-\frac1r-\frac1d}\le
  \bigl(C_{\p} N R\abs{\p\Omega}\e^{-(d-1)}\bigr)^{\frac1r}\abs{\O}^{-\frac1r-\frac1d}\e^{1+\frac{d}r}=
\tfrac{\e}{\abs{\O}^{1/d}}  \eta^{\frac1r} \bigl(\tfrac{C_{\p}\abs{\p\Omega}}{\abs{\O}}\bigr)^{\frac1r}
\end{equation}
with  $\eta=  N  R \e$  and  assumed that $\e$ is sufficiently small to assure that, with fixed $\eta$, the right hand side above does not exceed $\kappa$.  

The main idea of the proof is to introduce a new integral quantity that serves as an upper bound to the left hand side of 
\eqref{E:Zfr<Zb}  and, in the same time, as a lower bound of its right hand side.
To be more precise, for verification of an inequality of the form \eqref{E:Zfr<Zb} with the integral on the left hand side restricted to  ${\widehat{\mathcal N}}^{(\ell)}_{r}(u,\kappa)$,
we ``double the variables'' and introduce the following integral over $(\R^m)^{\Oe}\times (\R^m)^{S_{\!R_0+k R}}$,
\begin{equation}
\label{E:I}
 \!\!\!\!\!\!\!\!I_{k}=\int_{{\widehat{\mathcal N}}^{(\ell)}_{r}(u,\kappa)\times  {\mathcal N}_{S_{R_0+k R},\infty}({Z} )}  \mspace{-30mu}\exp\bigl(-H_{\Oe}(T_{k}(X,Y) )-a \!\!\!\sum_{j\in S_{\!R_0+k R}}  \abs{\nabla X(j)}^p\bigr) 
\prod_{i\in\Oe}dX(i)\prod_{j\in S_{\!R_0+k R}}dY(j).
\end{equation}

First, let us attend to the \emph {lower bound} on $I_{k}$. For the terms $U(T_{k}(X,Y)_{\tau_j(A)})$ contributing to $H_{\Oe}(T_{k}(X,Y))$ we consider 3 cases:

(i) If $\tau_j(A)\cap S_{\!R_0+k R}=\emptyset$, then $U(T_{k}(X,Y)_{\tau_j(A)})=U(X_{\tau_j(A)})$.

(ii) If $\tau_j(A)\cap (S_{\!R_0+k R}\setminus S_{\!R_0+(k-1) R}) \neq\emptyset$, then, by assumption (A2),
\begin{equation}
U(T_{k}(X,Y)_{\tau_j(A)})\le C\bigl(1+U(X_{\tau_j(A)}) +U(Y_{\tau_j(A)})+ \sum_{ i\in\tau_j(A)} \abs{\varTheta_{k}(i)-\varTheta_{k}(j)}^r \abs{X(i)-Y(i)}^r\bigr).
\end{equation}

In this inequality we used the fact that 
\begin{equation}
T_{k}(X,Y)(i)=\varTheta_{k}(j)X(i)+(1-\varTheta_{k}(j))Y(i) +(\varTheta_{k}(i)-\varTheta_{k}(j)) (X(i)-Y(i)).
\end{equation}

Again, for $i\notin S_{ N R}$, we have $Y(i)=Z(i)$.

(iii) If $\tau_j(A)\subset S_{\!R_0+(k-1) R}$, then $U(T_{k}(X,Y)_{\tau_j(A)})=U(Y_{\tau_j(A)})$.

The terms $ a  \abs{\nabla X(j)}^p$ in the integrand of $I_{k}$ are, according to (A1),   bounded as
\begin{equation}
\label{E:||U}
 a  \abs{\nabla X(j)}^p\le  U(X_{\tau_j(A)}) 
\end{equation}
for any $j\in S_{\!R_0+k R}$ and $a \le  c$.
As a result, using also the assumption (A2) in the form \eqref{E:A2short}, 
\begin{equation}
U(Y_{\tau_j(A)})\le  C\bigl(1+ U({Z}_{\tau_j(A)})  +\sum_{i\in  \tau_j(A)} \abs{Y(i) - {Z}(i)}^r\bigr)\le  C  (1+ U({Z}_{\tau_j(A)})+R_0^d ),
\end{equation}
we are getting the following bounds for the terms
\begin{equation}
L_j= U(T_{k}(X,Y)_{\tau_j(A)})+a\1_{j\in S_{\!R_0+k R}} \abs{\nabla X(j)}^p
\end{equation}
in the 3 cases as above:  

(i) $L_j\le U(X_{\tau_j(A)})$,

(ii) $L_j\le    \bigl( 1+C\bigr)U(X_{\tau_j(A)}) +C^2 (1+ U({Z}_{\tau_j(A)})+R_0^d )+  C+$ 
\smallskip

\rightline{$  +C\sum_{ i\in\tau_j(A)} \abs{\varTheta_{k}(i)-\varTheta_{k}(j)}^r \abs{X(i)-Y(i)}^r$,}

and

(iii)  $L_j\le  U(X_{\tau_j(A)})+C  (1+ U({Z}_{\tau_j(A)})+R_0^d ) $.

Then, for any $X\in  {\widehat{\mathcal N}}^{(\ell)}_{r}(u,\kappa)\setminus {\mathcal M}_K$ and with $a= c $ (and assuming that $C\ge 1$), we have
\begin{multline}
H_{\Oe}(T_{k}(X,Y) )+ c\sum_{i\in S_{\!R_0+k R}}   \abs{\nabla X(i)}^p\le  H_{\Oe}(X) + C \tfrac{K}N \abs{\O}\e^{-d} +\\
+ C^2\sum_{\substack{j\in S_{ N R} \\ \tau_j(A)\subset\Oe}}   U({Z}_{\tau_j(A)}) +
 \tilde c  \abs{S_{ N  R}} +   C R_0^{d+r} \bigl(\tfrac{ N}{\eta}\bigr)^r (2\kappa)^r \abs{\O}^{1+\frac{r}d}\e^{-d}
\end{multline}
with $\tilde c=C^2(1+C+R_0^d)+C$.
Here, for the last term, we used  the following estimate  with the  $\sum_j$  taken over all $j: \tau_j(A)\cap (S_{\!R_0+k R}\setminus S_{\!R_0+(k-1) R}) \neq\emptyset $, 
 \begin{multline}
 \label{E:(T-T)(X-Y)}
\sum_{j}\sum_{i\in\tau_j(A)} \abs{\varTheta_{k}(i)-\varTheta_{k}(j)}^r \abs{X(i)-Y(i)}^r\le\\  \!\!\!\!\!\!\!\!
\le  \bigl(\tfrac{R_0}{R}\bigr)^r R_0^d  \sum_{i\in S_{2R_0+k R}\setminus S_{(k-1) R} }\abs{X(i)-Y(i)}^r
\le   R_0^{d+r}R^{-r}  \bigl( \norm{X-Z}_{\ell^r(\Oe)} +\norm{Y-Z}_{\ell^r(S_{ N  R})}   \bigr)^r\le\\  \le
R_0^{d+r}R^{-r} \bigl(2\kappa \abs{\Oe}^{\frac1r+\frac1d}\bigr)^r=R_0^{d+r} \bigl(\tfrac{ N}{\eta}\bigr)^r \e^{-d}(2\kappa)^r \abs{\O}^{1+\frac{r}d}.
\end{multline}
To get this, we first used that $\abs{\varTheta_{k}(i)-\varTheta_{k}(j)}\le \tfrac{R_0}R$ for any $i\in\tau_j(A)$ since $\diam A<R_0$
and then  applied the bound from \eqref{E:TZ} assuming that     $\e$ is sufficiently small (in dependence on $\kappa$ and $\eta$).
As a result,
we get 
\begin{equation}
 \!\!\!\!\!\!\!\! H_{\Oe}(T_{k}(X,Y) )+ c\sum_{i\in S_{\!R_0+k R}}   \abs{\nabla X(i)}^p\le  H_{\Oe}(X) +\tilde C\bigl( (\tfrac{K}N +
\eta + (\tfrac{N \kappa}\eta )^r) \e^{-d} +\sum_{\substack{j\in S_{ N R} \\ \tau_j(A)\subset\Oe}}   U({Z}_{\tau_j(A)})\bigr)
\end{equation}
with 
\begin{equation}
\tilde C = \max\bigl(C \abs{\O},\tilde c  C_{\p} \abs{\p \O}, R_0^{d+r}  2^r \abs{\O}^{1+\frac{r}d},C^2\bigr).
\end{equation} 
Thus, finally,
\begin{equation}
\label{e:ubIj}
 \!\!\!\!\!\!\!\!
Z_{\Oe}( {\widehat{\mathcal N}}^{(\ell)}_{r}(u,\kappa)\setminus {\mathcal M}_K)
\exp\bigl\{- \tilde C\bigl((\tfrac{K}N +\eta + (\tfrac{N\kappa}\eta)^r)) \e^{-d}+ 
\sum_{\substack{j\in S_{ N R} \\ \tau_j(A)\subset\Oe}}   U({Z}_{\tau_j(A)})\bigr)\bigr\}\o(m)^{\abs{S_{R_0+k R}}} \le I_{k}.
\end{equation}

For the \emph{upper bound} of the integral $I_{k}$, we use the substitution defined as identity  on $(\R^m)^{\Oe\setminus S_{\!R_0+k R}}$, and, on the remaining $(\R^m)^{S_{\!R_0+k R}}\times  (\R^m)^{S_{\!R_0+k R}}$, pointwise by the mapping $\Phi_i:\R^m\times \R^m\to \R^m\times \R^m$ introduced by
\begin{equation}
\Phi_i(\xi,\zeta)=(\varTheta_{k}(i)\xi+(1-\varTheta_{k}(i))\zeta,\zeta)  \text{ for }  i\in S_{\!R_0+k R}\setminus S_{\!R_0+(k-1/2) R}
\end{equation}
and by
\begin{equation}
\Phi_i(\xi,\zeta)=(\varTheta_{k}(i)\xi+(1-\varTheta_{k}(i))\zeta,\xi)  \text{ for }  i\in S_{\!R_0+(k-1/2) R}.
\end{equation}
Notice that    
\begin{equation}
\abs{\det D\Phi_i}^{-1} \le 2^m \1_{i\in S_{\!R_0+k R}\setminus S_{\!R_0+(k-1) R}}+ \1_{i\in  S_{\!R_0+(k-1) R}}.
\end{equation}

Since $T_{k}(X,Y)\in {\mathcal N}_{\Oe,r}(u,2\kappa)\cap{\mathcal N}_{\Oe,R_0,\infty}(Z)$, we have
\begin{multline}
\label{E:lbIj}
\!\!\!\!I_{k}\le Z_{\Oe} ({\mathcal N}_{\Oe,r}(u,2\kappa)\cap{\mathcal N}_{\Oe,R_0,\infty}(Z))  \omega(m)^{\abs{S_{\!R_0+k R}\setminus S_{\!R_0+(k-1/2)R}}}
2^{m\abs{S_{\!R_0+k R}\setminus S_{\!R_0+(k-1) R}}}\times \\ \times
\bigl(2c^{-\frac{m}p} c(p,m)\bigr)^{\abs{S_{\!R_0+(k-1/2)R}}}.
\end{multline} 
Here, the last factor  arises as  the bound on the integral
\begin{equation}
\int_{ {\mathcal N}_{S_{\!R_0+(k-1/2)R},r}(u,\kappa)}   \exp\bigl\{-c \sum_{i\in S_{\!R_0+(k-1/2)R}} \abs{\nabla  X(i)}^p\bigr\}
\prod_{i\in S_{\!R_0+(k- 1/2)R}}dX(i)
\end{equation}
according to Lemma~\ref{L:tech} a)
with
\begin{equation}
\vartheta\abs{S_{\!R_0+(k-\frac12)R}}^{1+m/d}\bigl(c^{-\frac{m}p} c(p,m)\bigr)^{\abs{S_{\!R_0+(k-1/2)R}}-1}\le 
\bigl(2c^{-\frac{m}p} c(p,m)\bigr)^{\abs{S_{\!R_0+(k-1/2)R}}}
\end{equation}
valid for $\e$ sufficiently small (estimating $\abs{S_{\!R_0+(k-1/2)R}}\ge \abs{S_{\!R_0+\frac12 R}}\sim \e^{-d+1} R=\frac{\eta}N\e^{-d}$.

Combining \eqref{e:ubIj} with \eqref{E:lbIj} for each of $N-1$ integrals over ${\widehat{\mathcal N}}^{(\ell)}_{r}(u,\kappa)\setminus {\mathcal M}_K$,
we get 
\begin{multline}
\label{E:equivfin}
Z_{\Oe} ({\mathcal N}_{\Oe,r}(u,\kappa)\setminus {\mathcal M}_K)\le\\ \le  \exp\bigl\{\mathcal C \bigl((\tfrac{K}N +\eta + (\tfrac{N\kappa}\eta)^r)) \e^{-d}+ \sum_{\substack{j\in S_{ N R} \\ \tau_j(A)\subset\Oe}}  U({Z}_{\tau_j(A)})\bigr) \bigr\} \
Z_{\Oe} ({\mathcal N}_{\Oe,r}(u,2\kappa)\cap{\mathcal N}_{\Oe,R_0,\infty}(Z)) .
\end{multline}
Here we bounded the prefactor $N-1$ (the number of terms with $k=1,\dots,N-1$) combined with the factors in  \eqref{E:lbIj} by
\begin{equation}
 N\bigl(\omega(m) 2^{m+1}c^{-\frac{m}p} c(p,m)\bigr)^{\abs{S_{ N R}}}\le e^{\frac23\mathcal C\eta \e^{-d}}
\end{equation}
with a constant $\mathcal C=3\max\bigl(\tilde C  ,  C_{\p} \abs{\p\O} \log(\omega(m) 2^{m+1}c^{-\frac{m}p} c(p,m)) \bigr)$.
We used  the bound
$\abs{S_{ N R}}\le C_{\p} \abs{\p\O}\e^{-d+1}  N R= C_{\p} \abs{\p\O}\e^{-d} \eta$ and bounded,  for  $\e$ sufficiently small, the term $N=e^{(\e^d\log N)\e^{-d}}$
by $\exp\{\frac13\mathcal C \eta \e^{-d}\}$.

According to  Lemma~\ref{L:tight}, we have
$Z_{\Oe}({\mathcal M}_K\cap {\mathcal N}_{\Oe,r}(u,\kappa))   \le
e^{-\frac12 K \abs{\Oe}} D^{\abs{\Oe}}  $.
Hence,
\begin{equation}
\label{E:K12}
Z_{\Oe} ({\mathcal N}_{\Oe,r}(u,\kappa)\cap {\mathcal M}_K)\le    \frac12 Z_{\Oe} ({\mathcal N}_{\Oe,r}(u,\kappa))
\end{equation}
once we choose $K=  \log D + \e^d \frac{\log 2}{\abs{\O}} + F_{\kappa,\e}(u)$.
Then, multiplying the right hand side in  \eqref{E:equivfin} by 2, we can replace $Z_{\Oe} ({\mathcal N}_{\Oe,r}(u,\kappa)\setminus {\mathcal M}_K)$  by $Z_{\Oe}({\mathcal N}_{\Oe,r}(u,\kappa))$, yielding the claim with a slight increase of $\mathcal C$ and with 
$b=1+\log D$ for sufficiently small $\e$.
 \qed
\end{proofsect} 

\noindent
\textbf{Acknowledgement}
The research of R.K. was supported  by the grants GA\v CR 201-09-1931, 201/12/2613, and MSM 0021620845.
Both authors were also supported by FG \emph{Analysis and Stochastics in Complex Physical Systems} and Hausdorff Research Institute for Mathematics.

\end{document}